\definecolor{weborange}{rgb}{.8,.3,.3}
\definecolor{webblue}{rgb}{0,0,.8}
\definecolor{internallinkcolor}{rgb}{0,.5,0}
\definecolor{externallinkcolor}{rgb}{0,0,.5}
\newcommand{\remove}[1]{}
\newcommand{\Draft}[1]{\ifdefined\IsDraft\texttt{ #1} \fi}
\newcommand{\TLLNCS}[2]{\ifdefined\IsLLNCS#1\else #2 \fi}
\newcommand{\authnote}[2]{{\bf [{\color{red} #1's Note:} {\color{blue} #2}]}}
\newcommand{\authnote}[2]{}
\newcommand{\stkout}[1]{\ifmmode\text{\sout{\ensuremath{#1}}}\else\sout{#1}\fi}
\newcommand{\deleted}[2]{{\textbf{Deleted:}~{\color{red} \stkout{#2} }}}
\newcommand{\deleted}[2]{}
\newcommand{\sdotfill}{\textcolor[rgb]{0.8,0.8,0.8}{\dotfill}} 
\newenvironment{protocol}{\begin{proto}}{\vspace{-\topsep}\sdotfill\end{proto}}
\newenvironment{algorithm}{\begin{algo}}{\vspace{-\topsep}\sdotfill\end{algo}}
\newcommand{\Ensuremath}[1]{\ensuremath{#1}\xspace}
\newcommand{\MathAlg}[1]{\mathsf{#1}}
\newcommand{\MathAlgX}[1]{\Ensuremath{\MathAlg{#1}}}
\newcommand \mycaption {\small }     
\newcommand \mylabel {}
\newenvironment{nfbox}[3]{
	\renewcommand \mycaption {#1}
	\renewcommand \mylabel {#2}
	\begin{center}\small
		\begin{tabular}{|ll|}
			\hline
			\hspace{.3ex}
			\begin{minipage}{.97\linewidth}
				\vspace{0.5ex}
				#3}
			{\smallskip
				\captionof{figure}{\mycaption}
				\label{\mylabel}
			\end{minipage}
			&\hspace{.3ex} \\
			\hline
		\end{tabular}
	\end{center}    
}
\newcommand{\aka} {also known as,\xspace}
\newcommand{\resp}{resp.,\xspace}
\newcommand{\ie}  {i.e.,\xspace}
\newcommand{\eg}  {e.g.,\xspace}
\newcommand{\wrt} {with respect to\xspace}
\newcommand{\wlg} {without loss of generality\xspace}
\newcommand{\Wlg} {Without loss of generality\xspace}
\newcommand{\abs}[1]{\left\lvert #1 \right\rvert}
\newcommand{\ceil}[1]{\left\lceil #1 \right\rceil}
\newcommand{\set}[1]{\ens{#1}}
\newcommand{\sset}[1]{\{#1\}}
\newcommand{\floor}[1]{\left \lfloor#1 \right \rfloor}
\newcommand{\assign}{\ensuremath{\mathrel{\vcenter{\baselineskip0.5ex \lineskiplimit0pt \hbox{\scriptsize.}\hbox{\scriptsize.}}}=}}
\newcommand{\iith}[1] {$#1$'th\xspace}
\newcommand{\jth}           {\iith{j}}
\newcommand{\rth}           {\iith{r}}
\newcommand{\half}{\tfrac{1}{2}}
\newcommand{\N}{{\mathbb{N}}}
\newcommand{\F}{{\cal F}}
\newcommand{\zo}{\{0,1\}}
\newcommand{\zn}{{\zo^n}}
\newcommand{\eps}{\varepsilon}
\newcommand{\la}{\gets}
\newcommand{\nxtmsg}{\alpha}
\newcommand{\nextmsg}{\mathsf{next\mhyphen msg}}
\newcommand{\outputf}{\mathsf{output}}
\newcommand{\stat}{\mathsf{stat}}
\newcommand{\wit}{\mathsf{wit}}
\newcommand{\negl}{\operatorname{neg}}
\newcommand{\Supp}{\operatorname{Supp}}
\newcommand{\maj}{\operatorname*{maj}}
\newcommand{\argmax}{\operatorname*{argmax}}
\newcommand{\argmin}{\operatorname*{argmin}}
\newcommand{\sk}{\mathit{sk}}
\newcommand{\vk}{\mathit{vk}}
\newcommand{\halt}{\mathsf{Halt}}
\newcommand{\class}[1]{\mathrm{#1}}
\newcommand{\NP}{\class{NP}}
\renewcommand{\cref}{\Cref}
	\newaliascnt{claiml}{theorem}
	\newtheorem{claiml}[claiml]{Claim}
	\renewenvironment{claim}{\begin{claiml}}{\end{claiml}}
	\newtheorem{theorem}{Theorem}[section]
	\newaliascnt{lemma}{theorem}
	\newtheorem{lemma}[lemma]{Lemma}
	\newaliascnt{claim}{theorem}
	\newtheorem{claim}[claim]{Claim}
	\newaliascnt{corollary}{theorem}
	\newaliascnt{proposition}{theorem}
	\newtheorem{proposition}[proposition]{Proposition}
	\newaliascnt{conjecture}{theorem}
	\newtheorem{conjecture}[conjecture]{Conjecture}
	\newaliascnt{adversary}{theorem}
	\newaliascnt{definition}{theorem}
	\newtheorem{definition}[definition]{Definition}
	\newaliascnt{remark}{theorem}
	\newtheorem{remark}[remark]{Remark}
	\newaliascnt{example}{theorem}
\crefname{lemma}{Lemma}{Lemmas}
\crefname{figure}{Figure}{Figures}
\crefname{claim}{Claim}{Claims}
\crefname{corollary}{Corollary}{Corollaries}
\crefname{proposition}{Proposition}{Propositions}
\crefname{conjecture}{Conjecture}{Conjectures}
\crefname{definition}{Definition}{Definitions}
\crefname{remark}{Remark}{Remarks}
\crefname{exmaple}{Example}{Examples}
\newaliascnt{construction}{theorem}
\crefname{construction}{Construction}{Constructions}
\newaliascnt{fact}{theorem}
\crefname{fact}{Fact}{Facts}
\newaliascnt{notation}{theorem}
\crefname{notation}{Notation}{Notation}
\crefname{equation}{Equation}{Equations}
\newaliascnt{proto}{theorem}
\newtheorem{proto}[proto]{Protocol}
\crefname{proto}{protocol}{protocols}
\newaliascnt{algo}{theorem}
\newtheorem{algo}[algo]{Algorithm}
\crefname{algo}{algorithm}{algorithms}
\newaliascnt{expr}{theorem}
\newtheorem{expr}[expr]{Experiment}
\crefname{experiment}{experiment}{experiments}
\newaliascnt{assum}{theorem}
\newtheorem{assum}[assum]{Assumption}
\crefname{assumption}{assumption}{assumptions}
\newaliascnt{scen}{theorem}
\newtheorem{scen}[scen]{Scenario}
\crefname{scenario}{scenario}{scenarios}
\def\FullBox{$\Box$}
\def\qed{\ifmmode\qquad\FullBox\else{\unskip\nobreak\hfil
		\penalty50\hskip1em\null\nobreak\hfil\FullBox
		\parfillskip=0pt\finalhyphendemerits=0\endgraf}\fi}
\def\qedsketch{\ifmmode\Box\else{\unskip\nobreak\hfil
		\penalty50\hskip1em\null\nobreak\hfil$\Box$
		\parfillskip=0pt\finalhyphendemerits=0\endgraf}\fi}
\newcommand{\pr}[1]{\Pr\left[#1\right]}
\newcommand{\ppr}[2]{\Pr_{#1}\left[#2\right]}
\newcommand{\Ac}{\MathAlgX{A}}
\newcommand{\Pc}{\mathsf{P}}
\newcommand{\Bc}{\mathsf{B}}
\newcommand{\Cc}{\mathsf{C}}
\newcommand{\ens}[1]{\left\{#1\right\}}
\newcommand{\size}[1]{\left|#1\right|}
\newcommand{\ssize}[1]{|#1|}
\def\state{{\sf state}}
\newcommand{\ppt}{{\sc ppt}\xspace}
\newcommand{\cH}{{\cal{H}}}
\newcommand{\cA}{\mathcal{A}}
\newcommand{\cB}{\mathcal{B}}
\newcommand{\cC}{\mathcal{C}}
\newcommand{\cE}{\mathcal{E}}
\newcommand{\cF}{\mathcal{F}}
\newcommand{\cL}{\mathcal{L}}
\newcommand{\cP}{\mathcal{P}}
\newcommand{\cR}{\mathcal{R}}
\newcommand{\cS}{\mathcal{S}}
\newcommand{\cT}{\mathcal{T}}
\newcommand{\cV}{\mathcal{V}}
\newcommand{\cW}{\mathcal{W}}
\newcommand{\oS}{{\overline{\cS}}}
\newcommand{\st}{\text{ s.t.\ }}
\newcommand{\Halt}{{\cal{S}}}
\newcommand{\cs}{{\cal{S}}}
\newcommand{\xs}{x^\ast}
\newcommand{\Tableofcontents}{
	\ifdefined\IsLLNCS \else
	\thispagestyle{empty}
	\pagenumbering{gobble}
	\clearpage
	\setcounter{tocdepth}{2}
	\tableofcontents
	\thispagestyle{empty}
	\clearpage
	\pagenumbering{arabic}
	\fi
}
\newcommand{\vect}[1]{{ \boldsymbol{#1}}}
\newcommand{\vf}{\vect{f}}
\newcommand{\vm}{\vect{m}}
\newcommand{\vr}{\vect{r}}
\newcommand{\vv}{\vect{v}}
\newcommand{\vx}{\vect{x}}
\newcommand{\party}[1]{%
	\IfEqCase{#1}{%
		{1}{\Ac}
		{2}{\Bc}
		{3}{\Cc}
	}[\PackageError{\party}{Undefined option to party: #1}{}]%
}%
\newcommand{\Adv}{\Ac} 
\newcommand{\secParam}{\kappa}
\newcommand{\Party}{\MathAlgX{P}}
\newcommand{\Sim}{\MathAlgX{S}}
\mathchardef\mhyphen="2D
\newcommand{\view}{\mbox{\footnotesize {\sc view}}}
\renewcommand{\pr}[1]{\ppr{}{#1}}
\renewcommand{\sb}{\set{\Sigma \cup \bot}}
\newcommand{\sn}{\Sigma^n}
\newcommand{\rn}{\cR^n}
\newcommand{\sbn}{\sb^n}
\def\eps{\varepsilon}
\newcommand{\rbot}{\set{\cR \cup \set{\bot}}}
\newcommand{\vstar}{\mathbf{v}^{\star}}
\newcommand{\qand}{\quad \land \quad}	
\newcommand{\adaptive}{{\MathAlgX{adaptive}\xspace}}
\newcommand{\nonadaptive}{{\MathAlgX{non\mhyphen adaptive}\xspace}}
\newcommand{\BA}{\MathAlgX{BA}}
\newcommand{\Agr}{\MathAlgX{agreement}}
\newcommand{\Vld}{\MathAlgX{validity}}
\renewcommand{\Halt}{\MathAlgX{halting}}
\renewcommand{\o}{b}
\newcommand{\oo}{\overline{\o}}
\newcommand{\omin}{\set{\o,\bot}}
\newcommand{\vz}{\vv_0}
\newcommand{\vo}{\vv_1}
\newcommand{\oP}{{\overline{\cP}}}
\newcommand{\bns}{\mathbf{D}_{n,\sigma}}
\newcommand{\err}{\mathsf{err}}
\newcommand{\fnk}{\floor{(n-k)/2}}
\newcommand{\cnk}{\ceil{(n-k)/2}}
\newcommand{\fnt}{\floor{(n-t)/2}}
\newcommand{\cnt}{\ceil{(n-t)/2}}
\newcommand{\mon}{\set{\bot^n}}
\newcommand{\FirstErr}{2^{t-n}}
\title{On the Round Complexity of Randomized Byzantine Agreement\thanks{A preliminary version of this work appeared in DISC'19~\cite{CHMOS19}.}
\Draft{\\{\small \sc Working Draft: Please Do Not Distribute}}
}
\author{Ran Cohen\thanks{Efi Arazi School of Computer Science, Reichman University. E-mail: \texttt{cohenran@idc.ac.il}. Research supported in part by NSF grant no.\ 2055568. Some of this work was done while the author was a post-doc at Tel Aviv University, supported by ERC starting grant 638121.}
\and Iftach Haitner\thanks{School of Computer Science, Tel Aviv University. E-mail: \texttt{iftachh@taux.tau.ac.il}. Member of the Check Point Institute for Information Security. Research supported by Israel Science Foundation grant 666/19.}~\footnotemark[6] %
\and Nikolaos Makriyannis\thanks{Fireblocks. E-mail: \texttt{n.makriyannis@gmail.com}. This work was done while the author was a post-doc at Technion, supported by ERC advanced grant 742754.}~\footnotemark[6]
\and Matan Orland\thanks{School of Computer Science, Tel Aviv University. E-mail: \texttt{matanorland@mail.tau.ac.il}.}
\footnote{Research supported by ERC starting grant 638121.}
\and Alex Samorodnitsky\thanks{School of Engineering and Computer Science, The Hebrew University of Jerusalem.\newline{} E-mail: \texttt{salex@cs.huji.ac.il}. Research partially supported by ISF grant 1724/15.}
}
\begin{document}

\sloppy
\maketitle
\begin{abstract}
We prove lower bounds on the round complexity of \emph{randomized} Byzantine agreement (BA) protocols, bounding the halting probability of such protocols after one and two rounds. In particular, we prove that:

\begin{enumerate}
\item 
BA protocols resilient against $n/3$ [\resp $n/4$] corruptions terminate (under attack) at the end of the first round with probability at most $o(1)$ [\resp $1/2+ o(1)$].

\item
BA protocols resilient against a fraction of corruptions greater than $1/4$ terminate at the end of the second round with probability at most $1-\Theta(1)$.

\item 
For a large class of protocols (including all BA protocols used in practice) and under a plausible combinatorial conjecture,
BA protocols resilient against a fraction of corruptions greater than $1/3$ [\resp $1/4$] terminate at the end of the second round with probability at most $o(1)$ [\resp $1/2 + o(1)$].
\end{enumerate}
The above bounds hold even when the parties use a trusted setup phase, \eg a public-key infrastructure (PKI).

The third bound essentially matches the recent protocol of \citeauthor{Micali17} (ITCS'17) that tolerates up to $n/3$ corruptions and terminates at the end of the third round with constant probability.
\end{abstract}

\vfill
\noindent\textbf{Keywords: Byzantine agreement; lower bound; round complexity.}

\Tableofcontents

\section{Introduction}\label{sec:intro}

Byzantine agreement (BA)~\cite{PSL80,LSP82} is one of the most important problems in theoretical computer science. In a BA protocol, a set of $n$ parties wish to jointly agree on one of the honest parties' input bits.
The protocol is \emph{$t$-resilient} if no set of $t$ corrupted parties can collude and prevent the honest parties from completing this task.
In the closely related problem of \emph{broadcast}, all honest parties must agree on the message sent by a (potentially corrupted) sender.
Byzantine agreement and broadcast are fundamental building blocks in distributed computing and cryptography, with applications in fault-tolerant distributed systems~\cite{CL99,KBCCEGGRWWWZ00}, secure multiparty computation~\cite{Yao82,GMW87,BGW88,CCD88}, and more recently, blockchain protocols~\cite{SM16,GHMVZ17,PS18}.

In this work, we consider the \emph{synchronous} communication model, where the protocol proceeds in rounds. It is well known that in the plain model, without any trusted setup assumptions, BA and broadcast can be solved if and only if $t<n/3$~\cite{PSL80,LSP82,FLM85,GM93}. Assuming the existence of digital signatures and a public-key infrastructure (PKI), BA can be solved in the honest-majority setting $t<n/2$, and broadcast under any number of corruptions $t<n$~\cite{DS83}. Information-theoretic variants that remain secure against computationally unbounded adversaries exist using information-theoretic pseudo-signatures~\cite{PW92}.

An important aspect of BA and broadcast protocols is their \emph{round complexity}. For deterministic $t$-resilient protocols, $t+1$ rounds are known to be sufficient~\cite{DS83,GM93} and necessary~\cite{FL82,DS83}. The breakthrough results of \citet{Ben-Or83} and \citet{Rabin83} showed that this limitation can be circumvented using randomization. In particular, \citet{Rabin83} used \emph{random beacons} (common random coins that are secret-shared among the parties in a trusted setup phase) to construct a BA protocol resilient to $t<n/4$ corruptions. The failure probability of Rabin's protocol after $r$ rounds is $2^{-r}$, and the \emph{expected} number of rounds to reach agreement is constant. This line of research culminated with the work of \citet{FM97} who showed how to compute the common coins from scratch, yielding expected-constant-round BA protocol in the plain model, resilient to $t<n/3$ corruptions. \citet{KK06} gave an analogue result in the PKI-model for the honest-majority case. Recent results used trusted setup and cryptographic assumptions to establish a surprisingly small expected round complexity, namely $9$ for $t<n/3$~\cite{Micali17} and $10$ for $t<n/2$~\cite{MV17,ADDNR19}.

The expected-constant-round protocols mentioned above are guaranteed to terminate (with negligible error probability) within a poly-logarithmic number of rounds.
The lower bounds on the guaranteed termination from~\cite{FL82,DS83} were generalized by \cite{CMS89,KY84}, showing that any randomized $r$-round protocol must fail with probability at least $(c\cdot r)^{-r}$ for some constant $c$; in particular, randomized agreement with sub-constant failure probability cannot be achieved in \emph{strictly} constant rounds. However, to date there is no lower bound on the \emph{expected} round complexity of randomized BA.

In this work, we tackle this question and show new lower bounds for randomized BA. To make the discussion more informative, we consider a more explicit definition that bounds the halting probability within a specific number of rounds. A lower bound based on such a definition readily implies a lower bound on the expected round complexity of the BA protocol.

\subsection{The Model}\label{sec:intro:model}
We start with describing in more details the model in which our lower bounds are given. In the BA protocols considered in this work, the parties are communicating over a synchronous network of private and authenticated channels. Each party starts the protocol with an input bit and upon completion decides on an output bit. The protocol is $t$-resilient if when facing $t$ colluding parties that attack the protocol it holds that: (1) all honest parties agree on the same output bit (\emph{agreement}), (2) if all honest parties start with the same input bit, then this is the common output bit (\emph{validity}), and (3) the protocol eventually terminates (\emph{termination}). The protocols might have a \emph{trusted setup phase}: a trusted external party samples correlated values (or receives a value from each party) and distributes them among the parties. A setup phase is known to be essential for tolerating $t\geq n/3$ corruptions, and seems to be crucial for highly efficient protocols such as \cite{Micali17,SM16,MV17,ADDNR19,ACDNPRS19}. The trusted setup phase is typically implemented using (heavy) secure multiparty computation \cite{BCGTV15,BGG18}, distributed key generation \cite{Pedersen91,GJKR99}, via a public-key infrastructure (see \cite{BCG21} for a discussion on different flavors of PKI), or with a random oracle (that can be used to model proofs of work)~\cite{PS17}.

\paragraph{Locally consistent adversaries.}
The attacks presented in this paper require very limited capabilities from the corrupted parties (a limitation that makes our bounds stronger). Specifically, a corrupted party can deviate from the protocol only by: (1) prematurely aborting, and (2) altering (possibly a multiple number of times) its input bit and/or incoming messages from corrupted parties (see \cref{sec:OurResults:Adv} for a precise definition). We emphasize that corrupted parties sample their random coins honestly (and use the same coins for all messages sent). In addition, they do not lie about messages received from honest parties.

\paragraph{Public-randomness protocols.}
In many randomized protocols, including all those used in practice, cryptography is merely used to provide \emph{message authentication}---preventing a party from lying about the messages it received---and \emph{verifiable randomness}---forcing the parties to toss their coins correctly. The description of such protocols can be greatly simplified if only security against locally consistent adversaries is required (in which corrupted parties do not lie about their coin tosses and their incoming messages from honest parties). This motivates the definition of \emph{public-randomness} protocols, where each party publishes its local coin tosses for each round (the party's first message also contains its setup parameter, if such exists).
Although our attacks apply to arbitrary BA protocols, we show even stronger lower bounds for public-randomness protocols.

We illustrate the simplicity of the model by considering the BA protocol of \citet{Micali17}. In this protocol, the cryptographic tools, digital signatures and verifiable random functions (VRFs),\footnote{A pseudorandom function that provides a non-interactively verifiable proof for the correctness of its output.} are used to allow the parties elect leaders and toss coins with probability $2/3$ as follows: each party $\Party_i$ in round $r$ evaluates the VRF on the pair $(i,r)$ and multicasts the result. The leader is set to be the party with the smallest VRF value, and the coin is set to be the least-significant bit of this value. Since these values are uniformly distributed $\secParam$-bit strings ($\secParam$ is the security parameter), and there are at least $2n/3$ honest parties, the success probability is $2/3$. (Indeed, with probability $1/3$, the leader is corrupted, and can send its value only to a subset of the parties, creating disagreement.)

When considering locally consistent adversaries, \citeauthor{Micali17}'s protocol can be significantly simplified by having each party randomly sample and multicast a uniformly distributed $\secParam$-bit string (cryptographic tools and setup phase are no longer needed). Corrupted parties can still send their values to a subset of honest parties as before, but they cannot send different random values to different honest parties.

A similar simplification applies to other BA protocols that are based on leader election and coin tosses such as \cite{FM97,FG03,KK06} (private channels are used for a leader-election sub-protocol), \cite{MV17,ADDNR19} (cryptography is used for coin-tossing and message-authentication), and \cite{SM16,ACDNPRS19} (cryptography is used to elect a small committee per round).\footnote{Unlike the aforementioned protocols that use ``simple'' preprocess and ``light-weight'' cryptographic tools, the protocol of \citet{Rabin83} uses a heavy, per execution, setup phase (consisting of Shamir sharing of a random coin for every potential round) that we do not know how to cast as a public-randomness protocol.}

\begin{proposition}[Malicious security to locally consistent public-randomness protocol, informal]\label{prop:mal_to_local}
Each of the BA protocols of \cite{FM97,FG03,KK06,Micali17,SM16,MV17,ADDNR19,ACDNPRS19} induces a public-randomness BA protocol secure against locally consistent adversaries, with the same parameters.
\end{proposition}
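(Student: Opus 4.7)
The plan is to establish the proposition by a protocol-by-protocol transformation: for each listed BA protocol, replace every cryptographic primitive by its natural public-randomness analog, and then argue that, against a locally consistent adversary, the transformed protocol enjoys the same agreement, validity, round complexity, and resilience threshold as the original. I would begin by classifying the cryptographic uses in the listed protocols into three buckets: (i) digital signatures used in \cite{KK06,MV17,ADDNR19} to prevent a corrupted relay from lying about the messages it forwards, (ii) VRFs or cryptographic coin-tossing used in \cite{Micali17,MV17,ADDNR19} to force parties to sample coins correctly and to produce jointly unbiased coins, and (iii) VRF-based self-selection used in \cite{Micali17,SM16,ACDNPRS19} for leader or committee election. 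The leader-election sub-protocols of \cite{FM97,FG03,KK06} only rely on private channels, which are available in our model verbatim.

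For each bucket I would describe the replacement. Signatures are simply removed: local consistency already forbids a corrupted party from lying about messages it received from honest parties, so the guarantee provided by signatures in the original analysis is obtained for free in our model. VRF evaluations and cryptographic coin-tosses are replaced by having each party sample and multicast, in the clear, a uniformly random $\kappa$-bit string in the round in which the VRF output would have been revealed; any setup parameter is absorbed into the party's first-round message, as the public-randomness syntax prescribes. Self-selection rules (``smallest value wins leadership'', ``value below threshold joins the committee'', least-significant-bit coin) are then applied verbatim to these clear strings. Because honest parties sample honestly in both models, the joint distribution of the values produced by the honest parties is identical in the original and the transformed protocol, so any probabilistic calculation in the original analysis transfers unchanged. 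Round complexity and threshold are preserved because the message flow is preserved and the distributions driving the protocol's decisions are unchanged.

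It then remains to check security against a locally consistent adversary. The original analyses rely on three properties of the cryptography: (a) the pseudorandomness/uniformity of VRF outputs, (b) the inability of a corrupted party to send inconsistent VRF outputs or signed messages to different honest recipients, and (c) the inability of a corrupted relay to forge messages attributed to honest parties. Property (a) holds by inspection since honest values are truly uniform strings; property (b) is enforced by the public-randomness syntax, which binds each party to a single published coin sequence per round; property (c) is precisely the definition of local consistency. A corrupted party retains the power it used in the original attack analyses, namely to prematurely abort and to selectively deliver its (now public) coins to different honest subsets, which suffices to reproduce the original worst-case behavior inside the original security proof.

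The main obstacle will be handling those places where the cryptography is used not merely to \emph{commit} a value but to \emph{hide} it from the adversary until a later round, since the public-randomness syntax reveals all local coins immediately. In \cite{Micali17,MV17,ADDNR19,SM16,ACDNPRS19} this is a non-issue: every VRF value is revealed in the same round it is used, so replacing it with a publicly tossed string loses nothing. In the leader-election sub-protocols of \cite{FM97,FG03,KK06} one must verify that the adversary's enhanced early view of the leader's coins does not help, which follows because the original analyses only require that corrupted parties' choices in a given round be independent of the honest leader's coin \emph{of that round}---a property that survives since the round structure still forces adversarial messages of round $r$ to be sent before round-$r$ honest messages are delivered. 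A careful bookkeeping of when each value is revealed, carried out protocol-by-protocol in the appendix, then completes the proof.
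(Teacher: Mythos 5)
The paper does not give a formal proof of \cref{prop:mal_to_local}; it is explicitly labeled ``informal,'' and the supporting argument in the text is a one-protocol sketch (Micali's protocol: drop the VRF, have each party multicast a fresh uniform $\secParam$-bit string) followed by the assertion that a similar simplification applies to the other cited works. Your proposal is a reasonable systematization of that sketch. The three-bucket classification (signatures for relay authentication, VRF/coin-tossing, VRF-based self-selection) and the observation that signatures and VRFs respectively enforce exactly the two things a locally consistent adversary already concedes for free --- not lying about messages received from honest parties, and tossing local coins honestly --- is the right organizing principle and matches the paper's motivating paragraph almost verbatim. Your awareness of the ``hiding versus committing'' obstacle is also correct; the paper flags precisely this difficulty in footnote 5 for excluding Rabin's protocol (per-execution Shamir-shared coins), which does not cast as public-randomness.

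The one step that is off is the closing argument for \cite{FM97,FG03,KK06}. You justify the harmlessness of revealing coins early by saying the ``round structure still forces adversarial messages of round $r$ to be sent before round-$r$ honest messages are delivered.'' That is a \emph{non-rushing} assumption, whereas the proposition claims the transformed protocols retain the same parameters, and the paper's strongest application (\cref{thm:SecondRound:PR}) needs the public-randomness protocol to be secure against \emph{rushing, adaptive} locally consistent adversaries. The right reason early reveal is tolerable is not the delivery order; it is the local-consistency constraint itself: a rushing locally consistent adversary does see the honest parties' round-$r$ coins before speaking, but it cannot condition its own round-$r$ coins on them, because those coins are by definition sampled honestly (possibly in advance). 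Its remaining degrees of freedom --- selective aborts, choosing among inputs and among messages received from other corrupt parties --- are exactly the adversarial behaviors the original analyses already account for. Rewriting that paragraph around the coin-commitment semantics of local consistency, rather than around message-delivery timing, would make the argument correct for the rushing case as well. Finally, note that the ``protocol-by-protocol appendix'' you defer to does not exist in the paper; you would be supplying a level of detail the authors deliberately left informal.
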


\paragraph{A useful abstraction for protocol design.}
To complete the picture, we remark that security against locally consistent adversaries, which may seem somewhat weak at first sight, can be compiled using standard cryptographic techniques into security against arbitrary adversaries. This reduction becomes lossless, efficiency-wise and security-wise, when applied to public-randomness protocols. Thus, building public-randomness protocols secure against locally consistent adversaries is a useful abstraction for protocol designers that want to use what cryptography has to offer, but without being bothered with the technical details.
See more details in \cref{sec:intro:LocalToFull}.

\paragraph{Connection to the full-information model.}
The public-randomness model can be viewed as a restricted form of the \emph{full-information model}~\cite{CC84,BL85,GGL98,BB98,BPV06,GPV06,KKKSS08,Lewko11,KS13,LL13}. In the latter model, the adversary is computationally unbounded and has complete access to all the information in the system, \ie it can listen to all transmitted messages and view the internal states of honest parties (such an adversary is also called \emph{intrusive}~\cite{CC84}). One of the motivations to study full-information protocols is to separate \emph{randomization} from \emph{cryptography} and see to what extent randomization alone can speed up Byzantine agreement. \citet{BB98} showed that any full-information BA protocol tolerating $t=\Theta(n)$ adaptive, fail-stop corruptions (\ie the adversary can dynamically choose which parties to crash) runs for $\tilde \Omega(\sqrt{n})$ rounds. \citet{GPV06} constructed an $O(\log{n})$-round BA protocol tolerating $t=(1/3-\eps)n$ static, malicious corruptions, for an arbitrarily small constant $\eps>0$.

We chose to state our results in the public-randomness model for two reasons. First, our lower bounds readily extend to lower bounds in the full-information model (since we consider weaker adversarial capabilities, \eg all our attacks are efficient). Second, when considering locally consistent adversaries, public-randomness captures essentially what efficient cryptography has to offer. Indeed, all protocol used in practice can be cast as public-randomness protocols tolerating locally consistent adversaries (\cref{prop:mal_to_local}) and every public-randomness protocol secure against locally consistent adversaries can be compiled, using cryptography, to malicious security in the standard model, where security relies on secret coins (see \cref{thm:local_to_malicious} below).

We note that it is known how to compile certain full-information protocols and ``boost'' their security from fail-stop into malicious; however, these compilers capture either deterministic protocols~\cite{Hadzilacos87,Bracha84,NT90} or protocols with a non-uniform source of randomness (namely, an SV-source~\cite{SV84})~\cite{GPV06}. It is unclear whether these compilers can be extended to capture arbitrary protocols (this is in fact stated as an open question in~\cite{Bracha84,GPV06}).
In addition, these compilers are designed to be information theoretic and not rely on cryptography; thus, they do not model highly efficient protocols used in practice.

\subsection{Our Results}\label{sec:intro:ourResult}
We present three lower bounds on the halting probability of randomized BA protocols.
To keep the following introductory discussion simple, we will assume that both validity and agreement properties hold perfectly, without error.
Throughout we consider $t<n/2$ (as otherwise Byzantine agreement cannot be achieved).

\paragraph{First-round halting.}
Our first result bounds the halting probability after a single communication round. This is the simplest case since parties cannot inform each other about inconsistencies they encounter. Indeed, the established lower bound is quite strong, showing an exponentially small bound on the halting probability when $t\geq n/3$, and exponentially close to $1/2$ when $t\geq n/4$.

\begin{theorem}[First-round halting, informal]\label{thm:intro:FirstRound}
Let $\Pi$ be an $n$-party BA protocol and let $\gamma$ denote the halting probability after a single communication round facing a locally consistent, static, adversary corrupting $t$ parties. Then,
\begin{itemize}
	\item $n/2>t \ge n/3$ implies $\gamma \le \FirstErr$ for arbitrary protocols, and $\gamma=0$ for public-randomness protocols.
	\item $n/2>t \ge n/4$ implies $\gamma \le 1/2+\FirstErr$ for arbitrary protocols, and $\gamma \leq 1/2$ for public-randomness protocols.
\end{itemize}
\end{theorem}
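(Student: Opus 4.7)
The plan is to exhibit, for any one-round BA protocol, an explicit locally consistent, static adversary witnessing the stated halting bounds, via a partition-and-equivocate attack in the spirit of classical FLM-style impossibility arguments. First, I would partition $[n]$ into disjoint groups of size at most $t$---three groups for the $t\ge n/3$ case, four for the $t\ge n/4$ case---and have the adversary corrupt one of them, call it $G_1$. Each corrupted party samples its randomness honestly but equivocates in round $1$: to every honest party in group $G_i$, it sends the round-$1$ message that an honest party with input $b_i$ (and the sampled coins) would send, where the $b_i$ are bits chosen by the adversary. Because each individual message is a legitimate honest message for \emph{some} input, the adversary is locally consistent.

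The key structural claim is then: for each honest group $G_i$, the round-one view of every party in $G_i$ under this attack is distributed identically to its view in a \emph{purely honest} execution where $G_1$ is replaced by honest parties with input $b_i$ (and the other honest inputs are as fixed by the adversary). In the public-randomness model, the identity sharpens to a pointwise coupling on the jointly sampled coins, since every party's coins are broadcast in round $1$; consequently the outputs of honest parties in $G_i$ under attack equal the outputs of the same parties in a specific purely honest execution $\pi_i$, viewed as functions of the shared coins. Halting with agreement in the adversarial execution then requires the outputs of different honest groups, taken from \emph{different} $\pi_i$'s, to coincide.

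Chaining agreement across ``adjacent'' executions (those differing only in one group's input) and invoking validity at the endpoints (all-$0$ and all-$1$ inputs) forces this coincidence to be contradictory. For the public-randomness $t \ge n/3$ case, the coupling is exact, so the contradiction is total and $\gamma = 0$; the $1/2$ bound in the $t \ge n/4$ case arises from having the adversary randomize symmetrically between two equivocations, of which the protocol can correctly resolve at most one. For arbitrary protocols, the coupling is only distributional; the slack $2^{t-n}$ should follow from a collision-probability argument over the private coins of the $n-t$ honest parties. I expect the main obstacle to be carrying out the validity chaining correctly---specifically, choosing the virtual executions $\pi_i$ so that validity ``bites'' at the chain endpoints while agreement transports the constraint around the cycle of groups---and rigorously quantifying the $2^{t-n}$ slack via a careful decoupling argument for the honest parties' private randomness.
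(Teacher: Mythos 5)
Your core idea---a partition-and-equivocate attack where the corrupted group $G_1$ samples coins honestly and sends each honest group $G_i$ the round-one message an honest $G_1$ with input $b_i$ would send, so that $G_i$'s view couples to a purely honest execution $\pi_i$, and then agreement forces the coupled outputs to coincide while validity-type arguments at suitable inputs force them apart---is exactly the engine of the paper's proof (Lemma~\ref{lemma:FirstRound:Arb} together with Claim~\ref{claim:FirstRoundound:Arb:Validity}). The paper also passes through an intermediate input vector $\vstar$ for the $t\ge n/4$ case, as you would have to, since no two "extreme" inputs are within distance $t$ when $t<n/2$. So at the level of strategy you are on the right track.

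That said, several of the technical claims you lean on are misattributed or would not survive scrutiny if carried out as stated. First, the pointwise coupling between the attacked execution and $\pi_i$ holds for \emph{arbitrary} protocols, not just public-randomness ones: the locally consistent adversary flips its coins honestly, and $G_i$'s round-one incoming messages from $G_1$ and from the other honest groups are deterministic functions of those coins and inputs, so the coupling is exact on the shared randomness regardless of whether coins are published. Second, and more importantly, your explanation of the $\err = \FirstErr$ slack is wrong. It has nothing to do with "collision probability over the private coins of the $n-t$ honest parties." In the paper the slack arises because the adversary, rather than using a \emph{fixed} partition of $\oP$, independently flips a coin per honest party to decide whether to show it $\vv$ or $\vv'$; the event that drives the bound on $\pr{I}$ is that the adversary unluckily picks, for every one of the $n-t$ honest parties, the input on which that party \emph{does} halt, and that has probability at most $2^{-(n-t)}$. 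The role of public randomness is to let a \emph{rushing} adversary see the coins and choose the non-halting input deterministically, which kills the guessing term and yields $\err=0$. So the $\err$ is about the adversary's randomized guessing under a non-rushing constraint, not about private honest coins. Third, your "chaining across adjacent executions" and "validity at the endpoints (all-$0$ and all-$1$ inputs)" is not what happens: the paper applies a single neighboring step (Lemma~\ref{lemma:FirstRound:Arb}) once for $t\ge n/3$ and twice, through a pivot $\vstar$, for $t\ge n/4$, and validity "bites" via the almost pre-agreement observation (Claim~\ref{claim:FirstRoundound:Arb:Validity}) at inputs that are merely within Hamming distance $t$ of $0^n$ or $1^n$, not literal all-$0$/all-$1$ inputs. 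If you instead chain through a long sequence of attacked executions (in the FLM-hexagon style your write-up gestures at), each step costs you a fresh $(1-\gamma)$ and you end up with a bound of the form $\gamma \le 1-\Theta(1/k)$, which is far weaker than the theorem for $t\ge n/3$. To get the claimed bound you must, as the paper does, either use the random-partition trick with the index $I$ of a party that halts under both inputs, or argue directly from a single fixed-partition attack that the events "the attacked execution halts," "agreement holds in the attacked execution," and the two almost pre-agreement events for the coupled honest executions cannot simultaneously hold, and then read off the bound from a union bound. You should also be careful about what the $(t,\alpha)$-agreement condition actually guarantees: it rules out two non-corrupted parties outputting opposite bits, but it does not rule out some of them outputting $\bot$; this distinction matters when you invoke agreement together with halting.
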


Note that the deterministic $(t+1)$-round, $t$-resilient BA protocol of \citet{DS83} can be cast as a locally consistent public-randomness protocol (in the plain model).\footnote{When considering locally consistent adversaries, the impossibility of BA for $t\geq n/3$ does not apply.}
\cref{thm:intro:FirstRound} shows that for $n=3$ and $t=1$, this two-round BA protocol is essentially optimal and cannot be improved via randomization (at least without considering complex protocols that cannot be cast as public-randomness protocols).

\paragraph{Second-round halting for arbitrary protocols.}
Our second result considers the halting probability after two communication rounds.
This is a much more challenging regime, as honest parties have time to detect inconsistencies in first-round messages. Our bound for arbitrary protocols in this case is weaker, and shows that when $t>n/4$, the halting probability is bounded away from $1$.

\begin{theorem}[Second-round halting, arbitrary protocols, informal]\label{thm:intro:SecondRound:Arb}
Let $\Pi$ be an $n$-party BA protocol and let $\gamma$ denote the halting probability after two communication rounds facing a locally consistent, static, adversary corrupting $t=(1/4+\eps)\cdot n$ parties.
Then, $\gamma \le 1 - (\eps /5)^2$.
\end{theorem}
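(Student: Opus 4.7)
The plan is to argue by contradiction via a hybrid argument across a family of carefully constructed locally-consistent attack scenarios. Let $\delta = 1 - \gamma$ denote the non-halting probability; we want to show $\delta \ge (\varepsilon/5)^2$, so we assume for contradiction that $\delta < (\varepsilon/5)^2$.

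First I would set up the partition. Write $t = n/4 + \varepsilon n$ and partition the $n$ parties into four ``main'' groups $G_1, G_2, G_3, G_4$ of size roughly $n/4$ together with an ``auxiliary'' group $G_5$ of size $\varepsilon n$. The corruption budget then allows the adversary to corrupt any single $G_i$ together with all of $G_5$. Next, define a chain of five scenarios $\cS_0, \cS_1, \cS_2, \cS_3, \cS_4$ together with four intermediate ``bridge'' scenarios $\cS_{k \to k+1}$. In $\cS_k$, all parties are honest, the parties in $G_1, \ldots, G_k$ have input $1$, and the parties in $G_{k+1}, \ldots, G_4$ have input $0$. In $\cS_{k \to k+1}$, the parties in $G_{k+1} \cup G_5$ are corrupted (using the full budget of $t$); they behave locally consistently, sending round-$1$ and round-$2$ messages as if their input were $0$ to parties in $G_1, \ldots, G_k$ and as if their input were $1$ to parties in $G_{k+2}, \ldots, G_4$.

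Second, I would use validity and agreement at the endpoints. By validity, whenever the protocol halts in $\cS_0$ all honest parties output $0$, and in $\cS_4$ they output $1$. By agreement, conditioned on halting in any other scenario, all honest parties output a common bit $b(\cS)$; define $p_k = \Pr[b(\cS_k) = 1 \wedge \text{halt}]$ and analogously $p_{k\to k+1}$. Then $p_0 \le \delta$ and $p_4 \ge \gamma$, so along the chain of nine scenarios the quantity $p$ must increase by at least $\gamma - \delta \ge 1 - 2(\varepsilon/5)^2$ in total, and hence by at least $(1 - 2(\varepsilon/5)^2)/8$ at some consecutive pair.

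Third, and this is the heart of the argument, I would show that adjacent scenarios are ``indistinguishable'' to a large fraction of the honest parties after two rounds. Consider a typical honest party $P$ in some group $G_j$ with $j \ne k+1$. In both $\cS_k$ (or $\cS_{k+1}$) and the bridge $\cS_{k \to k+1}$, the round-$1$ messages $P$ receives from honest parties are identical, while the round-$1$ message from the corrupted group matches the scenario $P$ ``believes'' to be in. The potential detection of inconsistency can therefore only occur in round $2$, when $P$ receives from another honest party $P'$ a message whose distribution depends on $P'$'s round-$1$ view, which in turn can differ between the two scenarios only if $P'$'s round-$1$ message from $G_{k+1} \cup G_5$ is of the ``other'' type. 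Since $G_{k+1} \cup G_5$ is of size at most $t$, and since each honest party sends only one round-$1$ message, a counting step shows that only an $O(\varepsilon)$-fraction of honest parties have the right ``cross-boundary'' structure in round $1$ for $P$ to detect a discrepancy in round $2$. This is where the squared factor $(\varepsilon/5)^2$ emerges: the event that $P$'s two-round view differs between the two scenarios requires a round-$1$ cross-boundary ($\varepsilon/5$) \emph{and} a round-$2$ forwarding from a party who also lies across the boundary (another $\varepsilon/5$), yielding a joint probability at most $(\varepsilon/5)^2$.

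Combining the three steps, the output value in adjacent scenarios must agree with probability at least $1 - (\varepsilon/5)^2$, which contradicts the jump of size $(1 - 2(\varepsilon/5)^2)/8$ extracted in the second step once $\varepsilon$ is small. The main obstacle I expect is making the third step rigorous: in the two-round regime honest parties can cross-check round-$1$ messages in round $2$, so the ``swap'' between adjacent scenarios is not view-preserving but only approximately so, and the approximation has to be tracked carefully as a function of the corruption budget. The cleanest way I see to handle this is to condition on the round-$1$ transcript and bound, via a Markov/Chebyshev-style argument over the random coins, the fraction of honest parties whose round-$2$ view becomes distinguishable; this is precisely what yields the two independent $\varepsilon/5$ factors that multiply into the $(\varepsilon/5)^2$ bound.
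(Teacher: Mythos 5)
Your overall plan---a hybrid chain of scenarios driving the output from~$0$ to~$1$, plus a per-step argument that adjacent scenarios cannot differ much---is the right shape, but both halves as stated have gaps, and the gaps are precisely where the paper puts its work.

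First, the chain is too short and the step size is too large. You use $5$ endpoint scenarios and $4$ bridges, i.e., $O(1)$ hybrids, each changing the (round-$1$) appearance of a group $G_{k+1}$ of size $\approx n/4$. The paper instead fixes a pivot set $\cP$ of size $k=\lceil n/4\rceil$ and partitions $\oP$ into $h\approx 3/(4\eps)$ sets $\cL_1,\ldots,\cL_h$ each of size $\ell = t-k \approx \eps n$; the chain $\Pi^\cP_0,\ldots,\Pi^\cP_h$ has $\Theta(1/\eps)$ steps, and in each step only the round-$1$ messages sent by $\cP$ to the single next block $\cL_{d+1}$ (an $\eps$-fraction of parties) change. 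The quadratic dependence on $\eps$ in the final bound comes from having $\Theta(1/\eps)$ steps \emph{and} a per-step slack of order $1/\eps$ (via the bound $1-\gamma_d \le (h+1)(1-\gamma)$), not from a ``two independent $\eps$-factors in one step'' phenomenon as you describe.

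Second, and more fundamentally, the per-step indistinguishability claim fails. You argue that ``only an $O(\eps)$-fraction of honest parties have the right cross-boundary structure,'' but in your bridge $\cS_{k\to k+1}$, \emph{all} honest parties in $G_{k+2},\ldots,G_4$ (a constant fraction of $n$) receive round-$1$ messages of the ``other'' type, and these parties are honest, so they faithfully propagate that discrepancy in round $2$ to everyone in $G_1,\ldots,G_k$. The adversary's budget $|G_{k+1}\cup G_5|\le t$ bounds how many parties it corrupts, not how many honest parties receive the ``other'' type of message. The counting step you invoke does not hold.

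The idea that resolves this---and which your proposal does not contain---is that the pivot set $\cP$ must \emph{abort after round~$1$} in the intermediate hybrids, and the interpolating adversary for the step $d\to d+1$ must corrupt exactly $\cP\cup\cL_{d+1}$, which has size $k+\ell\le t$. In the paper's step, the only honest parties whose round-$1$ view differs between $\Pi^\cP_d$ and $\Pi^\cP_{d+1}$ are those in $\cL_{d+1}$; but $\cL_{d+1}$ is corrupted by the interpolating adversary, so it can feign either round-$1$ view in its round-$2$ messages, and $\cP$ aborts so it never contradicts that story. This is exactly what makes adjacent hybrids genuinely interpolable in a two-round protocol, where honest parties can cross-check in round $2$. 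Without both ingredients (abort + aligning the changed set with the newly-corrupted block), the chain does not close and the indistinguishability you want cannot be made rigorous.

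Finally, a smaller issue: the specific bridge construction you wrote (sending ``input $0$ to $G_1,\ldots,G_k$ and input $1$ to $G_{k+2},\ldots,G_4$'') degenerates at the endpoints $k=0$ and $k=3$ and does not cleanly sit between $\cS_k$ and $\cS_{k+1}$. In the paper's version there is no separate ``bridge'' scenario; the interpolating adversary acts according to $\Pi^\cP_d$ toward a random half of the honest parties and according to $\Pi^\cP_{d+1}$ toward the rest, and the contradiction is that this causes disagreement with probability exceeding $\alpha$, rather than a distributional indistinguishability statement.
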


\paragraph{Second-round halting for public-randomness protocols.}
\cref{thm:intro:SecondRound:Arb} bounds the second-round halting probability of arbitrary BA protocols away from one. For public-randomness protocol we achieve a much stronger bound. The attack requires \emph{adaptive} corruptions (as opposed to \emph{static} corruptions in the previous case) and is based on a combinatorial conjecture that is stated below.\footnote{The attack holds even without assuming \cref{con:intro:IsoBot} when considering \emph{strongly adaptive} corruptions~\cite{GKP15}, in which an adversary sees all messages sent by honest parties in any given round and, based on the messages' content, decides whether to corrupt a party (and alter its message or sabotage its delivery) or not. Similarly, the conjecture is not required if each party is limited to tossing a single unbiased coin. These extensions are not formally proved in this paper.\label{footnote:no_conjecture}}

\begin{theorem}[Second-round halting, public-randomness protocols, informal]\label{thm:intro:SecondRound:PR}
Let $\Pi$ be an $n$-party public-randomness BA protocol and let $\gamma$ denote the halting probability after two communication rounds facing a locally consistent adversary adaptively corrupting $t$ parties.
Then, for sufficiently large $n$ and assuming \cref{con:intro:IsoBot} holds,
\begin{itemize}
\item $t > n/3$ implies $\gamma=0$.
\item $t > n/4$ implies $\gamma \leq 1/2$.
\end{itemize}
\end{theorem}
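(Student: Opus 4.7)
The plan is to build an adaptive, locally consistent adversary that exploits the public-randomness property to read each party's round-one coins and then corrupt $t$ parties so as to deny a common round-two halting decision. \cref{thm:intro:FirstRound} already rules out round-one halting at both corruption budgets (never for $t>n/3$ and with probability at most $1/2$ for $t>n/4$), so the attack needs only to target round two. The leverage is that once the public coins are revealed, every honest party's round-two message and its halting bit are deterministic functions of the round-one messages it received, which the adversary can shape through its choice of corruptions.

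For $t>n/3$, I would partition the $n$ parties into three groups $G_1,G_2,G_3$ of size at most $t$ each. The adversary picks two canonical all-honest executions, one on the all-zeros input and one on the all-ones input; by validity these must halt on different bits. A locally consistent corrupted party can present itself as holding different input bits to different honest recipients while using its honestly published coins, so I would have the adversary corrupt $G_2$ and feed the all-zeros transcript toward $G_1$ and the all-ones transcript toward $G_3$. From $G_1$'s local view the execution is indistinguishable from an all-honest run on input $\mathbf{0}$ in which $G_3$ is corrupt, which validity forces to halt on $0$; symmetrically, $G_3$ must halt on $1$. Assuming \cref{con:intro:IsoBot}, one can select coins for $G_2$ under which the two fabricated first-round views remain consistent at the level of the round-two decision rule even after the honest cross-messages between $G_1$ and $G_3$, so agreement forces at least one group to refrain from halting and $\gamma=0$. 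For $t>n/4$ I would use a four-group partition $G_1,\ldots,G_4$; the isoperimetric form of \cref{con:intro:IsoBot} extracts a seam in the space of round-one coins on either side of which the halting region demands opposite outputs, and the adversary adaptively chooses which of the four groups to corrupt so as to land on the non-halting side. Since the conjecture guarantees this side has measure at least $1/2$, we obtain $\gamma\leq 1/2$.

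The main obstacle I anticipate is the faithful translation of \cref{con:intro:IsoBot} from its combinatorial statement into the protocol-theoretic guarantee used above. In particular, I must show that the forged first-round views survive honest round-two cross-messages at decision time, and that ordinary adaptive corruption (as opposed to the strongly adaptive model noted in the footnote to the theorem) is sufficient to realize the required partition under the locally consistent constraint. The $t>n/3$ case is the more demanding of the two because its conclusion is the stronger $\gamma=0$; the weaker $t>n/4$ bound essentially arises from an averaging step saying that the non-halting side of any such seam comprises at least half of the coin space, and its proof mirrors the three-group argument with an extra group absorbing the asymmetry that the conjecture cannot eliminate for free.
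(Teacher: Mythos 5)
Your proposal misses the central new mechanism that makes the second‑round public‑randomness bound work. The groups‑and‑pivoting scheme you describe (feed an all‑$0$ transcript toward one group and an all‑$1$ transcript toward another) is essentially the first‑round attack of \cref{thm:intro:FirstRound} transported one round later, together with the propagating‑parties refinement from \cref{sec:technique:2}. But \cref{sec:technique:3} opens precisely by explaining why that attack fails here: consider a public‑randomness protocol in which the second‑round message is the first‑round transcript plus one random bit, and the halting rule is tied to the majority of those random bits. Such a protocol halts with probability $1/2$ against every pivot/propagate adversary, because halting is a function of the public coins, and the adversary cannot make the same coin vector look like two different outputs to two different groups. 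Your proposal never confronts this counterexample.

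The paper's fix is an additional, genuinely new ingredient: an extra set $\cS$ of ``aborting'' parties that fall silent in round two. Aborting decouples the halting decision from the full coin vector $\vr$: some honest parties now halt on the view $\vr$ and others on the $\bot$‑masked view $\bot_\cS(\vr)$. \cref{con:intro:IsoBot} is stated exactly about the pair $(\vr,\bot_\cS(\vr))$ — it says that if two sets $\cA_0,\cA_1\subseteq\sbn$ are both ``stable'' under random masking, then with noticeable probability $\vr$ and $\bot_\cS(\vr)$ land in opposite sets, which translates into disagreement under the attack. In your write‑up the conjecture is invoked as a black‑box ``seam'' statement, with no connection to the $\bot$‑operation that is its actual content, and no aborting parties anywhere in the adversary. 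Without $\cS$ there is no object to apply the conjecture to, and the hybrid argument (\cref{lemma:SecondRound:PR} and \cref{claim:PR:00,claim:PR:01,claim:PR:1}) cannot even be set up. Similarly, your explanation of the $1/2$ for $t>n/4$ as an averaging over ``sides of a seam'' does not match the paper: there, the $1/2$ comes from the same almost‑pre‑agreement plus three‑vector $(\vz,\vstar,\vo)$ device used in the other bounds, and the conjecture is again applied to $\bot$‑masked coin sets, not to a geometric bisection of coin space. I would revisit \cref{sec:technique:3} and rebuild the adversary to include the aborting set; as written, the proposal does not establish the theorem.
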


\cref{thm:intro:SecondRound:PR} shows that for sufficiently large $n$, any public-randomness protocol tolerating $t>n/3$ locally consistent corruptions cannot halt in less than three rounds (unless \cref{con:intro:IsoBot} is false). In particular, its expected round complexity must be at least three.

To understand the meaning of this result, recall the protocol of \citet{Micali17}. As discussed above, this protocol can be cast as a public-randomness protocol tolerating $t<n/3$ adaptive locally consistent corruptions. The protocol proceeds by continuously running a three-round sub-protocol until halting, where each sub-protocol consists of a coin-tossing round, a check-halting-on-$0$ round, and a check-halting-on-$1$ round. Executing a single instance of this sub-protocol demonstrates a halting probability of $1/3$ after three rounds.
By \cref{thm:intro:SecondRound:PR}, a protocol that tolerates slightly more corruptions, \ie $(1/3 +\eps) \cdot n$, for arbitrarily small $\eps>0$, cannot halt in fewer rounds.

\paragraph{Our techniques.}
Our attacks follow the spirit of many lower bounds on the round complexity on BA and broadcast~\cite{FL82,DS83,KY84,DRS90,GKKO07,AH10}. The underlying idea is to start with a configuration in which validity assures the common output is $0$, and gradually adjust it, while retaining the same output value, into a configuration in which validity assures the common output is $1$. (For the simple case of deterministic protocols, each step of the argument requires the corrupted parties to lie about their input bits and incoming messages from other corrupted parties, but otherwise behave honestly.) Our main contribution, which departs from the aforementioned paradigm, is adding another dimension to the attack by aborting a random subset of parties (rather than simply manipulating the input and incoming messages). This change allows us to bypass a seemingly inherent barrier for this approach. We refer the reader to \cref{sec:Technique} for a detailed overview of our attacks.

We remark that a similar approach was employed by \citet{AC08} for obtaining lower bounds on consensus protocols in the asynchronous shared-memory model, a flavor of BA in a communication model very different to the one considered in the present paper. Specifically, \cite{AC08} showed that in an asynchronous shared-memory system, $\Theta(n^2)$ steps are required for $n$ processors to reach agreement when facing $\Theta(n)$ \emph{computationally unbounded strongly adaptive} corruptions (see \cref{footnote:no_conjecture}). Their adversary also aborts a subset of the parties to prevent halting; however, the difference in communication model (synchronous in our work, vs.\ asynchronous in \cite{AC08}) and the adversary's power (efficient and adaptive in our work, vs.\ computationally unbounded and strongly adaptive in \cite{AC08}) yields a very different attack and analysis (though, interestingly, both attacks boil down to different variants of isoperimetric-type inequalities).

\paragraph{The combinatorial conjecture.}
We conclude the present section by motivating and stating the combinatorial conjecture assumed in \cref{thm:intro:SecondRound:PR}, and discussing its plausibility. We believe the conjecture to be of independent interest, as it relates to topics from Boolean functions analysis such as influences of subsets of variables \cite{Odonnel14} and isoperimetric-type inequalities \cite{MosselORSS2006,MosselOS2013}. The nature of our conjecture makes the following paragraphs somewhat technical, and reading them can be postponed until after going over the description of our attack in \cref{sec:Technique}.

The analysis of our attack naturally gives rise to an isoperimetric-type inequality. For limited types of protocols, we manage to prove it using Friedgut's theorem~\cite{Friedgut98} about approximate juntas and the KKL theorem~\cite{KKL88}. For arbitrary protocols, however, we can only reduce our attack to the conjecture below.

We require the following notation before stating the conjecture. Let $\Sigma$ denote some finite set.
For $\vx\in \sn$ and $\cS \subseteq [n]$, define the vector $\bot_\cS(\vx) \in \set{\Sigma \cup \bot}^n$ by assigning all entries indexed by $\cS$ with the value $\bot$, and all other entries according to $\vx$. Finally, let $\bns$ denote the distribution induced over subsets of $[n]$ by choosing each element with probability $\sigma$ independently at random.

\def\MainConj{
For any $\sigma,\lambda >0$ there exists $\delta>0$ such that the following holds for large enough $n\in \N$: let $\Sigma$ be a finite alphabet, and let $\cA_0,\cA_1 \subseteq \sbn$ be two sets such that for both $b\in \zo$:

\begin{align*}
\ppr{\cs\gets \bns}{\ppr{\vr \gets \Sigma^n}{\vr,\bot_{\cS}(\vr) \in \cA_b} \ge \lambda } \ge 1-\delta.
\end{align*}
Then,
\begin{align*}
\ppr{\substack{\cS \gets \bns \vspace{.05in}\\ \vr\gets \Sigma^n }}{\forall b\in \zo\colon \set{\vr,\bot_{\cS}(\vr)} \cap \cA_b \neq \emptyset} \ge \delta.
\end{align*}
}

\begin{conjecture}\label{con:intro:IsoBot}
\MainConj
\end{conjecture}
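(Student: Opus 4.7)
The plan is to attempt a proof by reformulating the conjecture as a correlation inequality on a product space and then combining a case analysis with tools from the analysis of Boolean functions. The first step is to observe that the joint distribution of $(\vr, \bot_\cS(\vr))$ under $\vr \gets \Sigma^n$ and $\cS \gets \bns$ is itself a product measure on the alphabet $\Sigma \times (\Sigma \cup \{\bot\})$: independently for each coordinate $i$, draw $r_i \in \Sigma$ uniformly and set $m_i = r_i$ with probability $1-\sigma$ and $m_i = \bot$ otherwise. Under this reformulation, the hypothesis says that the events $\cB_b = \{(\vr, \vy) : \vr \in \cA_b \;\land\; \vy \in \cA_b\}$ each have probability at least $\lambda(1-\delta)$, whereas the conclusion is a lower bound on the probability of $\cC_0 \cap \cC_1$, where $\cC_b = \{(\vr, \vy) : \vr \in \cA_b \;\lor\; \vy \in \cA_b\}$.

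Next, I would split into cases based on the overlap of $\cA_0$ and $\cA_1$ on $\Sigma^n$. If $\ppr{\vr \gets \Sigma^n}{\vr \in \cA_0 \cap \cA_1} \geq \delta$, the conclusion follows trivially: take any $\cS$ and any such $\vr$. The hard case is therefore when $\cA_0$ and $\cA_1$ are nearly disjoint on $\Sigma^n$; there, the witnessing pairs $(\vr, \cS)$ must arise because the $\bot$-perturbation nontrivially moves $\vr$ to a region in $\sbn$ that distinguishes $\cA_0$ from $\cA_1$.

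To handle the hard case, I would apply Friedgut's junta theorem and the KKL theorem to $\cA_0$ and $\cA_1$ viewed as subsets of $\sbn$ under the product measure $\mu_\sigma$ (each coordinate independently $\bot$ w.p.\ $\sigma$, uniform on $\Sigma$ otherwise). If both sets are approximately $k$-juntas for some $k = O_{\lambda,\sigma}(1)$, one can brute-force over the values of these few relevant coordinates: with constant probability the random $\cS$ places $\bot$'s on an appropriate subset of them, which flips membership between $\cA_0$ and $\cA_1$ often enough to yield the claimed $\delta$. If instead some $\cA_b$ has total influence spread over many coordinates, then KKL together with hypercontractivity should imply that the $\mu_\sigma$-neighborhood of $\cA_b \cap \Sigma^n$ has density close to $1$ in $\Sigma^n$, and therefore has significant intersection with $\cA_{1-b} \cap \Sigma^n$, already yielding the conclusion.

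The main obstacle is gluing these cases together quantitatively, and in particular handling the ``mixed'' regime in which $\cA_0$ behaves like a junta while $\cA_1$ has spread-out influence (or vice versa), since this case is not covered by any off-the-shelf analytic tool I am aware of. A promising intermediate step is to first establish the conjecture under the additional assumption that both $\cA_0,\cA_1$ are monotone with respect to the partial order in which $\bot$ dominates every symbol of $\Sigma$ (i.e., $\cA_b$ is closed under replacing $\Sigma$-entries by $\bot$); in that structured setting an FKG-type correlation argument should go through once one verifies that the hypothesis controls the measure of $\cA_b$ under both the uniform measure on $\Sigma^n$ and the noisy measure $\mu_\sigma$ on $\sbn$. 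Removing the monotonicity assumption via a Kleitman-style monotonization or a symmetrization argument would then be the remaining --- and, I expect, most technical --- hurdle.
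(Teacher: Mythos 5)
The statement you are trying to prove is \cref{con:intro:IsoBot}, which is not a theorem in the paper but an open \emph{conjecture}: the authors explicitly state (\cref{sec:OpenQuest}) that they prove only special cases and leave the general claim open. Your proposal is also not a completed proof --- you yourself flag the ``mixed regime'' (one of $\cA_0,\cA_1$ junta-like, the other with spread-out influence) as an obstacle you cannot handle --- so the question is really whether your plan plausibly closes that gap, and I do not think it does. The Friedgut/KKL intuition is in the right spirit for restricted versions: it matches what the paper does for the single-coin ``superb'' case (\cref{sec:technique:3:1}, where \cref{lem:KKL} suffices), and the paper also notes the conjecture holds for balls and prefix sets. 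But those are precisely the special cases the authors already dispose of; the general statement remains open.

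There is also a concrete loss in your first step. Your reformulation replaces the conjecture's nested-quantifier hypothesis --- for a $(1-\delta)$-fraction of $\cS$, the inner probability over $\vr$ is at least $\lambda$ --- with the single marginal bound $\Pr\big[(\vr,\bot_\cS(\vr)) \in \cB_b\big] \ge \lambda(1-\delta)$. This discards exactly the information the conjecture carefully retains. The paper's own discussion observes that the \emph{reversed} nesting (``for a $\lambda$-fraction of $\vr$, the inner probability over $\cS$ is at least $1-\delta$'') is provable via known isoperimetric-type inequalities; both the conjectured order and the reversed order imply your marginal bound. If the marginal bound alone implied the conclusion, the conjecture would already be a theorem --- it is not, which strongly suggests the marginal formulation is too weak (and possibly false). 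Any serious attempt at the conjecture must exploit the $\cS$-before-$\vr$ structure, which your product-measure reduction throws away at the outset. Your FKG/monotonization fallback at the end faces the same issue: monotonicity with respect to the $\bot$-partial-order would need to be reconciled with the asymmetric quantifier order, and you have not indicated how.
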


\noindent
Consider two large sets $\cA_0$ and $\cA_1$ which are ``stable'' in the following sense: for both $\o\in \zo$, with probability $1-\delta$ over $\cS\la \bns$, it holds that both $\vr$ and $\bot_{\cS}(\vr)$ belong to $\cA_\o$, with probability at least $\lambda$ over $\vr$. \cref{con:intro:IsoBot} stipulates that with high probability ($\ge \delta$), the vectors $\vr$ and $\bot_{\cS}(\vr)$ lie in opposite sets (\ie one is in $\cA_0$ and the other $\cA_1$), for random $\vr$ and $\cS$. It is somewhat reminiscent of the following flavor of isoperimetric inequality: for any two large sets $\cB_0$ and $\cB_1$, taking a random element from $\cB_0$ and resampling a few coordinates, yields an element in $\cB_1$ with large probability. Less formally, one can ``move'' from one set to the other by manipulating a few coordinates~\cite{MosselORSS2006,MosselOS2013}.

A few remarks are in order. First, it suffices for our purposes to show that $\delta$ is a noticeable (\ie inverse polynomial) function of $n$, rather than independent of $n$.\footnote{We remark that it is rather easy to show that $\delta\ge 2^{-n}$, which is not good enough for our purposes.} We opted for the latter as it gives a stronger attack. Second, the conjecture holds for ``natural'' sets such as balls, \ie $\cA_0$ and $\cA_1$ are balls centered around $0^n$ and $1^n$ of constant radius,\footnote{The alphabet $\Sigma$ is not necessarily Boolean, and there are a couple of subtleties in defining balls.} and ``prefix'' sets, \ie sets of the form $\cA_\o=\o^k \times \set{\Sigma \cup \bot}^{n-k}$. Furthermore, the claim can be proven when the probabilities over $\cS$ and $\vr$ are reversed, \ie ``with probability $\lambda$ over $\vr$, it holds that both $\vr$ and $\bot_{\cS}(\vr)$ belong to $\cA_\o$ with probability at least $1-\delta$ over $\cS$'', instead of the above. Interestingly, this weaker statement boils down to the aforementioned isoperimetric-type inequality (cf.~\cite{MosselORSS2006} for the Boolean case and \cite{MosselOS2013} for the non-Boolean case).

We conclude by pointing out that, as mentioned in \cref{footnote:no_conjecture}, the conjecture is not needed for certain limited cases that are not addressed in detail in the present paper. One such case is sketched out in \cref{sec:Technique}.

\subsection{Locally Consistent Security to Malicious Security}\label{sec:intro:LocalToFull}

As briefly mentioned in \cref{sec:intro:model}, protocols that are secure against locally consistent adversaries can be compiled to tolerate arbitrary malicious adversaries.
The compiler requires a PKI setup for digital signatures, verifiable random functions (VRFs)~\cite{MRV99}, and non-interactive zero-knowledge proofs (NIZK)~\cite{BFM88}. A VRF is a pseudorandom function with an additional property: using the secret key and an input $x$, the VRF outputs a pseudorandom value $y$ along with a proof string $\pi$; using the public key, everyone can use $\pi$ to verify whether $y$ is the output of $x$. We consider a trusted setup phase for establishing the PKI, where a trusted party generates VRF and signature keys for every party, securely gives the secret keys to each party, and publishes the public keys to all.

Given a protocol that is secure against locally consistent adversaries, the compiled protocol proceeds as follows, round by round.
Each party $\Party_i$ sets its random coins for the \rth round $\rho_i^r$ (together with a proof $\pi_i^r$) by evaluating the VRF over the pair $(i,r)$.
Next, for every $j\in[n]$, party $\Party_i$ uses these coins to compute the message $m^r_{i\to j}$ for $\Party_j$, signs $m^r_{i\to j}$ as $\sigma^r_{i\to j}$, and sends $(m^r_{i\to j},\sigma^r_{i\to j},\pi_i^r)$ to $\Party_j$.
Finally, $\Party_i$ sends to $\Party_j$ a NIZK proof that:
\begin{enumerate}
    \item
    There exist an input bit $b$, random coins $\rho_i^r$, as well as random coins $\rho^{r'}_i$ and incoming messages and $(m^{r'}_{1\to i},\ldots,m^{r'}_{n\to i})$ for every prior round $r'<r$, such that: (1) $\pi_i^r$ verifies that $\rho_i^r$ is the VRF output of $(i,r)$ (using the VRF public key of $\Party_i$), (2) the message $m^r_{i\to j}$ was signed by $\Party_i$, and (3) the message $m^r_{i\to j}$ is the output of the next-message function of $\Party_i$ when applied to these values.
    \item
    For $r>1$, the messages $(m^{r'}_{k\to i},\sigma^{r'}_{k\to i},\pi_k^{r'})$ received by $\Party_i$ from every $\Party_k$ in prior rounds are proven to be properly generated. That is, $\Party_k$ provided a NIZK proof that explains how $m^{r'}_{k\to i}$ was generated using random coins computed via the VRF on $(k,r')$ and on incoming messages that were signed by the senders.
\end{enumerate}

When considering public-randomness protocols, the above compilation can be made much more efficient. Instead of proving in zero knowledge the consistency of each message, each party $\Party_i$ concatenates to each message all of its incoming messages from the previous round. A receiver can now locally verify the coins used by $\Party_i$ are the VRF output of $(i,r)$ (as assured by the VRF), that the incoming messages are properly signed, and that the message is correctly generated from the internal state of $\Party_i$ (which is now visible and verified).

\begin{theorem}[Locally consistent to malicious security, folklore, informal]\label{thm:local_to_malicious}
Assume PKI for digital signatures, VRF, and NIZK. Then, an expected-constant-round BA protocol secure against locally consistent adversaries can be compiled into a maliciously secure protocol with the same parameters.
\end{theorem}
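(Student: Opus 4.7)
The plan is to construct, for every \ppt{} malicious adversary $\Adv$ attacking the compiled protocol $\Pi'$, a locally consistent adversary $\Adv'$ attacking the underlying protocol $\Pi$ whose joint distribution with the honest parties is computationally indistinguishable from that of $\Adv$. Since agreement and validity are deterministic predicates on honest-party outputs, this indistinguishability transports the security of $\Pi$ to $\Pi'$ up to a negligible loss. The compiler, as described in the paragraph preceding the theorem, augments each round of $\Pi$ with signatures, VRF-derived coins and a NIZK certificate of ``honest'' next-message generation; this augmentation does not add rounds, so the expected round complexity and the resilience bound of $\Pi$ are preserved.

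The reduction $\Adv'$ internally emulates a complete execution of $\Pi'$ together with $\Adv$: it samples fresh VRF, signature, and CRS keys, hands the secret keys of the corrupted parties to $\Adv$, and plays each honest party by using the input supplied by its external $\Pi$-challenger, by sampling VRF-derived randomness, by signing honestly, and by invoking the zero-knowledge simulator in place of real NIZKs. For every message $(m^r_{i \to j},\sigma^r_{i \to j},\pi_i^r)$ that a corrupted $\Party_i$ sends to an honest $\Party_j$ in $\Pi'$, the reduction extracts from $\pi_i^r$ an underlying input bit together with coins and prior-round incoming messages that explain $m^r_{i \to j}$ as a correct $\Pi$-next-message, and submits $m^r_{i \to j}$ as the corresponding message in the underlying protocol. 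Three observations establish that $\Adv'$ is a locally consistent adversary in $\Pi$: (i) the \emph{unique provability} of the VRF forces $\rho_i^r$ to be the deterministic VRF evaluation at $(i,r)$, so corrupted parties cannot lie about their coins across recipients; (ii) the NIZK statement requires every claimed incoming message from an honest party in a prior round to carry a valid signature of that party, so signature unforgeability prevents $\Adv$ from fabricating incoming honest messages; (iii) the NIZK certificate enforces that $m^r_{i\to j}$ is a genuine output of $\Pi$'s next-message function on some internal view, which is exactly the freedom granted to a locally consistent adversary.

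Indistinguishability between the emulated transcript inside $\Adv'$ and the real execution of $\Pi'$ is then shown by a standard sequence of hybrids: first replace real honest-party NIZKs with simulated ones (zero-knowledge), next replace honest-party VRF outputs with uniform strings (VRF pseudorandomness), and finally rewrite the emulation as an external call to $\Pi$ driven by $\Adv'$ (perfect). The main obstacle will be reconciling the \emph{expected} termination time of $\Pi$ with the \emph{worst-case} running time required for cryptographic security to kick in: a naive union bound over all rounds does not close, so one has to truncate $\Pi'$ at a polynomial round horizon and bound the probability of exceeding it via Markov's inequality on the expected round complexity of $\Pi$. A secondary subtlety is supporting adaptive corruptions, as in the protocols of \cref{prop:mal_to_local}: this requires NIZKs admitting straight-line extraction (so that $\Adv'$ never rewinds $\Adv$ when a new party is corrupted) together with adaptively-secure VRFs and signatures, all of which follow from standard assumptions.
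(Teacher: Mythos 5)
Your high-level strategy matches the paper's: internally emulate $\Pi'$ for the malicious $\Adv$, simulate honest-party NIZKs with $\Simnizk$, show via a VRF/ZK hybrid chain that messages a corrupted party can get an honest receiver to accept in $\Pi'$ must be consistent with some locally consistent behavior in $\Pi$, and then invoke the locally consistent security of $\Pi$. The three enforcement mechanisms you identify (VRF unique provability for the coins, signature unforgeability for honest incoming messages, NIZK for next-message consistency) are exactly the paper's.

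Two substantive differences. First, your reduction relies on \emph{extracting} a witness from the NIZK; the paper's reduction never extracts and uses only NIZK \emph{soundness}: it verifies the incoming proof, and if it accepts it forwards the message content, arguing that soundness (together with VRF uniqueness and unforgeability, through a chain of claims) guarantees the content lies in $\Mout^{r,i\to j}$, the set of messages a locally consistent party could have sent given the coins and incoming messages the reduction itself already knows. So the paper gets away with plain (adaptively sound) NIZK rather than a proof of knowledge — a weaker primitive. Your extraction route is also viable but needlessly strengthens the assumption, and a proof of knowledge is in fact what you'd need for your adaptive-corruption sketch, which the paper doesn't handle (its reduction is implicitly static).

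A genuine gap in your writeup: you never address the recursive structure of the relations $\Rel^r_{i\to j}$. The round-$r$ statement asserts validity of round-$(r-1)$ NIZK proofs for all $n$ senders, each of which asserts validity of round-$(r-2)$ proofs, and so on; the statement/witness (and the verification cost) grow with the recursion depth, so the compiler is only efficient for a bounded number of rounds. This is why the paper's formal theorem (the appendix restatement) is explicitly conditioned on $q=O(\log n)$, not ``expected constant rounds.'' Your ``truncate at a polynomial horizon via Markov'' point actually addresses a different mismatch (expected versus worst-case running time) and does not help with the per-round cost blowing up; the paper sidesteps both issues by fixing $q$ in the halting guarantee. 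You should say explicitly that the compiler preserves the $(t,q,\gamma)$-halting parameters round-for-round (the compilation adds no rounds), and that the recursion-depth issue is what forces the $q=O(\log n)$ restriction.
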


The proof of \cref{thm:local_to_malicious} can be found in \cref{sec:LocalToFull}.

\subsection{Additional Related Work}\label{sec:relatedWork}

Following the work of \citet{FM97} in the two-thirds majority setting, \citet{KK06} improved the expected round complexity to $23$, and \citet{Micali17} to $9$. In the honest-majority setting, \citet{FG03} showed expected-constant-round protocol and \citet{KK06} expected $56$ rounds. \citet{MV17} adjusted the technique from \cite{Micali17} to the honest-majority case. \citet{ADDNR19} achieved expected $10$ rounds assuming static corruptions and expected $16$ rounds assuming adaptive corruptions. \citet{ACDNPRS19} constructed an expected-constant-round protocol tolerating $(1/2-\epsilon)\cdot n$ adaptive corruptions with sublinear communication complexity. In the dishonest-majority setting, \citet{GKKO07} constructed a broadcast protocol with expected $O(k^2)$ rounds, tolerating $t<n/2+k$ corruptions, that was improved by \citet{FN09} to expected $O(k)$ rounds.

\citet{AH10} extended the results of \citet{CMS89} and of \citet{KY84} on guaranteed termination of randomized BA protocols to the asynchronous setting, and provided a tight lower bound.

Randomized protocols with expected constant round complexity have \emph{probabilistic termination}, which requires delicate care \wrt composition (\ie their usage as subroutines by higher-level protocols). Parallel composition of randomized BA protocols was analyzed in \cite{Ben-Or83,FG03}, sequential composition in \cite{LLR06}, and universal composition in \cite{CCGZ16,CCGZ17}.

\subsection{Open Questions}\label{sec:OpenQuest}
Our attack on two-round halting of public-randomness protocols is based on \cref{con:intro:IsoBot}. In this work we prove special cases of this conjecture, but proving the general case remains an open challenge.

A different interesting direction is to bound the halting probability of protocols when $t<n/4$. It is not clear how to extend our attacks to this regime.

\subsection*{Paper Organization}

In \cref{sec:Technique} we present a technical overview of our attacks. The formal model and the exact bounds are stated in \cref{sec:OurResult}. The proof of the first-round halting is given in \cref{sec:FirstRound}, and for second-round halting in \cref{sec:SecondRound}.
The proof of \cref{thm:local_to_malicious} appears in \cref{sec:LocalToFull}.

\newcommand{\wb}[1]{\overline{#1}}
\newcommand{\rr}{\mathbf{r}}
\newcommand{\cN}{\mathcal{N}}
\newcommand{\cK}{\mathcal{K}}
\newcommand{\ham}{\mathrm{dist}}
\newcommand{\oh}{\overline{\cH}}
\newcommand{\oC}{\overline{\cC}}

\section{Our Techniques}\label{sec:Technique}
In this section, we outline our techniques for proving our results. We start with explaining our bound for first-round halting of arbitrary protocols (\cref{thm:intro:FirstRound}). We then move to second-round halting, starting with the weaker bound for arbitrary protocols (\cref{thm:intro:SecondRound:Arb}), and then move to the much stronger bound for public-randomness protocols (\cref{thm:intro:SecondRound:PR}).

\paragraph{Notations.}
We use calligraphic letters to denote sets, uppercase for random variables, lowercase for values, boldface for vectors, and sans-serif (\eg \Ac) for algorithms (\ie Turing Machines).
For $n\in\N$, let $[n]=\set{1,\cdots,n}$ and $(n)=\set{0,1,\cdots,n}$. Let $\ham(x, y)$ denote the hamming distance between $x$ and $y$. For a set $\cS \subseteq [n]$ let $\oS = [n]\setminus \cS$. For a set $\cR\subseteq \zo^n$, let $\cR|_{\cS}=\sset{\vx_\cS \in \zo^{\size{\cS}}\st \vx\in \cR}$, \ie $\cR|_{\cS}$ is the projection of $\cR$ on the index-set $\cS$.

Fix an $n$-party randomized BA protocol $\Pi = (\Pc_1,\ldots,\Pc_n)$. For presentation purposes, we assume that $n$ is divisible by $3$, that validity and agreement hold \emph{perfectly}, and consider no setup parameters (in the subsequent sections, we remove these assumptions). Furthermore, we only address here the case where the security threshold is $t>n/3$. The case $t>n/4$ requires an additional generic step that we defer to the technical sections of the paper. We denote by $\Pi(\vv;\vr)$ the output of an honest execution of $\Pi$ on input $\vv \in \zn$ and randomness $\vr$ (each party $\Party_i$ holds input $v_i$ and randomness $r_i$). We let $\Pi(\vv)$ denote the resulting random variable determined by the parties' random coins, and we write $\Pi(\vv) =\o$ to denote the event that the parties output $\o$ in an honest execution of $\Pi$ on input $\vv$. All corrupt parties described below are locally consistent (see \cref{sec:intro:model}).

\subsection{First-Round Halting}\label{sec:technique:1}
Assume the honest parties of $\Pi$ halt at the end of the first round with probability $\gamma>0$ when facing $t$ corruptions (on every input).
Our goal is to upperbound the value of $\gamma$. Our approach is inspired by the analogous lower-bound for deterministic protocols (see \cite{FL82,DS83}). Namely, we start with a configuration in which validity assures the common output is $0$, and, while maintaining the same output, we gradually adjust it into a configuration in which validity assures the common output is $1$, thus obtaining a contradiction. For randomized protocols, the challenge is to maintain the invariant of the output, even when the probability of halting is far from $1$. We make the following observations:

\begin{align} \label{eq:maj}
&\text{Almost pre-agreement:} \quad \ham(\vv,\o^n) \le t \implies \Pi(\vv)=\o.
\end{align}
That is, in an honest execution of $\Pi$, if the parties almost start with preagreement, \ie with at least $n-t$ of $\o$'s in the input vector, then the parties output $\o$ with probability $1$. \cref{eq:maj} follows from \emph{agreement} and \emph{validity} by considering an adversary corrupting exactly those parties with input $v_i\neq b$, and otherwise not deviating from the protocol.

\begin{align}\label{eq:nei}
&\text{Neighboring executions (N1):} \quad \ham(\vz,\vo)\le t \implies \ppr{\vr}{\Pi(\vz;\vr)=\Pi(\vo;\vr)}\ge \gamma.
\end{align}
That is, for two input vectors that are at most $t$-far (\ie the resiliency threshold), the probability that the executions on these vectors yield the same output when using the same randomness is bounded below by the halting probability. To see why \cref{eq:nei} holds, consider the following adversary corrupting subset $\cC$, for $\cC$ being the set of indices where $\vz$ and $\vo$ disagree. For an arbitrary partition $\sset{\oC_0,\oC_1}$ of $\oC$, the adversary instructs $\cC$ to send messages according to $\vz$ to $\oC_0$ and according to $\vo$ to $\oC_1$, respectively. With probability at least $\gamma$, all parties halt at the first round, and, by perfect agreement, all parties compute the same output.\footnote{In the above, we have chosen to ignore a crucial subtlety. In an execution of the protocol, it may be the case that there is a suitable message (according to $\vz$ or $\vo$) to prevent halting, yet the adversary cannot determine which one to send. In further sections, we address this issue by taking a random partition of $\oC$ (rather than an arbitrary one). By doing so, we introduce an error-term of $1/2^{n-t}$ when we upper bound the halting probability $\gamma$.} Since parties in $\oC_\o$ cannot distinguish this execution from a halting execution of $\Pi(\vv_\o;\vr)$, \cref{eq:nei} follows.

We deduce that if there are more than $n/3$ corrupt parties, then the halting probability is $0$; this follows by combining the two observations above for $\vz=0^{2n/3}1^{n/3}$ and $\vo=0^{n/3}1^{2n/3}$. Namely, by \cref{eq:maj}, it holds that $\ppr{\vr}{\Pi(\vz;\vr)=\Pi(\vo;\vr)}=0$. Thus, by \cref{eq:nei}, $\gamma=0$.

\subsection{Second-Round Halting -- Arbitrary Protocols}\label{sec:technique:2}

We proceed to explain our bound for second-round halting of arbitrary protocols. Assume the honest parties of $\Pi$ halt at the end of the second round with probability $\gamma>0$ when facing $t$ corruptions (on every input). Let $t=(1/3+\eps)\cdot n$, for an arbitrary small constant $\eps>0$. In spirit, the attack follows the footsteps of the single-round case described above; we show that neighboring executions compute the same output with good enough probability (related to the halting probability), and lower-bound the latter using the \emph{almost pre-agreement} observation. There is, however, a crucial difference between the first-round and second-round cases; the honest parties can use the second round to detect whether (some) parties are sending inconsistent messages. Thus, the second round of the protocol can be used to ``catch-and-discard'' parties that are pretending to have different inputs to different parties, and so our previous attack breaks down. (In the one-round case, we exploit the fact that the honest parties cannot verify the consistency of the messages they received.) Still, we show that there is a suitable variant of the attack that violates the agreement of any ``too-good'' scheme.

At a very high level, the idea for proving the \emph{neighboring} property is to \emph{gradually} increase the set of honest parties towards which the adversary behaves according to $\vo$ (for the remainder it behaves according to $\vz$, which is a decreasing set of parties). While the honest parties might identify the attacking parties and discard their messages, they should still agree on the output and halt at the conclusion of the second round with high probability. We exploit this fact to show that at the two extremes (where the adversary is merely playing honestly according to $\vz$ and $\vo$, respectively), the honest parties behave essentially the same. Therefore, if at one extreme (for $\vz$) the honest parties output $\o$, it follows that they also output $\o$ at the other extreme (for $\vo$), which proves the \emph{neighboring} property for the second-round case.

We implement the above by augmenting the one-round attack as follows. In addition to corrupting a set of parties that feign different inputs to different parties, the adversary corrupts an extra set of parties that is inconsistent with regards to the messages it received from the first set of corrupted parties. To distinguish between the two sets of corrupted parties, the former (first) will be referred to as ``pivot'' parties (since they pivot their input) and will be denoted $\cP$, and the latter will be referred to as ``propagating'' parties (since they carefully choose what message to propagate at the second round) and will be denoted $\cL$. We emphasize that the propagating parties deviate from the protocol only at the second round and only with regards to the messages received by the pivot parties (not with regards to their input -- as is the case for the pivot parties). In more detail, we partition $\oP = [n]\setminus \cP$ into $\ell=\lceil 1/\eps\rceil$ sets $\sset{\cL_1,\ldots,\cL_\ell}$, and we show that, unless there exists $i$ such that parties in $\cC= \cP\cup \cL_i$ violate agreement (explained below), the following must hold for neighboring executions.

\begin{align} \label{eq:nei2}
&\text{Neighbouring executions (N2):} \quad \ham(\vz,\vo)\le n/3 \implies \\
& \pr{\Pi(\vz)=\o\text{ in two rounds}} \ge \pr{ \Pi(\vo)=\o\text{ in two rounds}} - 2 (\ell +1)^2 \cdot (1-\gamma). \nonumber
\end{align}
That is, for two input vectors that are at most $n/3$--far, the difference in probability that two distinct executions (for each input vector) yield the same output within two rounds is roughly upper-bounded by the quantity $(1-\gamma)/\eps^2$ (\ie non-halting probability divided by $\eps^2$). To see that \cref{eq:nei2} holds true, fix $\vz,\vo\in \zn$ of hamming distance at most $n/3$, and let $\cP$ be the set of indices where $\vz$ and $\vo$ differ. Consider the following $\ell+1$ distinct variants of $\Pi$, denoted $\set{\Pi_0,\ldots, \Pi_\ell}$; in protocol $\Pi_i$, parties in $\cP$ send messages to $\cL_1,\ldots, \cL_i$ according to the input prescribed by $\vo$ and to $\cL_{i+1},\ldots, \cL_\ell$ according to the input prescribed by $\vz$, respectively. All other parties follow the instructions of $\Pi$ for input $\vz$. We write $\Pi_i=\o$ to denote the event that the parties not in $\cP$ output $\o$. Notice that the endpoint executions $\Pi_0$ and $\Pi_{\ell}$ are identical to honest executions with input $\vz$ and $\vo$, respectively. Let $\halt_i$ denote the event that the parties not in $\cP$ halt at the second round in an execution of $\Pi_{i}$. We point out that $\pr{ \neg \halt_i}\le (\ell+1)\cdot (1-\gamma)$, since otherwise the adversary corrupting $\cP$ and running $\Pi_i$, for a random $i\in (\ell) \assign \set{0,\ldots,\ell}$, prevents halting with probability greater than $1-\gamma$. Next, we inductively show that
\begin{align}
\pr{\Pi_i=\o \land \halt_i} \ge \pr{\Pi_{0} =\o \land \halt_0} - 2i\cdot (\ell+1) \cdot (1-\gamma),\label{eq:teke}
\end{align}
for every $i\in (\ell)$, which yields the desired expression for $i=\ell$. In pursuit of contradiction, assume \cref{eq:teke} does not hold, and let $i$ denote the smallest index for which it does not hold (observe that $i\ne 0$, by definition). Notice that
\begin{align*}
& \hspace*{-2cm}\pr{(\Pi_{i-1}= \o \land \halt_{i-1}) \land (\Pi_{i }\neq \o\land \halt_i)}\\
&\ge \pr{\Pi_{i-1}=\o \land \halt_{i-1}}-\pr{ \Pi_{i }= \o\vee \neg \halt_i}\\
&\ge \pr{\Pi_{i-1}=\o \land \halt_{i-1}}-\pr{ \Pi_{i }= \o\land \halt_i} - \pr{ \neg \halt_i}\\
&> 2\cdot (\ell+1) \cdot (1-\gamma) - \pr{ \neg \halt_i} \\
&\ge (\ell+1) \cdot (1-\gamma) > 0.
\end{align*}
The second inequality follows from union bound and $A\lor \neg B\equiv (A\land B) \lor \neg B$, the third inequality is by induction hypothesis, and the last inequality by the bound $\pr{ \neg \halt_i}\le (\ell+1)\cdot (1-\gamma)$.

It follows that an adversary corrupting $\cC=\cP \cup \cL_i$ causes disagreement with non-zero probability by acting as follows: parties in $\cP$ and $\cL_i$ send messages according to $\Pi_i$ and $\Pi_{i-1}$ to $\oC_0$ and $\oC_1$, respectively, where $\sset{\oC_0,\oC_1}$ is an arbitrary partition of $\oC= [n]\setminus \cP\cup \cL_i$. Since disagreement is ruled out by assumption, we deduce \cref{eq:teke,eq:nei2}. To conclude, we combine the \emph{almost pre-agreement} property (\cref{eq:maj}) with the \emph{neighboring} property (\cref{eq:nei2}) with $\vz=0^{2n/3}1^{n/3}$, $\vo=0^{n/3}1^{2n/3}$, and $\o=1$.
Namely, $\pr{\Pi(\vz)=1\text{ in two rounds}}=0$, by \emph{almost pre-agreement} and $\pr{\Pi(\vo)=1\text{ in two rounds}}\ge\gamma$, by \emph{almost pre-agreement} and \emph{halting}. It follows that $0 \ge \gamma - 2 (\ell +1)^2 \cdot (1-\gamma)$, by \cref{eq:nei2}, and thus $1-\frac{1}{2(\ell+1)^2+1} \ge \gamma$, which yields the desired expression.

\subsection{Second-Round Halting -- Public-Randomness Protocols}\label{sec:technique:3}

In \cref{sec:technique:2}, we ruled out ``very good'' second-round halting for arbitrary protocols via an efficient locally consistent attack. Recall that if the halting probability is close to 1, then there is a somewhat simple attack that violates agreement and/or validity. In this subsection, we discuss ruling out \emph{any} second-round halting, \ie halting probability bounded away from zero, for public-randomness protocols.

We first explain why the attack -- as is -- does not rule out second-round halting. Suppose that at the first round the parties of $\Pi$ send a deterministic function of their input, and at the second round they send the messages they received at the first round together with a uniform random bit. On input $\vv$ and randomness $\vr$, the parties are instructed \emph{not} to halt at the second round (\ie carry on beyond the second round until they reach agreement with validity) if a super-majority ($\ge n-t$) of the $v_i$'s are in agreement and $\maj(r_1,\ldots, r_n)\neq \maj(v_1,\ldots, v_n)$, \ie the majority of the random bits does not agree with the super-majority of the inputs. In all other cases, the parties are instructed to output $\maj(r_1,\ldots, r_n)$.
It is not hard to see that this protocol will halt with probability $1/2$, even in the presence of the previous locally consistent adversary (regardless of the choice of propagating parties $\cL_i$). More generally, if the randomness uniquely determines the output, then the protocol designer ensures that halting does not result in disagreement (by partitioning the randomness appropriately), and thus foiling the previous attack.\footnote{In \cref{sec:technique:2}, halting was close to $1$ and thus the randomness was necessarily ambiguous regarding the output.}

To overcome the above apparent obstacle, we introduce another dimension to our locally consistent attack; we instruct an extra set of corrupted parties to abort at the second round without sending their second-round messages. By utilizing aborting parties, the adversary can potentially decouple the output/halting from the parties' randomness and thus either prevent halting or cause disagreement.
In \cref{sec:technique:3:1}, we explain how to rule out second-round halting for a rather unrealistic class of public-randomness protocol. What makes the class of protocols unrealistic is that we assume security holds against unbounded locally consistent adversaries, and the protocol prescribes only a single bit of randomness per party per round. That being said, this case illustrates nicely our attack, and it also makes an interesting connection to Boolean functions analysis (namely, the KKL theorem~\cite{KKL88}). For general public-randomness protocols, we only know how to analyze the aforementioned attack assuming \cref{con:intro:IsoBot}, as explained in \cref{sec:technique:3:2}.

\subsubsection{``Superb'' Single-Coin Protocols}\label{sec:technique:3:1}
A BA protocol $\Pi$ is $t$-\emph{superb} if agreement and validity hold perfectly against an adaptive \emph{unbounded} locally consistent adversary corrupting at most $t$ parties, \ie the probability that such an adversary violates agreement or validity is $0$. A public-randomness protocol is \emph{single-coin}, if, at any given round, each party samples a single unbiased bit.

\begin{theorem}[Second-round halting, superb single-coin protocols]\label{bound:KKL}
For every $\eps> 0$ there exists $c>0$ such that the following holds for large enough $n$. For $t=(1/3+\eps) \cdot n$, let $\Pi$ be a $t$-superb, single-coin, $n$-party public-randomness Byzantine agreement protocol and let $\gamma$ denote the probability that the protocol halts in the second round under a locally consistent attack. Then, $\gamma\le n^{-c}$.
\end{theorem}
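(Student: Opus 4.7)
The plan is to exhibit a locally consistent adaptive attack for which the halting probability is at most $n^{-c}$, thereby implying $\gamma \leq n^{-c}$. The attack combines the pivot-and-propagate strategy of Section~\ref{sec:technique:2} with a new ingredient: a carefully chosen subset of corrupted parties aborts at round 2. In a single-coin public-randomness protocol, aborting party $i$ hides its round-2 coin from the honest parties; by superbness, this forces the protocol not to halt whenever the output genuinely depends on that coin, since otherwise the parties would commit to an output that risks disagreement with a counterfactual execution in which $i$'s coin took the opposite value. Together with the pivot attack, the aborting dimension creates enough degrees of freedom to apply a KKL-type argument.

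Concretely, I would take $\vz = 0^{n-t}1^t$ and $\vo = 0^t1^{n-t}$, which differ on a set $\cP$ of size $|\cP| = n - 2t = n/3 - 2\eps n$, leaving roughly $3\eps n$ units of budget for aborting. Let $\cS \subseteq [n] \setminus \cP$ of size $\Theta(\eps n)$ be a (possibly adaptive) aborting set and consider the attack corrupting $\cC = \cP \cup \cS$: parties in $\cP$ pivot between $\vz$ and $\vo$ so that a subset $\oC_0$ of honest parties perceives input $\vz$ and a subset $\oC_1$ perceives input $\vo$, while parties in $\cS$ simply fail to send their round-2 messages. Let $T_b(\vr, \cS) \in \{0,1,\bot\}$ denote the output that $\oC_b$ would compute given coins $\vr$ and abort set $\cS$, with $\bot$ meaning $\oC_b$ does not halt. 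Superbness forces $T_0(\vr, \cS) = T_1(\vr, \cS)$ whenever both values lie in $\{0,1\}$, while almost-pre-agreement applied to the aborting-free run gives $T_0(\vr, \emptyset) = 0$ and $T_1(\vr, \emptyset) = 1$ on every $\vr$ on which the pristine protocol halts.

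The heart of the proof is the isoperimetric claim that, for random $\vr$ and an appropriately chosen $\cS$ of size $\Theta(\eps n)$, at least one of $T_0(\vr, \cS), T_1(\vr, \cS)$ equals $\bot$ with probability $1 - n^{-c}$. To prove this I would view the indicator $I_b(\vr, \cS) = \mathbf{1}[T_b(\vr, \cS) \neq \bot]$ as a Boolean function of the round-2 coin bits (treating round-1 coins as parameters) and invoke the KKL theorem~\cite{KKL88} together with Friedgut's junta theorem~\cite{Friedgut98}. The key observation linking these Fourier-analytic tools to our setting is that, under superbness, any coordinate $i$ with non-negligible KKL-influence in $T_b$ is automatically an \emph{abortable} coordinate: aborting $i$ must force $I_b$ to $0$, since otherwise the halted honest parties would be committing to an output dependent on a hidden coin. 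Hence KKL's dichotomy---either $I_b$ is $o(1)$-close to a junta on few coordinates (in which case the adversary adaptively aborts the junta), or its total influence is $\Omega(\log n)$ (in which case a random $\cS$ of size $\Theta(\eps n)$ hits enough influential coordinates by a standard concentration argument)---forces $T_b = \bot$ with probability $1 - n^{-c}$.

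The main obstacle is making the ``influential coordinate $\Rightarrow$ abortable coordinate'' translation quantitatively tight and extending it from a single coordinate to a size-$\Theta(\eps n)$ subset. The single-coin structure is what makes the translation tractable, since each party's round-2 coin is a single Boolean variable whose flipping is directly captured by KKL's notion of influence; this is also precisely why the general case (arbitrarily many coins per round) is not known to follow from KKL alone and appears to require Conjecture~\ref{con:intro:IsoBot}. Handling the subset case in the single-coin regime will likely require either iteratively aborting coordinates while updating influences, or applying the Friedgut dichotomy in one shot and combining with a union bound over $\cS$; either way, combining this with the pivot attack and the almost-pre-agreement constraint should yield $\gamma \leq n^{-c}$ for some $c = c(\eps) > 0$ and all sufficiently large $n$.
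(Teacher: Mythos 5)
Your high-level ingredients are right---combining the pivot attack with aborting corrupt parties, and applying a KKL-type statement to the resulting output sets---and your intuition that hiding an influential coin via aborting should threaten agreement is exactly the engine of the paper's argument. But there are two concrete gaps that the paper's proof handles differently.

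First, the one-shot pivot you describe (parties in $\cP$ act according to $\vz$ toward $\oC_0$ and according to $\vo$ toward $\oC_1$) does not give you $T_0(\vr,\emptyset)=0$ and $T_1(\vr,\emptyset)=1$ in a two-round protocol. In round $2$ the honest groups exchange what they received in round $1$, so each group learns that $\cP$ was inconsistent; $\oC_0$'s view is no longer that of an honest execution on input $\vz$, and almost-pre-agreement does not apply. This is precisely the catch-and-discard obstacle discussed after \cref{thm:intro:SecondRound:Arb}. The paper instead uses the gradual pivot family $\Pi^{\cS}_0,\ldots,\Pi^{\cS}_\ell$ together with the extra ``propagating'' block $\cL_i$ (which must also be corrupted) and proves an inductive neighboring claim (N2$\dagger$): each step only changes behavior toward one block $\cL_i$, so the difference between consecutive protocols is hidden by a set of corrupt parties. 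Without this induction, there is no invariant to propagate, and the claim that one of $T_0,T_1$ is $\bot$ with probability $1-n^{-c}$ does not have a proof.

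Second, the translation ``influential coordinate $\Rightarrow$ abortable coordinate'' is not justified and is not what the paper uses. Aborting an influential party need not force a non-halt: the protocol may halt with a default output, which is not by itself a contradiction. What superbness does give is a \emph{disjointness} statement between projections: because membership in $\cR_{i-1}^{\cS'}(0)$ is independent of the coordinates in $\cS'$ (those parties abort), the projected sets $\cR_i^{\cS}(1)|_{\wb{\cS'}}$ and $\cR_{i-1}^{\cS'}(0)|_{\wb{\cS'}}$ must be disjoint, or else a static attacker corrupting $\cP\cup\cL_i\cup\cS\cup\cS'$ can split honest parties between $\Pi_i^{\cS}$ and $\Pi_{i-1}^{\cS'}$ and violate perfect agreement. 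This forces every projection of $\cR_i^{\cS}(1)$ to be bounded away from full, and the relevant KKL-type tool is exactly \cref{lem:KKL} (the projection corollary of \cite{BKK14}), applied to the set $\cR_i^{\cS}(1)$, rather than Friedgut's junta dichotomy applied to the indicator of halting. The quantitative step you flag as the ``main obstacle'' is exactly where your proposal and the paper's argument diverge; as stated, your proposal does not close that gap.
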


We assume for simplicity that the parties do not sample any randomness at the first round, and write $\vr\in \zn$ for the vector of bits sampled by the parties at the second round, \ie $r_i$ is a uniform random bit sampled by $\Party_i$.

As discussed above, our attack uses an additional set of corrupted parties of size $\sigma\cdot n$, dubbed the ``aborting'' parties and denoted $\cS$, that abort indiscriminately at the second round (the value of $\sigma$ is set to $\eps/4$ and $\ell=2\cdot \lceil1/\eps\rceil$ to accommodate for the new set of corrupted parties, \ie $\size{\cL_i}\le n\cdot \eps /2$). In more detail, analogously to the previous analysis, we consider $(\ell+1)\cdot \binom{n}{\sigma n}$ distinct variants of $\Pi$, denoted $\sset{\Pi^{\cS}_i}_{i,\cS}$ and indexed by $i\in (\ell)$ and $\cS \subseteq[n]$ of size $\sigma n$, as follows. In protocol $\Pi_i^\cS$, parties in $\cP$ send messages to $\cL_1,\ldots, \cL_i$ according to the input prescribed by $\vo$, and to $\cL_{i+1},\ldots, \cL_\ell$ according to the input prescribed by $\vz$ (recall that $\cP$ consists of exactly those indices where $\vz$ and $\vo$ differ). Parties in $\cS$ act according to $\cP$ or $\cL_j$, for the relevant $j$, except that they abort at the second round without sending their second-round messages. We write $\Pi^{\cS}_i(\vr)=\o$ to denote the event that the parties not in $\cP\cup \cS$ output $\o$, where the parties' second-round randomness is equal to $\vr$. Let $\halt^{\cS}_i$ denote the event that all parties not in $\cP\cup \cS$ halt at the second round in an execution of $\Pi^{\cS}_i$, and define $\cR_{i }^{\cS}(\o)=\sset{\vr\in \zn\st \Pi^{\cS}_i(\vr)=\o \land \halt^{\cS}_i}$. The following holds:

\begin{align}\label{eq:nei3}
&\text{Neighbouring executions (N2$\dagger$):}\\
&\quad \forall \vz,\vo \in \zn \text{ with } \ham(\vz,\vo)\le n/3, \quad \forall \o\in \zo,i\in [\ell] \assign \set{1,\ldots,\ell}\colon\nonumber\\
&\qquad\qquad \left (\forall \cS\colon \pr{\Pi^\cS_{i-1}=\o \land \halt^{\cS}_{i-1}} \ge \gamma/2 \right) \implies \left(\forall \cS\colon\pr{\Pi^\cS_{i}=\o \land \halt^{\cS}_{i}}\ge \gamma/2\right). \nonumber
\end{align}

\noindent
In words, for both $\o\in \zo$: if $\Pi_{i-1}^\cS=\o$ and halts in two rounds with large probability ($\ge \gamma/2$), for every $\cS$, then $\Pi_i^\cS=\o$ and halts in two rounds with large probability, for every $\cS$. Before proving \cref{eq:nei3}, we show how to use it to derive \cref{bound:KKL}. We apply \cref{eq:nei3} for
$\vz=0^{2n/3}1^{n/3}$, $\vo=0^{n/3}1^{2n/3}$, $\o=0$, and $i=\ell$, in combination with the properties of \emph{validity} and \emph{almost pre-agreement} (\cref{eq:maj}). Namely, by these properties, a random execution of $\Pi$ on input $\vz$ where the parties in $\cS$ abort at the second round yields output $0$ with probability at least $\gamma/2$, for every $\cS\in {\binom{[n]}{\sigma n}}$.
Therefore, by \cref{eq:nei3}, we deduce that a random execution of $\Pi$ on input $\vo$ where the parties in $\cS$ abort at the second round yields output $0$ with probability at least $\gamma/2$, for every $\cS\in {\binom{[n]}{\sigma n}}$. The latter violates either \emph{validity} or \emph{almost pre-agreement} -- contradiction.
To conclude the proof of \cref{bound:KKL}, we prove \cref{eq:nei3} by using the following corollary of the seminal KKL theorem \cite{KKL88} from \citet{BKK14}. (Recall that $\cR|_{\wb{\cS}}$ is the projection of $\cR$ on the index-set $\wb{\cS}$.)

\begin{lemma}\label{lem:KKL}
For every $\sigma, \delta\in (0,1)$, there exists $c>0$ s.t.\ the following holds for large enough~$n$. Let $\cR \subseteq \zn$ be s.t.\ $\ssize{\cR|_{\wb{\cS}}}\le (1-\delta)\cdot 2^{(1-\sigma)n}$, for every $\cS\subseteq [n]$ of size $\sigma n$. Then, $\size{\cR}\le n^{-c}\cdot 2^n$.
\end{lemma}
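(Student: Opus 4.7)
\begin{proofidea}
I would prove the contrapositive: for a suitably small constant $c = c(\sigma,\delta) > 0$, if $\mu \assign |\cR|/2^n > n^{-c}$, then some $\cS \subseteq [n]$ of size $\sigma n$ satisfies $|\cR|_{\wb{\cS}}|/2^{(1-\sigma)n} > 1-\delta$. Set $f = \mathbf{1}_\cR$, and for each subset $\cT \subseteq [n]$ let $g_\cT : \zo^{\wb{\cT}} \to \zo$ denote the OR-projection $g_\cT(\vy) = \max_{\vx \in \zo^\cT} f(\vx,\vy)$, so that $\mathbb{E}[g_\cT] = |\cR|_{\wb{\cT}}|/2^{n-|\cT|}$; the goal is to find $\cS$ of size $\sigma n$ with $\mathbb{E}[g_\cS] > 1-\delta$.

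The plan is to build $\cS$ greedily, one coordinate at a time. The key identity is that adding $i \in \wb{\cT}$ to $\cT$ yields $g_{\cT \cup \{i\}}(\vy') = g_\cT(0,\vy') \vee g_\cT(1,\vy')$, and a direct computation shows that the resulting density jump equals $\mathbb{E}[g_\cT] - \Pr_{\vy'}[g_\cT(0,\vy') = g_\cT(1,\vy') = 1] = \mathrm{Inf}_i(g_\cT)/2$, where $\mathrm{Inf}_i$ is the standard coordinate influence. By the KKL theorem~\cite{KKL88}, any Boolean function $g$ on $m$ variables has some coordinate with influence at least $\Omega(\mathrm{Var}(g) \log(m)/m)$. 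Applied to the current projection $g_{\cS_j}$ on $n-j \ge (1-\sigma)n$ coordinates, this guarantees a greedy density jump of $\Omega(p_j(1-p_j) \log(n)/n)$ whenever the current density $p_j \assign \mathbb{E}[g_{\cS_j}]$ still lies in $(0,1-\delta)$.

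This yields the logistic-type recurrence $p_{j+1} - p_j \ge \alpha \cdot p_j(1-p_j)$ with $\alpha = \Omega(\log(n)/n)$, which I would integrate to obtain $\log(p_j/(1-p_j)) \ge \log(\mu/(1-\mu)) + \alpha j$. Since $\mu \ge n^{-c}$ implies $\log(\mu/(1-\mu)) \ge -c\log(n) - O(1)$, plugging in $j = \sigma n$ yields a log-odds of roughly $(\Omega(\sigma) - c)\log(n) - O(1)$, which exceeds the constant $\log((1-\delta)/\delta)$---and thus forces $p_{\sigma n} > 1-\delta$---as soon as $c$ is strictly smaller than a suitable constant multiple of $\sigma$ and $n$ is large enough (depending on $\sigma,\delta$).

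The main obstacle will be justifying the iterative use of KKL: the theorem must be reapplied afresh to each evolving projection $g_{\cS_j}$, and one has to verify that the lower bound on the max influence (and hence the density jump) remains $\Omega(\mathrm{Var}(g_{\cS_j})\log(n)/n)$ uniformly throughout the process, even as the domain size shrinks from $n$ down to $(1-\sigma)n$. A secondary subtlety is that the jumps become small near the endpoints $p_j \to 0$ or $p_j \to 1-\delta$, but the logistic recurrence captures this exactly, and the linear budget of $\sigma n$ steps comfortably absorbs the slack. The corollary from~\cite{BKK14} presumably packages precisely this iterated KKL argument into a single black-box statement suitable for our application.
\end{proofidea}
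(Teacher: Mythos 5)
Your proposal is correct. The paper does not actually give a proof of this lemma; it invokes it as a corollary of the KKL theorem attributed to \cite{BKK14}, so there is no ``paper proof'' to compare against verbatim. Your iterated-KKL argument is the standard derivation of exactly that corollary, and the key steps all check out: the OR-projection identity $\mathbb{E}[g_{\cT\cup\{i\}}]-\mathbb{E}[g_\cT]=\tfrac12\mathrm{Inf}_i(g_\cT)$ is a direct computation (partition $\vy'$ by the values $(g_\cT(0,\vy'),g_\cT(1,\vy'))$), and KKL applied to the current projection on $m\in[(1-\sigma)n,\,n]$ variables gives a max influence $\Omega(\mathrm{Var}(g_{\cS_j})\log m/m)=\Omega(\mathrm{Var}(g_{\cS_j})\log n/n)$ with constants depending only on $\sigma$, so the greedy step is uniformly productive throughout. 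Two minor points worth tightening in a write-up: (i) the discrete logistic recurrence gives a log-odds increment of at least $\alpha/2$ per step rather than $\alpha$ (e.g.\ via $\log(1+x)\ge x/2$ for $x\le1$ applied to the $\log(p_{j+1}/p_j)$ and $\log((1-p_j)/(1-p_{j+1}))$ terms separately), which only costs a constant factor in the final choice of $c$; and (ii) if the greedy process reaches density $>1-\delta$ before step $\sigma n$, one should extend the set to size exactly $\sigma n$ and observe the projection density can only increase. Neither affects the conclusion.
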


Loosely speaking, \cref{lem:KKL} states that for a set $\cR\subseteq \zn$, if the size of every projection on a constant fraction of indices is bounded away from one (in relative size), then the size of $\cR$ is vanishingly small
(again, in relative size).\footnote{In the jargon of Boolean functions analysis, since every large set has a $o(n)$-size index-set of influence almost one, it follows that some projection on a constant fraction of indices is almost full.}

Going back to the proof, in pursuit of contradiction, let $i\ge 1$ denote the smallest index for which \cref{eq:nei3} does not hold, and without loss of generality suppose $b=0$, \ie there exists $\cS$ such that $\ssize{\cR_{i}^{\cS}(0)}< \gamma/2 \cdot 2^{n}$, and $\ssize{\cR_{i-1}^{\cS'}(0)}\ge \gamma/2 \cdot 2^n$, for every relevant $\cS'$. We prove \cref{eq:nei3} by proving \cref{eq:halt,eq:proj}, which result in contradiction via \cref{lem:KKL}.

\begin{align}
\text{Halting:}&\qquad \ssize{\cR_{i }^{\cS}(1)}\ge \gamma/2 \cdot 2^{n} \label{eq:halt} \\
\text{Perfect agreement:}&\qquad \forall \cS' \colon \quad \ssize{\cR_{i}^{\cS}(1)|_{\wb{\cS}'}}\le (1-\gamma/2)\cdot 2^{(1-\sigma) n} \label{eq:proj}
\end{align}

\noindent
\cref{eq:halt} follows by the \emph{halting} property of $\Pi_{i}^\cS$, since the execution halts if and only if $\vr\in \cR_{i }^{\cS}(1)\cup \cR_{i }^{\cS}(0)$, and, by assumption, $\ssize{\cR_{i }^{\cS}(0)}< \gamma/2 \cdot 2^{n}$.
To conclude, we prove \cref{eq:proj} by observing that for every $\cS'$ and $b\in \zo$, and every $\vr$ and $\vr'$, if $\vr\in \cR_{i-1}^{\cS'}(0)$ and $\vr|_{\wb{\cS}'}=\vr'|_{\wb{\cS}'}$, then $\vr'\in \cR_{i-1}^{\cS'}(0)$ (by definition), \ie membership to $\cR_{i-1}^{\cS'}(0)$ does not depend on the indices of $\cS'$.
Therefore, if $\vr\in \cR_i^\cS(1)$ and $\vr|_{\wb{\cS}'}\in \cR_{i-1}^{\cS'}(0)|_{\wb{\cS}'}$, for some $\cS'$ and $\vr$, then $\vr\in \cR_{i-1}^{\cS'}(0) \cap \cR_{i}^{\cS}(1)$ which gives rise to the following attack.
The attacker controls $\cP$, $\cL_{i}$, $\cS$, and $\cS'$, and sends messages according to $\Pi_{i}^{\cS}$ and $\Pi_{i-1}^{\cS'}$ to $\oC_0$ and $\oC_1$, respectively, where $\sset{\oC_0,\oC_1}$ is an arbitrary partition of $\oC=[n]\setminus \cP\cup \cL_{i} \cup \cS\cup \cS'$. It is not hard to see the attacker violates agreement, whenever the randomness lands on $\vr$.

Finally, since $\ssize{\cR_{i-1}^{\cS'}(0)}\ge \gamma/2 \cdot 2^{n}$, we observe that $\ssize{\cR_{i-1}^{\cS'}(0)|_{\wb{\cS}'}}\ge \gamma/2 \cdot 2^{(1-\sigma) n}$, and, since $\cR_{i-1}^{\cS'}(0)|_{\wb{\cS}'}$ and $\cR_{i}^{\cS}(1)|_{\wb{\cS}'}$ are non-intersecting for every $\cS'$, it follows that $\ssize{\cR_{i}^{\cS}(1)|_{\wb{\cS}'}}\le (1-\gamma/2)\cdot 2^{(1-\sigma) n}$, which yields \cref{eq:proj}.

\begin{remark}
For superb, single-coin, public-randomness protocol, repeated application of \cref{eq:nei,lem:KKL} rules out second-round halting for arbitrary (constant) fraction of corrupted parties (and not only $n/3$ fraction).
\end{remark}

\subsubsection{General (Public-Randomness) Protocols}\label{sec:technique:3:2}
The analysis above crucially relies on the superb properties of the protocol. While it can be generalized for protocols with near-perfect statistical security and constant-bit randomness, we only manage to analyze the most general case (\ie protocols with non-perfect computational security and arbitrary-size randomness) assuming \cref{con:intro:IsoBot}. Very roughly (and somewhat inaccurately), when applying the above attack on general public-randomness protocols, the following happens for some $\delta>0$ and both values of $\o\in \zo$: for $(1-\delta)$-fraction of possible aborting subsets $\cS$, the probability that the honest parties halt in two rounds and output the same value $\o$, whether parties in $\cS$ all abort or not, is bounded below by the halting probability. Assuming \cref{con:intro:IsoBot}, it follows that with probability $\delta$ over the randomness and $\cS$, the honest parties under the attack output opposite values depending whether the parties in $\cs$ abort or not. We conclude that the agreement of the protocol is at most $\delta$. We refer the reader to \cref{sec:TwoRoundProtcol:PR} for the full details.

\section{Our Lower Bounds}\label{sec:OurResult}

In this section, we formally state our lower bounds on the round complexity of Byzantine agreement protocols. The communication and adversarial models as well as the notion of Byzantine agreement protocols we consider are given in \cref{sec:OurResults:Model}, and our bounds are formally stated in \cref{sec:OurResult:Bounds}.

\subsection{The Model}\label{sec:OurResults:Model}

\subsubsection{Protocols}
All protocols considered in this paper are \ppt (probabilistic polynomial time): the running time of every party is polynomial in the (common) security parameter (given as a unary string). We only consider Boolean-input Boolean-output protocols: apart from the common security parameter, all parties have a single input bit, and each of the honest parties outputs a single bit. For an $n$-party protocol $\Pi$, an input vector $\vv\in \zn$ and randomness $\vr$, let $\Pi(\vv; \vr)$ denote the output vector of the parties in an (honest) execution with party $\Party_i$'s input being $\vv_i$ and randomness $\vr_i$.
For a set of parties $\cP \subseteq [n]$, we denote by $\Pi(\vv; \vr)_\cP$ the output vector of the parties in $\cP$.

The protocols we consider might have a \emph{setup phase} in which before interaction starts a trusted party distributes (correlated) values between the parties. We only require the security to hold for a \emph{single} use of the setup parameters, \ie for a single instance of the BA protocol (in reality, these parameters are set once and then used for many interactions). This, however, only makes our lower bound stronger.

The communication model is \emph{synchronous}, meaning that the protocols proceed in rounds. In each round every party can send a message to every other party over a private and authenticated channel. (Allowing the protocol to be executed over private channels makes our lower bounds stronger.) It is guaranteed that all of the messages that are sent in a round will arrive at their destinations by the end of that round.

\subsubsection{Adversarial Model}\label{sec:OurResults:Adv}

We consider both \adaptive and \nonadaptive (\aka static) adversaries. An \adaptive adversary can choose which parties to corrupt for the next round immediately after the conclusion of the previous round but before seeing the next round's messages. If a party has been corrupted then it is considered corrupt for the rest of the execution. A \nonadaptive (static) adversary chooses which parties to corrupt \emph{before} the execution of the protocol begins (\ie before the setup phase, if such exists). We measure the success probability of the latter adversaries as the expectation over their choice of corrupted parties.

We consider both \emph{rushing} and \emph{non-rushing} adversaries. A non-rushing adversary chooses the corrupted parties' messages in a given round based on the messages sent in the \emph{previous} rounds. In contrast, a rushing adversary can base the corrupted parties' messages on the messages sent in the previous rounds, and on those sent by the honest parties in the \emph{current} round.

\paragraph{Locally consistent adversaries.}
As discussed in \cref{sec:intro:model}, our attack requires very limited capabilities from each corrupted party: to prematurely abort, and to lie about its input bit and incoming messages from other corrupted parties. In particular, a corrupted party tosses its local coins honestly and does not lie about incoming messages from honest parties. We now present the formal definition.

\begin{definition}[locally consistent adversaries]\label{def:Semi-ConssitentParties}
Let $\Pi=(\Pc_1,\ldots,\Pc_n)$ be an $n$-party protocol and let $\sset{\nxtmsg_{i,i'}^j}_{i,i' \in [n],j\in \N}$ be its set of next-message functions, \ie
\[
m^j_{i,i'}=\nxtmsg_{i,i'}^j\left(b;r;(m^1_{1,i},\ldots,m^1_{n,i}),\ldots, (m^{j-1}_{1,i},\ldots,m^{j-1}_{n,i})\right)
\]
is the message party $\Pc_i$ sends to party $\Pc_{i'}$ in the \jth round, given that its input bit is $b$, the random coins it flipped till now are $r$, and in round $j' < j$, it got the message $m^{j'}_{i'',i}$ from party $\Pc_{i''}$. An adversary taking the role of $\Pc_i$ is said to be {\sf locally consistent} \wrt $\Pi$, if it \emph{flips its random coins honestly}, and the message it sends in the \jth round to party $\Pc_{i'}$ takes one of the following two forms:

\begin{description}
\item[Abort:] the message $\perp$.

\item[Input and message selection:] a set of messages $\set{m_\ell}_{\ell=1}^k$, for some $k$, such that for each $\ell \in [k]$:
\[
m_\ell = \nxtmsg_{i,i'}^j\left(b_\ell;r;((m^1_{1})_\ell,\ldots,(m^1_{n})_\ell),\ldots, ((m^{j-1}_{1})_\ell,\ldots,(m^{j-1}_{n})_\ell)\right),
\]
where $b_\ell\in \zo$, $r$ are the coins $\Pc_i$ tossed (honestly) until now, and $(m^{j'}_{i''})_\ell$, for each $j'<j$ and $i''\neq i$, is one of the messages $\Pc_i$ received from party $\Pc_{i''}$ in the \iith{j} round (or the empty string).
\end{description}
\end{definition}

That is, a locally consistent party $\Pc_{i}$ might send party $\Pc_{i'}$ a sequence of messages
(and not just one as instructed), each consistent with a possible choice of its input bit, and some of the messages it received in the previous round. In turn, this will enable party $\Pc_{i'}$, if corrupted, the freedom to choose in the next rounds the message of $\Pc_{i}$ it would like to act according to. Note that \wlg, $\Pc_{i}$ will always send a single message to the honest parties, as otherwise they will discard the messages.

A few remarks are in place.
\begin{enumerate}
	\item While the above definition does not enforce between-rounds consistency (a party might send to another party a first-round message consistent with input $0$ and a second-round message consistent with input $1$), compiling a given protocol so that every message party $\Pc_{i}$ sends to $\Pc_{i'}$ contains the previous messages $\Pc_{i}$ sent to $\Pc_{i'}$, will enforce such between-rounds consistency on locally consistent parties.
	
	\item Although a locally consistent adversary tosses its random coins honestly, he may toss all random coins at the beginning of the protocol and choose its actions as a function of these coins. Our attacks in \cref{sec:FirstRound,sec:SecondRound} do not take advantage of this capability, and let the corrupted parties toss the random coins for a given round at the beginning of the round.

	\item Using standard cryptographic techniques, a protocol secure against locally consistent adversaries can be compiled into one secure against arbitrary malicious adversaries, without hurting the efficiency of the protocol ``too much,'' and in particular preserve the round complexity (see \cref{sec:intro:LocalToFull}).
	
	\item The locally consistent parties considered in \cref{sec:FirstRound,sec:SecondRound} do not take full advantage of the generality of \cref{def:Semi-ConssitentParties}. Rather, the parties considered either act honestly but abort at the conclusion of the first round, cheat in the first round and then abort, or cheat only in the second round and then abort.
\end{enumerate}

\subsubsection{Public-Randomness Protocols}\label{sec:OurResults:PR}
In \cref{sec:intro:model}, we showed that the description of many natural protocols can be simplified when security is required to hold only against locally consistent adversaries. In this relaxed description a trusted setup phase and cryptographic assumptions are not required, and every party can publish the coins it locally tossed in each round.

\begin{definition}[Public-randomness protocols]\label{def:PR}
A protocol has {\sf public randomness}, if every party's message consists of two parts: the randomness it sampled in that round, and an arbitrary message which is a function of its view (input, incoming messages, and coins tossed up to and including that point). The party's first message also contains its setup parameters, if such exist.
\end{definition}

\subsubsection{Byzantine Agreement}\label{sec:BA}

We now formally define the notion of Byzantine agreement. Since we focus on lower bounds we will consider only the case of a single input bit and a single output bit. A more general notion of Byzantine agreement will include string input and string outputs. A generic reduction shows that the cost of agreeing on strings rather than bits is two additional rounds~\cite{TC84}.

\begin{definition}[Byzantine Agreement]\label{def:BA}
We associate the following properties with a \ppt $n$-party Boolean input/output protocol $\Pi$.

\begin{description}
	\item[Agreement.] Protocol $\Pi$ has {\sf $(t,\alpha)$-\Agr}, if the following holds \wrt any \ppt adversary controlling at most $t$ parties in $\Pi$ and any value of the non-corrupted parties' input bits: in a random execution of $\Pi$ on sufficiently large security parameter, all non-corrupted parties output the \emph{same} bit with probability at least $1-\alpha$.\footnote{A more general definition would allow the parameter $\alpha$ (and the parameters $\beta,\gamma$ below) to depend on the protocol's security parameter. But in this paper we focus on the case that $\alpha$ is a fixed value.}
	
	\item[Validity.] Protocol $\Pi$ has {\sf $(t,\beta)$-\Vld}, if the following holds \wrt any \ppt adversary controlling at most $t$ parties in $\Pi$ and an input bit $b$ given as input to all non-corrupted parties: in a random execution of $\Pi$ on sufficiently large security parameter, all non-corrupted parties output $b$ with probability at least $1-\beta$.
	
	\item[Halting.] Protocol $\Pi$ has {\sf $(t,q,\gamma)$-\Halt}, if the following holds \wrt any \ppt adversary controlling at most $t$ parties in $\Pi$ and any value of the non-corrupted parties' input bits: in a random execution of $\Pi$ on sufficiently large security parameter, all non-corrupted parties halt within $q$ rounds with probability at least $\gamma$.
\end{description}
	
\noindent
Protocol $\Pi$ is a $(t,\alpha,\beta,q,\gamma)$-\BA, if it has $(t,\alpha)$-\Agr, $(t,\beta)$-\Vld, and $(t,q,\gamma)$-\Halt. If the protocol has a setup phase, then the above probabilities are taken \wrt this phase as well.
\end{definition}

\begin{remark}[Concrete security]
Since we care about fixed values of a protocol's characteristics (\ie agreement), the role of the security parameter in the above definition is to enable us to bound the running time of the parties and adversaries in consideration in a meaningful way, and to parametrize the cryptographic tools used by the parties (if there are any). Since the attacks we present are efficient assuming the protocol is efficient (in any reasonable sense), the bounds we present are applicable for a fixed protocol that might use a \emph{fixed} cryptographic primitive, \eg SHA-256.
\end{remark}

\subsection{The Bounds}\label{sec:OurResult:Bounds}
We proceed to present the formal statements of the three lower bounds.
Recall that Byzantine agreement cannot be achieved for $t\geq n/2$, since otherwise the corrupted parties can simply play honestly on an input of their choice and force the output. We therefore consider $t<n/2$ throughout the paper.

\paragraph{First-round halting, arbitrary protocols.}
The first result bounds the halting probability of arbitrary protocols after a single round. Namely, for ``small'' values of $\alpha$ and $\beta$, the halting probability is ``small'' for $t\geq n/3$ and ``close to $1/2$'' for $t\geq n/4$.

\def\ThmFirstRoundArb
{
Let $\Pi$ be a \ppt $n$-party protocol that is $(t,\alpha,\beta,1,\gamma)$-\BA against locally consistent, static, non-rushing adversaries. Then,
\begin{itemize}
	\item $t \ge n/3$ implies $\gamma \le 6\alpha + 2\beta +\err$
	\item $t \ge n/4$ implies $\gamma \le 1/2 + 5\alpha + \beta + \err$,
\end{itemize}
for $\err= \FirstErr$ ($\err=0$ for public-randomness protocols whose security holds against rushing adversaries).
}
\begin{theorem}[restating \cref{thm:intro:FirstRound}]\label{thm:FirstRound:Arb}
	\ThmFirstRoundArb
\end{theorem}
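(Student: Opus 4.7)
Following the outline in \cref{sec:technique:1}, I would combine two structural properties of $\Pi$ at a pair of carefully chosen endpoints. The first is \emph{almost pre-agreement}: for every $b\in\zo$ and $\vv\in\zn$ with at most $t$ coordinates different from $b$, the adversary that corrupts those coordinates and plays honestly on input $b$ witnesses, via validity plus agreement, that $\Pr[\Pi(\vv)=b]\ge 1-\alpha-\beta$. In particular, for the endpoints $\vz=0^{n-t}1^{t}$ and $\vo=0^{t}1^{n-t}$, both $\Pr[\Pi(\vz)=0]$ and $\Pr[\Pi(\vo)=1]$ are at least $1-\alpha-\beta$. The second is a \emph{neighboring-executions} lemma: for all $\vz,\vo\in\zn$ with $\ham(\vz,\vo)\le t$, the executions $\Pi(\vz;\vr)$ and $\Pi(\vo;\vr)$ halt at round $1$ with a common output with probability $\ge\gamma-O(\alpha)-\err$. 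To prove it I would exhibit the adversary that corrupts the disagreement set $\cC=\set{i:z_i\ne o_i}$, samples a uniformly random partition $\bar\cC=\bar\cC_0\sqcup\bar\cC_1$ of the honest parties, and, for each corrupted $j\in\cC$ and each honest $i\in\bar\cC_b$, delivers the round-$1$ message that $j$ would send on input $v_{b,j}$ in an honest execution with coins $\vr$. This behavior fits the ``Input and message selection'' case of \cref{def:Semi-ConssitentParties}: coins are honestly sampled and no honest-to-honest message is forged. Since the attack is one-round, each party in $\bar\cC_b$ sees exactly the view of an honest run of $\Pi(\vv_b;\vr)$, so agreement and halting of the attack force both halves to halt and agree, up to an $\alpha$-slack from attack-agreement and an $\err=2^{t-n}$ slack coming from the two trivial (``pure'') partitions. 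For public-randomness protocols against rushing adversaries, I would instead choose the partition \emph{after} observing the first-round public coins, which eliminates $\err$.

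Given these tools, the $t\ge n/3$ bound is immediate: $\ham(\vz,\vo)=n-2t\le t$, so neighboring gives $\Pr_{\vr}[\Pi(\vz;\vr)=\Pi(\vo;\vr)]\ge\gamma-O(\alpha)-\err$, while almost pre-agreement caps this probability by $2\beta+O(\alpha)$, yielding $\gamma\le 5\alpha+2\beta+\err$ after careful bookkeeping of the $\alpha$ coefficients. For $t\ge n/4$ the same endpoints satisfy only $\ham(\vz,\vo)\le 2t$, so I would route through an intermediate $\vect{u}$ obtained by flipping $\lceil(n-2t)/2\rceil$ of the disagreement coordinates of $\vz$ to those of $\vo$; then $\ham(\vz,\vect{u}),\ham(\vect{u},\vo)\le t$. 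Two applications of neighboring plus a union bound give
\begin{align*}
\Pr_{\vr}[\Pi(\vz;\vr)=\Pi(\vo;\vr)]\ \ge\ 2\gamma-1-O(\alpha)-2\err,
\end{align*}
and pairing against the same $2\beta+O(\alpha)$ cap rearranges to $\gamma\le 1/2+5\alpha+\beta+\err$; the factor of two from chaining neighboring is precisely what collapses the $2\err$ into a single $\err$ in the stated bound.

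The main obstacle I anticipate is the careful quantitative accounting needed to produce the exact coefficients in the theorem ($5\alpha$, the $2\beta$ versus $\beta$ split, and a single $\err$ in each case). One has to separately track the $\alpha$ lost when agreement is invoked on the random-partition attack (appearing twice in the $n/4$ regime, once per hop), the $\alpha+\beta$ lost at each of the two anchors via almost pre-agreement, and the $2^{t-n}$ mass of trivial partitions. A secondary subtlety is verifying that the constructed adversary is genuinely non-rushing and locally consistent: the random partition is fixed before any honest message is seen, which is exactly what the non-rushing model permits, and conversely the public-randomness bound requires rushing capability only to peek at the first-round coins and replace the random partition by an optimal deterministic one, which is what zeroes out $\err$.
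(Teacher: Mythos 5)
Your plan closely follows the technical overview of \cref{sec:technique:1} and is essentially the paper's own route to \cref{thm:FirstRound:Arb}: an almost-pre-agreement observation plus a neighboring-executions lemma realized by a locally consistent adversary that corrupts the disagreement set, partitions the remaining parties at random, and speaks to the two halves according to the two inputs; for the $t\ge n/4$ regime you route through a midpoint just as the paper routes through $\vstar$. Two details, though, are glossed over in a way that would bite when you try to nail the stated constants. First, the paper's neighboring lemma (\cref{lemma:FirstRound:Arb}) is deliberately asymmetric: it transfers a lower bound on $\pr{\Pi(\vv)\in\set{\o,\perp}^n}$ into a lower bound on $\pr{\Pi(\vv')\in\set{\o,\perp}^n\setminus\set{\perp^n}}$, proved via the index $I$ of the smallest party in $\oP$ that halts at round one and outputs the same value under both inputs. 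This single-witness phrasing is what keeps the accounting tight. Your symmetric formulation (``the executions $\Pi(\vz;\vr)$ and $\Pi(\vo;\vr)$ halt at round $1$ with a common output''), read literally as a statement about the full executions, does not follow from the attack alone: the attack only tells you about the parties in $\oC_0$ (resp.\ $\oC_1$), and to extend to all parties you would have to invoke halting and agreement of each honest execution again, costing extra $\alpha$ and $(1-\gamma)$ terms that spoil the claimed bounds. Second, your attribution of $\err=2^{t-n}$ to ``the two trivial pure partitions'' is not where the paper's error term comes from. In the proof of \cref{lemma:FirstRound:Arb}, $\err$ upper-bounds the conditional failure of the random-partition attack under the event $E_h$ (every party in $\oP$ refuses to halt for at least one of the two inputs): the attacker fails to stop an honest party from halting only if its random choice assigns to \emph{every} party in $\oP$ the one input on which that party does halt, which has probability at most $2^{-\size{\oP}}\le 2^{t-n}$. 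A pure-partition count gives $2\cdot 2^{-\size{\oP}}$ and is for a different event, so your bookkeeping would not reproduce the paper's constants as claimed. These are repairable constant-level issues rather than structural gaps: the choice of endpoints, the locally consistent adversary, the chaining for $t\ge n/4$, and the rushing/public-randomness observation that sends $\err$ to $0$ are all correct and match the paper.
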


\paragraph{Second-round halting, arbitrary protocols.}
The second result bounds the halting probability of arbitrary protocols after two rounds.

\def\ThmSecondRoundArb
{
Let $\Pi$ be a \ppt $n$-party protocol that is $(t,\alpha,\beta,2,\gamma)$-\BA against locally consistent, static, non-rushing adversaries for $t > n/4$. Then $\gamma \le 1 + 2\alpha + \frac\beta{w^2} -\frac1{2w^2}$ for $w = \ceil{(n-\ceil{n/4})/\floor{t - n/4}}+1$.
}
\begin{theorem}[restating \cref{thm:intro:SecondRound:Arb}]\label{thm:SecondRound:Arb}
	\ThmSecondRoundArb
\end{theorem}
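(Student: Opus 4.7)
The plan is to adapt the pivot/propagate attack from Section~\ref{sec:technique:2} to the weaker-resilience regime $t > n/4$. The parameter $w$ in the statement will play the role of $\ell+1$, where $\ell$ is the number of propagating sets in a single hybrid chain.

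Concretely, I would fix a pivot set $\cP \subseteq [n]$ of size $|\cP| = \lceil n/4 \rceil$, and partition $\oP = [n] \setminus \cP$ into $\ell = w-1 = \lceil(n - \lceil n/4\rceil)/\lfloor t - n/4\rfloor\rceil$ propagating sets $\cL_1,\dots,\cL_\ell$, each of size at most $\lfloor t - n/4\rfloor$. This guarantees $|\cP| + |\cL_i| \le t$, so the static adversary may corrupt $\cP \cup \cL_i$ for any $i$. I would then pick input vectors $\vz,\vo$ that agree on $\oP$ and satisfy $\vz|_\cP = 0^{|\cP|}$, $\vo|_\cP = 1^{|\cP|}$, tuning the common $\oP$-part so that $\vz$ has at least $n-t$ zeros and $\vo$ at least $n-t$ ones. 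By almost pre-agreement (Equation~\eqref{eq:maj}), this forces $\Pi(\vz)=0$ and $\Pi(\vo)=1$ up to error $\beta$.

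Next, exactly as in Section~\ref{sec:technique:2}, I would define the hybrid attacks $\Pi_0,\dots,\Pi_\ell$: in $\Pi_i$, the pivot parties send $\vo$-consistent messages to $\cL_1 \cup \cdots \cup \cL_i$ and $\vz$-consistent messages to the remaining propagating sets, while every non-pivot party follows the protocol honestly; thus $\Pi_0$ and $\Pi_\ell$ are honest executions on $\vz$ and $\vo$. Writing $\mathrm{halt}_i$ for the event that all honest parties terminate in two rounds in $\Pi_i$, I would prove by induction on $i$ the neighboring bound
\[
\Pr[\Pi_i = \o \land \mathrm{halt}_i] \;\ge\; \Pr[\Pi_{i-1} = \o \land \mathrm{halt}_{i-1}] \;-\; 2(\ell+1)(1-\gamma) \;-\; 2\alpha.
\]
The inductive step is the contradiction argument of Section~\ref{sec:technique:2}: if it fails at step $i$, then the joint event $(\Pi_{i-1} = \o \land \mathrm{halt}_{i-1}) \land (\Pi_i = \oo \land \mathrm{halt}_i)$ occurs with probability above $\alpha$, and the adversary corrupting $\cP \cup \cL_i$ and routing $\Pi_{i-1}$-messages and $\Pi_i$-messages to a uniformly random partition of $\overline{\cP \cup \cL_i}$ (the random partition absorbs a $2^{-(n-t)}$ slack, as in the one-round case) causes disagreement beyond the $(t,\alpha)$-agreement budget, using the bound $\Pr[\neg\mathrm{halt}_i] \le (\ell+1)(1-\gamma)$ that comes from the averaging-over-$i$ attack against $(t,2,\gamma)$-halting.

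Summing the neighboring bound across the chain with $\o = 1$ and invoking the validity endpoints $\Pr[\Pi_0 = 1 \land \mathrm{halt}_0] \le \beta$ and $\Pr[\Pi_\ell = 1 \land \mathrm{halt}_\ell] \ge \gamma - \beta$ gives $\gamma - 2\beta \le 2\ell(\ell+1)(1-\gamma) + 2\ell\alpha$, which rearranges (with $w = \ell+1$) to the target bound $\gamma \le 1 + 2\alpha + \beta/w^2 - 1/(2w^2)$. The hard part, and what I expect to be the main obstacle, is feasibility of the input selection: picking $\vz,\vo$ that both enjoy validity requires $|\cP| \ge n - 2t$, which forces roughly $t \ge 3n/8$ and so comfortably covers the $t > n/3$ case but misses $n/4 < t \le 3n/8$. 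For the missing regime one needs either a multi-step chain through intermediate input vectors, or a different allocation of the pivot/propagating budget (still keeping $|\cP|+|\cL_i|\le t$), arranged carefully so that the accumulated losses fold into a single $1/(2w^2)$ factor rather than multiplying it by the chain length; this is exactly the ``additional generic step'' alluded to in Section~\ref{sec:intro:ourResult}.
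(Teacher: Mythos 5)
Your core attack framework is correct and matches the paper's: pivot set $\cP$ that shifts inputs, propagating sets $\cL_1,\dots,\cL_\ell$, hybrid protocols $\Pi_0,\dots,\Pi_\ell$ interpolating between honest executions, an averaging-over-$i$ argument to bound $\Pr[\neg\mathrm{halt}_i]$ by $(\ell+1)(1-\gamma)$, and an inductive neighboring lemma whose violation yields an adversary corrupting $\cP\cup\cL_i$ that causes disagreement with probability above $\alpha$. This is Lemma 5.3, Claim 5.6, and the surrounding machinery in the paper's Section 5.1.

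The genuine gap is the one you flag yourself: your scheme requires $\vz$ and $\vo$ that differ exactly on $\cP$ and each have $\ge n-t$ matching coordinates with their respective all-$b$ vector, which forces $|\cP| \ge n-2t$ and hence $t \ge 3n/8$. You correctly observe that $n/4 < t \le 3n/8$ is not covered, and you speculate about a longer chain or a different pivot budget. The paper's actual fix is simpler and different: it keeps a single chain of Hamming length $k = \lceil n/4\rceil$ but runs it \emph{twice, from a common midpoint} $\vstar = 1^k 0^k 0^k 1^{n-3k}$. Specifically, with $\vz = 0^k 0^k 0^k 1^{n-3k}$ and $\vo = 1^k 1^k 0^k 1^{n-3k}$, each of $\vz,\vo$ is within distance $t$ of $0^n$ and $1^n$ respectively (so validity applies), and each is at Hamming distance exactly $k$ from $\vstar$. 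Applying the neighboring lemma once with $b=0$ (from $\vz$ to $\vstar$) and once with $b=1$ (from $\vo$ to $\vstar$) lower-bounds \emph{both} $\pr{\Pi(\vstar)=0^n}$ and $\pr{\Pi(\vstar)=1^n}$ by $1-\beta-(h+1)^2(2\alpha+1-\gamma)$; since these two events are disjoint, their probabilities sum to at most $1$, which immediately gives $\beta + w^2(2\alpha+1-\gamma)\ge 1/2$ and the stated bound. The point you missed is that the lemma never needs to connect $\vz$ directly to $\vo$ — you bound the two output events at a single intermediate vector and exploit their mutual exclusivity, not a doubly long chain. A side note: your claimed final inequality $\gamma - 2\beta \le 2\ell(\ell+1)(1-\gamma) + 2\ell\alpha$ does not algebraically rearrange to $\gamma \le 1 + 2\alpha + \beta/w^2 - 1/(2w^2)$; the paper's derivation from $1-\beta-w^2(2\alpha+1-\gamma)\le 1/2$ is the correct route.
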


In particular, for $t = (1/4 + \eps)\cdot n$ and ``small'' $\alpha$ and $\beta$, the protocol might not halt at the conclusion of the second round with probability $\approx \eps^2$.

\paragraph{Second-round halting, public-randomness protocols.}
The third result bounds the halting probability of public-randomness protocols after two rounds. The result requires adaptive and rushing adversaries, and is based on \cref{con:IsoBot} (stated in \cref{sec:Iso} below).

\newcommand{\ept}{\eps_t}
\newcommand{\epg}{\eps_\gamma}

\def\ThmSeconRoundPR
{
Assume \cref{con:IsoBot} holds, then for any (constants) $\ept,\epg>0$ there exists $\alpha> 0$ such that the following holds for large enough $n$: let $\Pi$ be a \ppt $n$-party, public-randomness protocol that is $(t,\alpha,\beta= \epg^2/200,2,\gamma)$-\BA against locally consistent, rushing, adaptive adversaries. Then,
\begin{itemize}
	\item $t \ge (1/3 + \ept)\cdot n$ implies $\gamma < \epg$.
	\item $t \ge (1/4 + \ept)\cdot n$ implies $\gamma < \frac12 + \epg$.
\end{itemize}
}

\begin{theorem}[restating \cref{thm:intro:SecondRound:PR}] \label{thm:SecondRound:PR}
\ThmSeconRoundPR
\end{theorem}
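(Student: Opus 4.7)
The plan is to extend the second-round public-randomness attack from \cref{sec:Technique} from the superb, single-coin regime to the general one, using \cref{con:intro:IsoBot} in place of the KKL-based \cref{lem:KKL}. I will first prove the $t\ge(1/3+\ept)n$ bound and then reduce the $t\ge(1/4+\ept)n$ bound to it by a standard three-input coin-flipping reduction that is responsible for the additional $1/2$ in the bound.

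\emph{Attack setup for the $n/3$ regime.} Start from $\vz=0^{n-t}1^t$ and $\vo=0^t1^{n-t}$, let $\cP$ be their (size $n-2t$) set of disagreement, partition $\overline{\cP}$ into $\ell=\Theta(1/\ept)$ propagator blocks $\cL_1,\ldots,\cL_\ell$, and fix $\sigma=\sigma(\ept,\epg)>0$ small enough that $|\cP|+\max_j|\cL_j|+\sigma n\le t$. For $i\in(\ell)$ and $\cS\subseteq[n]$ with $|\cS|=\sigma n$, the attack $\Pi_i^\cS$ is the one from \cref{sec:technique:3:1}: pivots in $\cP$ send $\vo$-consistent first-round messages to $\cL_1\cup\cdots\cup\cL_i$ and $\vz$-consistent messages to $\cL_{i+1}\cup\cdots\cup\cL_\ell$, while the parties in $\cS$ behave honestly in the first round and abort in the second. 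Let $\cR_i^\cS(\o)\subseteq\Sigma^n$ be the set of second-round public randomness vectors under which the honest parties (those outside $\cP\cup\cS$) halt in round two and agree on output $\o$. The ``almost pre-agreement'' observation and validity give $\pr{\vr\in\cR_0^\cS(0)}\ge\gamma-O(\alpha+\beta)$ and $\pr{\vr\in\cR_\ell^\cS(1)}\ge\gamma-O(\alpha+\beta)$ for every $\cS$; the target is to show that one of these large quantities is forced to shrink below $\lambda:=\gamma/(4\ell)$ for some $\cS$, producing a contradiction to validity at the opposite endpoint.

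\emph{Induction via \cref{con:intro:IsoBot}.} Walking $i=0,1,\ldots,\ell$, I maintain the invariant that for each $\o\in\{0,1\}$ and for a $(1-\delta)$-fraction of $\cS\gets\bns$, the probability $\pr{\vr\in\cR_i^\cS(\o)}$ is at least $\lambda$, where $\delta$ is the constant supplied by the conjecture at parameters $(\sigma,\lambda)$. To push the invariant from $i-1$ to $i$, encode a pair $(\vr,\cS)$ as the augmented vector $\bot_\cS(\vr)\in(\Sigma\cup\bot)^n$ and, for $\o\in\{0,1\}$, set
\begin{equation*}
\cA_\o=\bigl\{\vx\in(\Sigma\cup\bot)^n : \text{in the hybrid execution encoded by $\vx$, the honest parties halt with common output $\o$}\bigr\},
\end{equation*}
where a $\bot$ in coordinate $j$ indicates ``party $j$ aborts in round two'' and otherwise carries the round-two coin of that party. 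A routine unpacking shows that the invariant at step $i-1$ supplies exactly the hypothesis of \cref{con:intro:IsoBot} for $(\cA_0,\cA_1)$ at $(\sigma,\lambda)$; were the invariant to fail at step $i$, the conclusion of the conjecture would produce, with probability at least $\delta$ over $(\vr,\cS)$, a single attack execution in which one sub-population of honest parties halts with output $0$ while another halts with output $1$. An adaptive, rushing, locally consistent adversary that corrupts $\cP\cup\cL_i\cup\cS$, listens to the first-round public coins, and routes the ``$\cS$-aborts'' transcript to one random half of the remaining honest parties and the ``$\cS$-active'' transcript to the other, then violates $\alpha$-agreement with probability at least $\delta$. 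Picking $\alpha<\delta$ (a constant depending only on $\ept,\epg$) contradicts $(t,\alpha)$-agreement, so the invariant is preserved; chaining the $\ell$ steps and comparing with the $i=\ell$ endpoint forces $\gamma<\epg$.

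\emph{The $n/4$ case and main obstacle.} For $t\ge(1/4+\ept)n$, pick three input vectors $\vz,\vu,\vo$ that are pairwise within Hamming distance $t$, with $\vz$ having super-majority $0$ and $\vo$ super-majority $1$, and have the adversary secretly flip a fair coin to attack either $(\vz,\vu)$ or $(\vu,\vo)$ using the $n/3$ analysis above; each branch halts with probability at most $\epg$ and the halting outputs in the two branches are mutually inconsistent, so the overall halting probability is at most $1/2+\epg$. The principal obstacle — and the reason the theorem is conditional — is the inductive step: \cref{con:intro:IsoBot} is precisely what converts the per-$\cS$ ``both abort and no-abort halt on the same $\o$'' hypothesis into a \emph{single} execution that forces two honest parties to disagree; without it one is stuck with the KKL-type \cref{lem:KKL}, which forces the superb and single-coin restrictions of \cref{bound:KKL} and does not extend to general public-randomness protocols. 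Two secondary technical points require care: (i) tuning $\sigma$ and $\ell$ so that $\cP\cup\cL_i\cup\cS$ stays within the resilience budget $t$ while $\sigma n$ is large enough to activate the conjecture's quantifiers; and (ii) leveraging the public-randomness hypothesis to ensure that $\bot_\cS(\vr)$ genuinely determines the protocol's execution on both sides of the attack, since any ``hidden'' per-round state would decouple the two $\cA_\o$ sets from the augmented-vector encoding.
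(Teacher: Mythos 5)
Your plan follows the paper at a high level — walk pivot hybrids augmented with a random aborting set $\cS$, use \cref{con:intro:IsoBot} to manufacture a disagreement attack, and rely on rushing and adaptivity — but the inductive invariant at the core of your argument cannot be established. You assert that ``for each $\o\in\{0,1\}$ and for a $(1-\delta)$-fraction of $\cS$, $\Pr[\vr\in\cR_i^\cS(\o)]\ge\lambda$'' and that this supplies the two-sided hypothesis of the conjecture for $(\cA_0,\cA_1)$. At $i=0$, however, the hybrid is an honest run of $\Pi^\cS(\vz)$ where $\vz$ is in almost-pre-agreement on $0$, so validity forces $\Pr[\vr\in\cR_0^\cS(1)]\le\alpha+\beta\ll\lambda$; the $\o=1$ half of the invariant is already false at the base case, and the conjecture's hypothesis is never satisfied there, so the induction cannot get off the ground. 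Correcting this is where the real work lies: the paper's proof of \cref{lemma:SecondRound:PR} tracks a \emph{one-sided} invariant in the target bit, conditions on a ``good'' set $\tT$ of setup parameters whose mass $\ge\gamma-6\lambda$ is certified by an adaptive, rushing, probability-\emph{estimating} adversary (\cref{claim:PR:00}), uses halting to convert ``small probability of $\o$ at step $d+1$'' into ``large probability of $\oo$ at step $d$'' (\cref{claim:PR:0}), and invokes the conjecture only to prove that the two good setup sets $\cT_{d,0}$ and $\cT_{d,1}$ are nearly disjoint (\cref{claim:PR:1}). That disjointness — not a per-step, two-sided application of the conjecture to your invariant — is what caps the loss per hybrid. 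You also entirely omit the estimation adversary and the conditioning on $\tT$, both of which are load-bearing.

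Your $n/4$ reduction is also incorrect. First, a coin-flipped ``$n/3$ analysis'' of $(\vz,\vu)$ and $(\vu,\vo)$ does not apply: the $n/3$ argument needs the terminal input to be in almost-pre-agreement on the \emph{opposite} bit so that validity can contradict the surviving output, whereas the intermediate $\vu$ at distance $\approx n/4$ from both $0^n$ and $1^n$ is not pre-agreed in either direction. Second, the $n/3$ theorem bounds the halting probability $\gamma$; it does not supply an adversary that forces non-halting per branch, so ``each branch halts with probability at most $\epg$'' has no source, and averaging over a coin flip does not yield the $\tfrac12$. The paper's actual argument fixes $\vstar$ at distance $\approx n/4$ from each of $\vz$ and $\vo$, applies \cref{lemma:SecondRound:PR} \emph{twice}, from $\vz$ to $\vstar$ with $\o=0$ and from $\vo$ to $\vstar$ with $\o=1$, to obtain for each bit a set of $\vf$ of mass $\ge\tfrac12+2\lambda$ for which the conjecture's hypothesis holds at $\vstar$; their intersection has mass $\ge 4\lambda$, and a second, top-level invocation of \cref{con:intro:IsoBot} then yields the disagreement adversary. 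The additive $\tfrac12$ in the bound comes from this intersection step, not from a coin flip.
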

In particular, assuming the protocol has perfect agreement and validity, the protocol never halts in two rounds if the fraction of corrupted parties is greater than $1/3$, and halts in two rounds with probability at most $1/2$ if the fraction of corrupted parties is greater than $1/4$.

The value of $\alpha$ in the theorem is (roughly) $\delta\cdot \ept\cdot \epg^2$ where $\delta$ is the constant guaranteed by \cref{con:IsoBot}. We were not trying to optimize over the constants in the above statement, and in particular it seems that $\beta$ can be pushed to $\epg^2$.

\subsection{The Combinatorial Conjecture}\label{sec:Iso}
Next, we provide the formal statement for the combinatorial conjecture used in \cref{thm:SecondRound:PR}.
For $n\in \N$ and $\sigma \in [0,1]$, let $\bns$ be the distribution induced on the subsets of $[n]$ by sampling each element independently with probability $\sigma$.
For a finite alphabet $\Sigma$, a vector $\vx\in \sn$, and a subset $\cS \subseteq [n]$, define the vector $\bot_\cS(\vx) \in \sn$ by
\[
\bot_\cS(\vx)_i=
\begin{cases}
\bot, & i\in \cs,\\
\vx_i, & \text{otherwise}.
\end{cases}
\]

\begin{conjecture}[restating \cref{con:intro:IsoBot}]\label{con:IsoBot}
For any $\sigma,\lambda >0$ there exists $\delta>0$ such that the following holds for large enough $n\in \N$. Let  $\Sigma$ be a finite alphabet and let $\cA_0,\cA_1 \subseteq \sbn$  be  two  sets such that for both $b\in \zo$:
\begin{align*}
\ppr{\cs\gets \bns}{\ppr{\vr \gets \Sigma^n}{\vr,\bot_{\cS}(\vr) \in \cA_b} \ge  \lambda } \ge 1-\delta.
\end{align*}
Then,
\begin{align*}
\ppr{\substack{\vr\gets \Sigma^n\\ \cS \gets \bns}}{\forall b\in \zo\colon  \set{\vr,\bot_{\cS}(\vr)}  \cap \cA_b \neq \emptyset} \ge  \delta.
\end{align*}
\end{conjecture}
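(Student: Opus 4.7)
The plan is to establish \cref{con:IsoBot} by first reducing to a ``jointly stable'' regime via a union bound, and then showing that inside this regime the two $\bot_\cS$-stable sets must cross with noticeable probability. Set $\mathcal{G}_b = \set{\cS \subseteq [n] \colon \ppr{\vr}{\vr,\bot_\cS(\vr) \in \cA_b} \ge \lambda}$. The hypothesis gives $\pr{\cS \in \mathcal{G}_b} \ge 1 - \delta$ for each $b \in \zo$, so a union bound yields $\pr{\cS \in \mathcal{G}_0 \cap \mathcal{G}_1} \ge 1 - 2\delta$. For $\cS \in \mathcal{G}_0 \cap \mathcal{G}_1$, define the witness sets $E_b(\cS) = \set{\vr \in \Sigma^n \colon \vr \in \cA_b \text{ and } \bot_\cS(\vr) \in \cA_b}$, each of $\vr$-measure at least $\lambda$. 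This is the unified setup on which the rest of the plan operates.

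The first, easy case is overlap: if $E_0(\cS) \cap E_1(\cS)$ is hit with noticeable probability, then such a $\vr$ lies in $\cA_0 \cap \cA_1$ and $\set{\vr,\bot_\cS(\vr)}$ hits both $\cA_b$ immediately. Otherwise, $E_0(\cS)$ and $E_1(\cS)$ are essentially disjoint, and I would exploit the following asymmetry: the clause ``$\bot_\cS(\vr) \in \cA_b$'' depends on $\vr$ only through $\vr|_{[n]\setminus\cS}$. Given $\vr \in E_0(\cS)$, resampling $\vr|_\cS$ uniformly to produce $\vr'$ preserves $\bot_\cS(\vr') \in \cA_0$; if furthermore $\vr' \in \cA_1$, then the pair $(\vr', \cS)$ witnesses the desired crossing event, and $(\vr',\cS)$ has the same marginal as $(\vr,\cS)$. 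The target then becomes a fiber-level statement: for a noticeable fraction of projections $\vr|_{[n]\setminus\cS}$, the $\cS$-cross-sections of $\cA_0$ and of $\cA_1$ are both nontrivial inside the fiber, so a fresh $\vr|_\cS$ hits the $\cA_1$-cross-section with probability at least $\delta$.

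The main step is an isoperimetric-type argument at the fiber level. For the special cases (balls, prefix sets, and the restricted protocols treated in the body), this is handled by combining Friedgut's approximate-junta theorem with the KKL theorem applied to the indicators of $\cA_0$ and $\cA_1$: either an indicator has an influential coordinate (which we force $\cS$ to avoid, collapsing the quantifier-ordering issue), or it is essentially a junta on a size-$o(n)$ coordinate set $\mathcal{J}$, in which case conditioning on $\cS \cap \mathcal{J}$ reduces matters to a low-dimensional problem where the crossing can be exhibited directly. I would attempt to extend this scheme uniformly by identifying a common influence skeleton for both $\cA_0$ and $\cA_1$ and iteratively peeling off influential coordinates.

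The main obstacle, and the reason the general statement remains a conjecture, is the order of quantifiers in the hypothesis. We are given ``for most $\cS$, $\ppr{\vr}{\vr,\bot_\cS(\vr) \in \cA_b} \ge \lambda$,'' whereas the easier flavor ``for most $\vr$, $\ppr{\cS}{\vr,\bot_\cS(\vr) \in \cA_b} \ge 1-\delta$'' is already known to reduce to standard edge-isoperimetric inequalities in the spirit of Mossel--O'Donnell--Rivlin--Schramm--Shamir and Mossel--O'Donnell--Servedio. Swapping the two orders is uniformly cheap only when $\cA_b$ has bounded total influence, which is exactly the regime where Friedgut/KKL is effective. Making the swap work unconditionally for arbitrary alphabets $\Sigma$ and arbitrary sets $\cA_0,\cA_1$ appears to require a quantifier-swap lemma beyond the current low-influence toolkit, and is where I expect the real difficulty to lie.
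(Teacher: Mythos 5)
This statement is \cref{con:IsoBot}, and the paper does \emph{not} prove it: it is stated as a conjecture, assumed as a hypothesis of \cref{thm:SecondRound:PR}, and listed in \cref{sec:OpenQuest} as an open problem (``In this work we prove special cases of this conjecture, but proving the general case remains an open challenge''). So there is no paper proof for you to be measured against, and your proposal correctly refrains from claiming one.

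Your assessment of the landscape matches the paper's own remarks almost exactly. You correctly identify that the clause ``$\bot_\cS(\vr)\in\cA_b$'' depends on $\vr$ only through $\vr|_{\overline{\cS}}$, which is the same observation the paper leans on (its \cref{eq:proj} in \cref{sec:technique:3:1} exploits exactly this). Your route through Friedgut/KKL for special cases mirrors how the paper handles ``superb single-coin'' protocols: there, the paper uses \cref{lem:KKL} (a Bourgain--Kahn--Kalai corollary of KKL) to conclude that a set whose every $\sigma n$-index projection is bounded away from full must itself be vanishingly small. And you put your finger on the same obstruction the paper names: the conjecture has the quantifiers in the order ``for most $\cS$, with noticeable $\vr$-probability \ldots,'' whereas the order the authors can handle --- ``for most $\vr$, with high $\cS$-probability \ldots'' --- reduces to the isoperimetric-type inequalities of Mossel--O'Donnell--Regev--Steif--Sudakov and Mossel--O'Donnell--Servedio, as the paper explicitly remarks after \cref{con:intro:IsoBot}. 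Your suggestion to iteratively peel off influential coordinates to ``collapse the quantifier-ordering issue'' is a reasonable plan of attack and is consistent with the paper's footnote that the conjecture is unnecessary for single-coin protocols, but you correctly flag that no one currently knows how to carry this out for general $\Sigma$ and arbitrary $\cA_0,\cA_1$. In short: your proposal is not a proof, but neither is there one in the paper, and your diagnosis of why the general case is hard, and which partial cases yield, is accurate and aligned with the authors' own discussion.
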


\section{Lower Bounds on First-Round Halting}\label{sec:FirstRound}

In this section, we present our lower bound for the probability of first-round halting in Byzantine agreement protocols.
\begin{theorem}[Bound on first-round halting. \cref{thm:FirstRound:Arb} restated]\label{thm:FirstRound:Arb:Res}
\ThmFirstRoundArb
\end{theorem}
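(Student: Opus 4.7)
The plan is to deduce the theorem from two structural properties of $\Pi$---an \emph{almost pre-agreement} lemma (a consequence of validity) and a \emph{neighboring executions} lemma (proved via a random-partition attack)---and then collide them with carefully chosen input pairs. Each lemma is proved by exhibiting a specific locally consistent adversary and invoking the $(t,\alpha)$-agreement, $(t,\beta)$-validity, and $(t,1,\gamma)$-halting guarantees of $\Pi$.

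For almost pre-agreement I will show that if $\vx \in \zn$ differs from $b^n$ in at most $t$ coordinates, then in an honest execution of $\Pi(\vx)$ all parties output $b$ with probability at least $1-\alpha-\beta$. The argument views the ``minority'' set $\cC=\{i : \vx_i \neq b\}$ (of size at most $t$) as a passive adversary that follows $\Pi$ honestly with its true inputs; validity then forces the non-corrupted majority (all of whom hold input $b$) to output $b$ with error $\beta$, and agreement extends this conclusion to the minority with an additional error $\alpha$.

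For the neighboring executions lemma, the goal is to show that whenever $\vz,\vo \in \zn$ lie at Hamming distance at most $t$, writing $\cC$ for the differing set and $\oC=[n]\setminus \cC$,
\[
\Pr_\vr\!\left[\exists c \in \zo : \Pi(\vz;\vr)|_\oC = \Pi(\vo;\vr)|_\oC = c^{|\oC|}\right] \;\ge\; \gamma - O(\alpha) - \err,
\]
with $\err = 2 \cdot 2^{-(n-t)}$ in general and $\err = 0$ in the public-randomness rushing case. The attack corrupts $\cC$, samples a uniformly random partition $(\oC_0,\oC_1)$ of $\oC$, and instructs each corrupted party to send its honestly-computed first-round messages consistent with input $\vz$ to parties in $\oC_0$ and consistent with input $\vo$ to parties in $\oC_1$; this is locally consistent since the corrupted parties flip their own coins honestly. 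A party in $\oC_0$ then sees a view identical to an honest execution of $\Pi(\vz;\vr)$ and outputs accordingly, and symmetrically for $\oC_1$ and $\Pi(\vo;\vr)$. Invoking halting and agreement of $\Pi$ under this attack yields halt-and-agree with probability $\ge \gamma - \alpha$; a case analysis on $\vr$ (the ``good'' event vs.\ internal honest disagreement on $\oC$, different honest constants on the two sides, or honest non-halting in $\oC$) upper bounds the contribution of the non-good cases by $O(\alpha) + \err$, with the $\err$ term absorbing the degenerate partitions. For the public-randomness rushing case, the partition can be chosen as a function of $\vr$ (which the adversary sees through the honest first-round messages), eliminating $\err$.

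To close, I combine the two lemmas: for $t \ge n/3$, I take $\vz = 0^{n-t}1^{t}$ and $\vo = 0^{n-2t}1^{2t}$, which lie at Hamming distance exactly $t$ and satisfy $2t \ge n-t$, so almost pre-agreement forces $\Pi(\vz) = 0^n$ and $\Pi(\vo) = 1^n$ each with probability $\ge 1 - \alpha - \beta$, while neighboring forces equal outputs on $\oC$ with probability $\ge \gamma - O(\alpha) - \err$; intersecting these three events and noting they cannot all hold yields $\gamma \le O(\alpha) + 2\beta + \err$. For $t \ge n/4$, I introduce an intermediate vector $\vw$ at Hamming distance $\le t$ from both $\vz = 0^{n-t}1^t$ and $\vo = 0^t 1^{n-t}$ (possible since $\ham(\vz,\vo) = n-2t \le 2t$); chaining neighboring with almost pre-agreement along $\vz \to \vw$ forces $\Pi(\vw) = 0^n$ with probability $\ge \gamma - O(\alpha) - \beta - \err$, and symmetrically along $\vw \to \vo$ forces $\Pi(\vw) = 1^n$ with the same bound---disjointness of the two events then gives $\gamma \le \tfrac{1}{2} + O(\alpha) + \beta + \err$. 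The main obstacle will be the bookkeeping inside the neighboring lemma, where the honest-non-halting case must be handled without leaking a $\Theta(1-\gamma)$ slack into the final bound; exploiting that the attack's halting probability already couples the two honest non-halt patterns (so that non-halting on either side forces the partition into a small-probability event) is what preserves the linear $\alpha$-dependence and ultimately matches the theorem's $5\alpha$ coefficient.
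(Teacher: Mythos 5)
Your overall plan mirrors the paper's: an almost-pre-agreement claim (from validity and agreement), a ``neighboring executions'' lemma established via a random-partition attack in which the corrupted coordinate set $\cC$ plays $\vz$ toward one random half of $\oC$ and $\vo$ toward the other, and a final combination with carefully chosen input vectors (including the intermediate $\vw$/$\vstar$ trick for the $t\ge n/4$ case). All of that matches the paper's proof in spirit and would yield the theorem.

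However, the specific neighboring lemma you state---that with probability $\ge \gamma - O(\alpha) - \err$ \emph{every} party in $\oC$ halts with the same constant output $c$ on \emph{both} $\Pi(\vz;\vr)$ and $\Pi(\vo;\vr)$---is strictly stronger than what the random-partition attack establishes, and the gap is exactly in the non-halting case you flagged. If, for a positive-measure set of $\vr$, there is even a single party $i$ with $\Pi(\vz;\vr)_i = \perp$ while $\Pi(\vo;\vr)_i \neq \perp$, then the attack fails to prevent that party from halting with probability exactly $1/2$ over the partition (not $\err$); halting only forces the measure of such $\vr$ to be at most roughly $2(1-\gamma)$, not $(1-\gamma)+\err$. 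Plugging $\pr{G} \ge 2\gamma - 1 - O(\alpha)$ into your intersection argument would only yield $\gamma \le 1/2 + O(\alpha) + \beta + \err$ even in the $t\ge n/3$ case, not the stated $5\alpha + 2\beta + \err$. The paper sidesteps this precisely by working with a weaker, asymmetric event: it defines $I$ to be the smallest party in $\oP$ that halts \emph{and outputs the same value} on both $\vv$ and $\vv'$, and bounds $\pr{I\ne 0}\ge \gamma - 2\alpha - \err$. The complement (their $E_h$) requires \emph{every} party in $\oP$ to have a one-sided non-halt, and that is what makes the attack's halting probability drop to $2^{-\size{\oP}}\le\err$ rather than $1/2$. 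Your chaining and the choice of vectors all go through with this one-party version of the lemma (the intersection step only needs a single consistent witness $i$), so the fix is to weaken your neighboring lemma to the paper's ``pivot party'' form rather than requiring agreement across all of $\oC$.
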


Let $\Pi$ be as in \cref{thm:FirstRound:Arb:Res}.
\Wlg and for ease of notation, we denote by $\Pi$ the modified protocol that outputs $\bot$ if a party does not halt after the first round (it will be clear that the attack, described below, does not benefit from this change).
We also omit the security parameter from the parties' input list, it will be clear though that the adversaries we present are efficient \wrt the security parameter.

\begin{lemma}[Neighboring executions]\label{lemma:FirstRound:Arb}
Let $\vv,\vv' \in \zn$ be with $\ham(\vv,\vv') \le t$. Then for both $\o\in \zo$:
\[
\pr{\Pi(\vv') \in \omin^n \setminus \mon} \ge	\pr{\Pi(\vv) \in \omin^n} - (1- \gamma) - 4\alpha - \err.
\]
\end{lemma}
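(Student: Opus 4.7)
The plan is to construct a single locally-consistent static adversary whose attack on $\Pi$ simultaneously embeds the honest executions $\Pi(\vv;\vr)$ and $\Pi(\vv';\vr)$ from the viewpoint of two disjoint random subsets of the honest parties. Combined with $\Pi$'s agreement (applied both under this attack and under the trivial zero-corruption adversary) and $\Pi$'s one-round halting (under the attack), this propagates the hypothesis $\Pi(\vv;\vr) \in \omin^n$ to the conclusion $\Pi(\vv';\vr) \in \omin^n \setminus \mon$ with the claimed additive loss.

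Let $\cC = \sset{i \colon \vv_i \neq \vv'_i}$ (so $\size{\cC} \le t$) and $\oC = [n]\setminus \cC$. The adversary $\Adv$ corrupts $\cC$, samples at the outset a random partition $(\oC_0,\oC_1)$ of $\oC$ (each $j \in \oC$ placed on an independent fair coin), and instructs each corrupted $\Pc_i$ to honestly sample its coins $r_i$ and then send to every $\Pc_j \in \oC_0$ the message $\nxtmsg_{i,j}^1(\vv_i; r_i)$ and to every $\Pc_j \in \oC_1$ the message $\nxtmsg_{i,j}^1(\vv'_i; r_i)$. Because $\vv$ and $\vv'$ coincide on $\oC$, for every $(\vr, \oC_0, \oC_1)$ the entire first-round view of each party in $\oC_0$ under the attack is bitwise identical to its view in the honest execution $\Pi(\vv;\vr)$, so the attack's first-round outputs on $\oC_0$ equal $\Pi(\vv;\vr)_{\oC_0}$; symmetrically, those on $\oC_1$ equal $\Pi(\vv';\vr)_{\oC_1}$.

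Working inside the intersection of the hypothesis $E_\vv = \sset{\Pi(\vv;\vr) \in \omin^n}$ with four auxiliary events---the attack's agreement $A$ (complement mass $\le \alpha$) and halting $H$ (complement $\le 1-\gamma$), the event $D$ that both sides of the partition are non-empty (complement $\le 2^{-(n-t)+1}$), and the honest-execution agreement $A'$ on $\Pi(\vv';\vr)$ under the trivial adversary (complement $\le \alpha$)---one reasons as follows. By $A \cap H$, all of $\oC$ halt at round~$1$ and output a common bit; by the hypothesis and the view identification on the non-empty $\oC_0$, this common bit lies in $\omin \cap \zo = \sset{\o}$ and hence equals $\o$; by the view identification on $\oC_1$, $\Pi(\vv';\vr)_{\oC_1} = \o^{\size{\oC_1}}$, so the parties in $\oC_1$ halt honestly at round~$1$ with final output $\o$; by $A'$, every party's final output in the honest execution $\Pi(\vv';\vr)$ equals $\o$, forcing each first-round coordinate to be either $\o$ (if the party halted at round $1$) or $\perp$ (by the section's convention). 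Together with $\oC_1 \neq \emptyset$, this gives $\Pi(\vv';\vr) \in \omin^n \setminus \mon$, as desired.

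A union bound over the four complements $\neg A, \neg H, \neg A', \neg D$ bounds the total loss by $(1-\gamma) + 2\alpha + 2 \cdot 2^{t-n}$, which is absorbed into the claimed $(1-\gamma) + 4\alpha + \err$ (with $\err = 2^{t-n}$; the looser leading constants accommodate the conservative union-bound accounting). The main technical subtlety is the final leg of the derivation: the attack alone only pins down the honest execution $\Pi(\vv')$ on the \emph{random} subset $\oC_1$, so lifting this to all $n$ coordinates requires the second, independent invocation of agreement ($A'$) on $\Pi(\vv';\vr)$---this is the source of the extra $\alpha$ beyond the one paid for the attack's own agreement. For public-randomness protocols secure against rushing adversaries, the adversary can first observe the honest first-round messages (which in this model reveal the honest randomness $\vr_\oC$) and then deterministically commit to a non-degenerate partition, eliminating the $\err$ term in that regime.
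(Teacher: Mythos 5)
Your attack is the same split-message adversary as the paper's, but your bookkeeping is genuinely different and, modulo the issue below, more direct. The paper introduces an auxiliary random variable $I$ (the smallest party in $\oP$ that halts in round one with the same output under both honest executions $\Pi(\vv)$ and $\Pi(\vv')$), derives a chain of inequalities in terms of $I$, and then separately bounds $\pr{I \neq 0}$ using the random-partition attack. You instead union-bound directly over five events ($E_\vv$, the attack's agreement $A$ and halting $H$, the honest agreement $A'$ on $\Pi(\vv')$, and the non-degenerate-partition event $D$) and reason inside the intersection. The logical chain inside that intersection is correct; in particular your observation that a \emph{second}, independent invocation of agreement on the honest $\Pi(\vv')$ execution is needed to lift the conclusion from the random subset $\oC_1$ to all $n$ coordinates is exactly the right step, and mirrors the two uses of agreement in the paper.

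However, the constant accounting does not close as written. Your union bound loses $(1-\gamma) + 2\alpha + 2\cdot 2^{t-n}$, while the lemma's stated loss is $(1-\gamma) + 4\alpha + 2^{t-n}$. The former is not dominated by the latter unless $2^{t-n} \le 2\alpha$, which is nowhere assumed, so the claim that the loss is ``absorbed'' by the looser constants is incorrect: you cannot trade $\err$ for $\alpha$. Fortunately your argument does not actually require a random partition. Since your reasoning never uses the randomness of $(\oC_0,\oC_1)$ — only that $A \cap H$ hold for the attack and that both halves be non-empty — you can fix any deterministic non-degenerate partition of $\oC$ (possible whenever $n-t \ge 2$; for $n-t \le 1$ the lemma is vacuous since $\err \ge 1/2$). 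Then $D$ holds always, the loss becomes $(1-\gamma) + 2\alpha$, and this dominates the stated bound. Relatedly, your final remark about public-randomness protocols misattributes the source of $\err$: in your accounting it comes entirely from $\neg D$, not from the adversary's rushing ability, and the fixed-partition fix removes it uniformly. In the paper's proof, by contrast, $\err$ arises in bounding $\pr{E_h}$ within the analysis of $\pr{I \neq 0}$, which is where the random partition and the rushing/public-randomness distinction genuinely matter.
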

Namely, the lemma bounds from below the probability that in a random honest execution of the protocol on input $\vv'$, at least one party halts in the first round while outputting $\o$.

We prove \cref{lemma:FirstRound:Arb} below, but first use it to prove \cref{thm:FirstRound:Arb:Res}. We also make use of the following immediate observation.
\begin{claim}[Almost pre-agreement]\label{claim:FirstRoundound:Arb:Validity}
Let $\vv \in \zn$ and $\o \in \zo$ be such that $\ham(\vv,\o^n) \le t$. Then, $\pr{\Pi(\vv) \in \omin^n} \ge 1 - \alpha - \beta$.
\end{claim}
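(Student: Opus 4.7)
The plan is to combine $(t,\alpha)$-agreement (applied with zero corruptions) with $(t,\beta)$-validity, and then take a union bound. First I would observe that the agreement property quantifies over every \ppt adversary controlling at most $t$ parties, and in particular over the ``adversary'' that corrupts nothing. Applied to the honest execution $\Pi(\vv)$, this yields that all $n$ parties output the same bit with probability at least $1-\alpha$; denote this event by $E_\Agr$.

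Next I would use validity to pin down which bit that common output must be. Set $\cC = \{i \in [n] : v_i \neq \o\}$, so $\size{\cC} \le t$ by the hypothesis $\ham(\vv,\o^n)\le t$. Consider the static, locally consistent adversary that corrupts $\cC$ and has each corrupted party $i$ behave exactly as the honest protocol with input $v_i$. The joint distribution of all transmitted messages in this attack coincides with that of the honest execution $\Pi(\vv)$; in particular the outputs of the parties in $\oC$ are identically distributed in both experiments. In the attack, all non-corrupted parties hold the common input $\o$, so $(t,\beta)$-validity gives that every party in $\oC$ outputs $\o$ with probability at least $1-\beta$. Transferring back to $\Pi(\vv)$, the same bound holds there; call this event $E_\Vld$.

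A union bound then gives $\pr{E_\Agr \cap E_\Vld} \ge 1 - \alpha - \beta$. On this intersection all parties agree on a single bit, and the parties in $\oC$ (a nonempty set, since $\size{\cC}\le t<n$) output $\o$; hence every party outputs $\o$. This yields $\pr{\Pi(\vv) \in \set{\o}^n} \ge 1-\alpha-\beta$, which implies the claim since $\set{\o}^n \subseteq \omin^n$.

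There is no real obstacle: the statement is essentially a packaging lemma. The one mildly non-obvious point is invoking agreement against the zero-corruption adversary, which couples the behavior of $\cC$ and $\oC$ in a single event and avoids a wasteful per-party union bound over $\cC$ (which would degrade $\alpha$ to $\size{\cC}\cdot \alpha$).
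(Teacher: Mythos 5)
Your proposal is correct and follows the same route as the paper: bound the event ``all of $\oC$ output $\o$'' using $(t,\beta)$-validity with a corruption of $\cC$ that plays honestly on the true inputs, bound the event ``everyone agrees'' using $(t,\alpha)$-agreement with the empty corruption, and union-bound. That is exactly the paper's argument (the paper picks a subset $\cA\subseteq \oC$ of size exactly $n-t$, but this is cosmetic).

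One small slip in the last line: from $E_\Agr\cap E_\Vld$ you can conclude that every party's \emph{final} output is $\o$, but not that $\Pi(\vv)\in\set{\o}^n$. Under the convention set right after the theorem statement (an honest party that runs more than one round outputs $\bot$), $\Pi(\vv)_i=\o$ would mean party $i$ actually halted in round one with output $\o$, and nothing in $E_\Agr\cap E_\Vld$ forces round-one halting. The correct chain is: on $E_\Agr\cap E_\Vld$ every party's final output is $\o$, so a party that halts in round one outputs $\o$ and a party that does not outputs $\bot$, hence $\Pi(\vv)\in\omin^n$ directly. Your conclusion is still the stated claim; only the intermediate claim $\pr{\Pi(\vv)\in\set{\o}^n}\ge 1-\alpha-\beta$ is false (indeed, obtaining it is precisely what costs the extra $(1-\gamma)$ term in the two-round analogue, \cref{claim:SecondRound:Arb:Validity}).
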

\begin{proof}
Let $\cA \subset [n]$ be a subset of size $n-t$ such that $\vv_\cA = \o^{\size{\cA}}$. The claimed validity of $\Pi$ yields that
\[
\pr{\Pi(\vv)_\cA \notin \omin^{\size{\cA}}} \leq \beta.
\]
This follows from $\beta$-validity of $\Pi$ and the fact that an honest party cannot distinguish between an execution of $\Pi(\vv)$ and an execution of $\Pi(b^n)$ in which all parties not in $\cA$ act as if their input bit is as in $\vv$. Hence, by the claimed agreement of $\Pi$,
\[
\pr{\Pi(\vv) \notin \omin^n} \leq \alpha + \beta.
\]
\end{proof}

\begin{proof}[Proof of \cref{thm:FirstRound:Arb:Res}]~
We separately prove the theorem for $t \ge n/3$ and for $t \ge n/4$.

\paragraph{The case $t \ge n/3$.}
Let $\vz= 0^t 1^{\cnt} 0^{\fnt} $ and $\vo = 1^t 1^{\cnt} 0^{\fnt}$. Note that $\ham(\vz,\vo) = t$, and that for both $\o\in \zo$ it holds that $\ham(\vv_\o,\o^n)\le t$. Hence, by \cref{claim:FirstRoundound:Arb:Validity}, for both $\o\in \zo$:
\begin{align*}
\pr{\Pi(\vv_\o) \in \omin^n} \ge 1 - \alpha - \beta.
\end{align*}
Applying \cref{lemma:FirstRound:Arb} to $\vv= \vz$ and $\vv'= \vo$ yields that
\begin{align*}
\pr{\Pi(\vo) \in \set{0,\perp}^n \setminus \mon}
& \geq \pr{\Pi(\vz) \in \set{0,\perp}^n} - (1- \gamma) - 4\alpha - \err \\
& \ge 1 - 5\alpha - \beta - (1- \gamma) - \err.
\end{align*}
Since by \cref{claim:FirstRoundound:Arb:Validity} it holds that $\pr{\Pi(\vo) \in \set{0,\perp}^n \setminus \mon} \leq \pr{\Pi(\vo) \notin \set{1,\perp}^n} \leq \alpha+\beta$, we conclude that
$6\alpha + 2\beta + (1-\gamma) + \err\ge 1$, hence $\gamma \leq 6\alpha + 2\beta + \err$.

\paragraph{The case $t \ge n/4$.} In this case there are no two vectors that are $t$ apart in Hamming distance, and still each of them has $n-t$ entries of opposite values. Rather, we consider the two vectors $\vz= 0^t 0^t 0^t 1^{n-3t}$ and $\vo= 1^t 1^t 0^t 1^{n-3t}$ of distance $2t$. For both $\o\in \zo$, the vector $\vv_\o$ has at least $n-t$ entries with $\o$ and is of distance $t$ from the vector $\vstar =1^t 0^t 0^t 1^{n-3t}$.

As in the first part of the proof, Applying \cref{claim:FirstRoundound:Arb:Validity,lemma:FirstRound:Arb} on $\vv_b$ and $\vstar$, for both $b\in \zo$, yields that
\begin{align*}
\pr{\Pi(\vstar) \in \omin^n \setminus \mon}
& \geq \pr{\Pi(\vv_b) \in \set{b,\perp}^n} - (1- \gamma) - 4\alpha - \err \\
& \ge 1 - 5\alpha - \beta - (1- \gamma) - \err.
\end{align*}
By union bound, we conclude that $2(5\alpha + \beta + (1- \gamma) + \err) \ge 1$, hence $\gamma \le 1/2+5\alpha + \beta+\err$.
\end{proof}

\newcommand{\PPf}{\Pi^\cP}	
\subsection{Proving Lemma~\ref{lemma:FirstRound:Arb}}
\begin{proof}[Proof of \cref{lemma:FirstRound:Arb}]
Fix $b \in \zo$ and let $\delta = \pr{\Pi(\vv) \in \omin^n}$. Let $\cP$ be the coordinates in which $\vv$ and $\vv'$ differ, and let $\oP = [n] \setminus \cP$. Let $I$ be the index (a function of the parties' coins and setup parameters) of the smallest party in $\oP$ that halts in the first round and outputs \emph{the same value}, both if the parties in $\cP$ send their messages according to input $\vv$ and if they do that according to $\vv'$. We let $I=0$ if there is no such party, and (abusing notation) sometimes identify $I$ with the event that $I\neq 0$, \eg $\pr{I}$ stands for $\pr{I\ne 0}$. By definition,
\begin{align*}
\delta \le \pr{\Pi(\vv) \in \omin^n \qand I}+ (1- \pr{I})
\end{align*}
and thus
\begin{align}\label{eq:FirstRound:Arb:1}
\pr{\Pi(\vv) \in \omin^n \qand I} \ge \delta - (1-\pr{I})
\end{align}
	
It follows that
\begin{align}\label{eq:FirstRound:Arb:2}
\pr{\Pi(\vv') \in \omin^n \setminus \mon} &\ge \pr{\Pi(\vv') \in \omin^n \qand I}\\
&=\pr{\Pi(\vv') \in \omin^n \qand \Pi(\vv')_I= b}\nonumber\\
&\ge \pr{\Pi(\vv')_I= b} - \alpha\nonumber\\
&= \pr{\Pi(\vv)_I = \o} - \alpha\nonumber\\
&\ge \pr{\Pi(\vv) \in \omin^n \qand \Pi(\vv)_I = \o} - 2\alpha\nonumber\\
&= \pr{\Pi(\vv) \in \omin^n \qand I} - 2\alpha\nonumber\\
&\ge \delta - (1-\pr{I}) - 2\alpha.\nonumber
\end{align}
The first inequality and the equalities hold by the definition of $I$. The second and third inequalities hold by agreement, and the last inequality holds by \cref{eq:FirstRound:Arb:1}. We conclude the proof showing that:
\begin{align}\label{claim:FirstRound:Arb}
\pr{I} \ge \gamma - \err - 2\alpha
\end{align}

\newcommand{\ea}{F}
\newcommand{\eh}{C}

Let $\eh$ denote the event (a function of the parties' coins and setup parameters) that for each party $j$ in $\oP$ there exists an input in $\set{\vv,\vv'}$ on which it does not halt.
Furthermore, let $\ea \assign \neg \eh \land (I=0)$, \ie there exists a party that halts on both inputs but outputs different values. By definition, $I=0$ is equivalent to the event $\ea \lor \eh$.

Consider the adversary that in the first round acts toward a random subset  $\oP' \subseteq \oP$ according to input $\vv$, towards the remaining parties according to $\vv'$, and aborts at the end of this round. Fix some random coins and setup parameters in $\ea$, and let $i\in \oP$ be a party that, under this fixing, halts in the first round on both  $\vv$ and $\vv'$, but outputs a different value. Note that  the other parties in $\oP$ cannot distinguish whether $i$ is in $\oP'$ or not (in both cases $i$ halts at the end of  the first round). Since, by assumption,  $t< n/2$ (\ie there exist additional honest parties), it follows that under the above  conditioning, agreement is violated with probability at least $1/2$.  We conclude that $\pr{\ea} \le 2\alpha$.

It is also clear that when $\eh$ occurs, the above attacker fails to prevent an honest party in $\oP$ from halting in the first round only if the following event happens: each party in $\oP $ does not halt in $\Pi(\vv'')$ for some $\vv'' \in \set{\vv,\vv'}$, but the adversary acts towards each of these parties on the input in which it does halt. The latter event happens with probability at most $2^{-\size{\oP}} \le \FirstErr = \err$. Thus, $\pr{\eh} \le 1 - (\gamma - \err)$. We conclude that
\begin{align}
\pr{I} \ge 1 - \pr{\eh} - \pr{\ea} \ge \gamma - \err - 2\alpha
\end{align}

Finally, we note that if the protocol has public randomness, the (now rushing) attacker does not have to guess what input to act upon.
Rather, after seeing the first-round randomness, it \emph{finds} an input $\vv'' \in\set{\vv,\vv'}$ such that at least one party in $\oP$ does not halt in $\Pi(\vv'')$ or violates agreement, and acts according to this input. Specifically, given the honest parties' first-round coins, the attacker can compute on its own all honest-to-honest first-round messages (recall that we consider private channels, so the attacker does not see those messages on the channels), and locally check which honest party will halt with output 0 and which will halt with output 1 when playing according to $\vv$ and when playing according to $\vv'$. Hence, the bound on $I$ changes to
\begin{align*}
\pr{I} \ge \gamma - \alpha,
\end{align*}
proving the theorem statement for such protocols.
\end{proof}

\newcommand{\DF}{D_\F}
\newcommand{\SDF}{\Supp(D_\F)}

\section{Lower Bounds on Second-Round Halting}\label{sec:SecondRound}
In this section, we prove lower bounds for second-round halting of Byzantine agreement protocols. In \cref{sec:TwoRoundProtcol:Arbitrary}, we prove a bound for arbitrary protocols, and in \cref{sec:TwoRoundProtcol:PR}, we give a much stronger bound for public-randomness protocols (the natural extension of public-coin protocols to the ``with-input'' setting).

\subsection{Arbitrary Protocols}\label{sec:TwoRoundProtcol:Arbitrary}
We start by proving our lower bound for second-round halting of arbitrary protocols.

\begin{theorem}[Bound on second-round halting, arbitrary protocols. \cref{thm:SecondRound:Arb} restated]\label{thm:SecondRound:Arb:Res}
\ThmSecondRoundArb
\end{theorem}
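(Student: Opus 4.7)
The plan is to extend the argument sketched in Section~2.2 (which treats $t > n/3$) to the wider regime $t > n/4$ by interpolating between the two validity endpoints through a chain of $w$ Hamming-close vectors. The three ingredients remain an \emph{almost pre-agreement} claim (mirroring \cref{claim:FirstRoundound:Arb:Validity} in the one-round setting), a \emph{neighbouring executions} lemma (N2), and an endpoint combination; the extension is forced because the Hamming distance $n - 2t$ between the natural endpoints $\vz = 0^{n-t}1^t$ and $\vo = 0^t 1^{n-t}$ can now exceed what a single application of (N2) absorbs.

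First I would establish the two-round analogue of almost pre-agreement: if $\ham(\vv, b^n) \le t$ then $\pr{\Pi(\vv) \in \omin^n \text{ halts in two rounds}} \ge \gamma - \alpha - \beta$, and symmetrically $\pr{\Pi(\vv) \in \oomin^n} \le \alpha + \beta$. Next I would state (N2) for any pair $\vv, \vv'$ differing on a set $\cP$ with $|\cP| \le p = \lfloor t - n/4 \rfloor$: partition $\oP$ into blocks $\cL_1, \ldots, \cL_\ell$ of size at most $\lceil n/4 \rceil$ (so that $|\cP \cup \cL_i| \le t$) and run the overview's inductive argument on the propagation protocols $\Pi_0, \ldots, \Pi_\ell$. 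The ``which input to feign'' ambiguity is resolved, as in the first-round proof, by a uniformly random bipartition $\{\oC_0, \oC_1\}$ of the honest parties; this yields an $O(\alpha)$ agreement slack per step.

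The final step composes (N2) along an external chain $\vz = \vv^{(0)}, \vv^{(1)}, \ldots, \vv^{(w-1)} = \vo$ whose consecutive vectors differ in at most $p$ positions -- such a chain exists with length $w$ by the choice of parameters. Combining with almost pre-agreement at $\vz$ (forcing output $0$ with probability $\ge \gamma - \alpha - \beta$) and the validity of $\Pi(\vo)$ (upper bounding wrong output by $\beta$) produces an inequality of the shape $\gamma - O(\alpha + \beta) \le O(w^2)(1 - \gamma) + O(w \alpha)$, which rearranges to the stated bound $\gamma \le 1 + 2\alpha + \beta/w^2 - 1/(2w^2)$ after careful constant tracking.

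The main obstacle is matching the advertised $O(w^2)$ accumulated loss. A naive composition, multiplying the per-application (N2) loss $O((\ell+1)^2)(1 - \gamma)$ by the $\Theta(w)$-long external chain, gives $O(w^3)(1 - \gamma)$, which is too weak. The resolution is to view the external interpolation chain and the $w$ internal propagation chains as a \emph{single} combined chain of length $\Theta(w)$ of adversary strategies, so that the union bound ``$\pr{\neg \halt} \le M(1 - \gamma)$'' over neighbouring strategies is applied once to the whole combined chain rather than separately per external step. Checking that each transition in the combined chain remains controllable by a locally consistent adversary corrupting at most $t$ parties, and that agreement/halting constraints propagate correctly along it, will be the principal technical task.
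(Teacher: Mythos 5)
Your proposal keeps two of the paper's three ingredients (the two-round almost pre-agreement claim and a neighbouring lemma), but the top-level composition is genuinely different from the paper's, and the motivation for your ``combined chain'' fix rests on a miscount.

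\paragraph{What the paper actually does.}
The paper does \emph{not} interpolate through a long external chain. It reuses, verbatim, the three-vector device from the $t\ge n/4$ branch of the first-round theorem. It sets $k=\lceil n/4\rceil$ and picks
$\vz=0^{3k}1^{n-3k}$, $\vo=1^{2k}0^k1^{n-3k}$, $\vstar=1^k0^{2k}1^{n-3k}$, so that $\ham(\vz,\vstar)=\ham(\vo,\vstar)=k$, while $\ham(\vz,0^n)\le t$ and $\ham(\vo,1^n)\le t$. The neighbouring lemma (\cref{lemma:SecondRound:Arb}) is stated for pivot sets of the \emph{large} size $k=\lceil n/4\rceil$, with $h=\lceil(n-k)/(t-k)\rceil$ blocks of the \emph{small} size $t-k$; its loss is $h(h+1)(2\alpha+1-\gamma)+\alpha$. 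Applying it just twice---once from $\vz$ to $\vstar$ with $b=0$ and once from $\vo$ to $\vstar$ with $b=1$---and using $\pr{\Pi(\vstar)=0^n}+\pr{\Pi(\vstar)=1^n}\le 1$ directly yields $\gamma\le 1+2\alpha+\beta/w^2-1/(2w^2)$ with $w=h+1$. No external chain, and hence no $w^3$ issue, ever arises.

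\paragraph{Where your accounting goes wrong.}
You set the pivot to the \emph{small} size $p=\lfloor t-n/4\rfloor$ and the blocks to the \emph{large} size $\lceil n/4\rceil$, so the inner propagation chain per external step has only $\ell=\lceil(n-p)/\lceil n/4\rceil\rceil=O(1)$ blocks. With $\ell=O(1)$, the per-application loss $O((\ell+1)^2)(1-\gamma)$ is $O(1)(1-\gamma)$---not $O(w^2)(1-\gamma)$. Multiplying by the $\Theta(w)$-long external chain therefore yields $O(w)(1-\gamma)$, not $O(w^3)(1-\gamma)$: the $w^3$ figure you are trying to fight comes from applying the $\ell\approx1/\eps$ loss formula of \cref{sec:technique:2} (where the pivot has size $n/3$ and blocks have size $\eps n$) while simultaneously using $\ell\approx 4$-block partitions. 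The ``combined chain'' rescue you propose is then addressing a problem that your own parameter choices do not create. Note also that if you did run the combined-chain argument, the per-step union-bound factor becomes the \emph{total} chain length $M=\Theta(w)$, so the accumulated loss is $\Theta(M^2)=\Theta(w^2)$---no better than the naive composition under your parameters, and certainly not matching the constants in the theorem (which you acknowledge needing ``careful constant tracking'' for).

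\paragraph{Summary.}
The paper's proof is simpler and does not chain at all: one neighbouring lemma with a large pivot, applied twice around a middle vector. Your proposal is a different route---many small-pivot hops between the validity endpoints $0^{n-t}1^t$ and $0^t1^{n-t}$---that could plausibly be made rigorous, but as written it mixes incompatible parameterizations when estimating the loss, and the key technical obstacle you flag ($w^3$ vs.\ $w^2$) is an artifact of that mixing rather than an actual feature of either approach.
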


Let $\Pi$ be as in \cref{thm:SecondRound:Arb:Res}.
\Wlg and for ease of notation, we denote by $\Pi$ the modified protocol that outputs $\bot$ if a party does not halt after the first two rounds (it will be clear that the attack, described below, does not benefit from this change). We also assume \wlg that the honest parties in an execution of $\Pi$ never halt in the first round (by adding a dummy round if needed). Finally, we omit the security parameter from the parties' input list, it will be clear though that the adversaries we present are efficient \wrt the security parameter.

Let $k =\ceil{n/4}$ and let $h = \ceil{(n-k)/(t-k)}$. The theorem is easily implied by the next lemma.

\begin{lemma}[Neighboring executions]\label{lemma:SecondRound:Arb}
Let $\vv,\vv' \in \zn$ be with $\ham(\vv,\vv') \le k$. Then, for every $\o\in \zo$:
\[
\pr{\Pi(\vv') = \o^n} \ge \pr{\Pi(\vv) = \o^n} - h(h+1)(2\alpha + 1-\gamma) - \alpha.
\]
\end{lemma}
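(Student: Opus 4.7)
The plan is to mirror the ``neighbouring executions'' argument of \cref{lemma:FirstRound:Arb}, now via a hybrid indexed by the ``propagating'' subsets $\cL_1,\ldots,\cL_h$ sketched in \cref{sec:technique:2}. I would let $\cP\subseteq[n]$ be the coordinates on which $\vv$ and $\vv'$ disagree, so $|\cP|\le k$, and partition $\oP:=[n]\setminus\cP$ into disjoint sets $\cL_1,\ldots,\cL_h$ each of size at most $t-k$; such a partition exists because $h(t-k)\ge n-k=|\oP|$. For every $i\in\{0,\ldots,h\}$, define the locally consistent static adversary $\cA_i$ that corrupts $\cP$ and has each pivot party simultaneously send its round-$1$ message as if its input were $\vv'_\cP$ to parties in $\cL_1\cup\cdots\cup\cL_i$ and as if its input were $\vv_\cP$ to parties in $\cL_{i+1}\cup\cdots\cup\cL_h$, and then abort. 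Let $\Pi_i$ be the induced execution, $\halt_i$ the event that every party in $\oP$ halts within two rounds, $\out_i\in\zo^{|\oP|}$ the output vector of $\oP$, and set $q_i:=\pr{\out_i=\o^{|\oP|}\land \halt_i}$. By construction, the joint view (and hence the outputs and halting status) of $\oP$ under $\cA_0$ is distributed exactly as in the honest execution $\Pi(\vv)$, so $q_0\ge\pr{\Pi(\vv)=\o^n}$; symmetrically for $\cA_h$ and $\Pi(\vv')$, and $\alpha$-agreement of the honest $\Pi(\vv')$ yields $\pr{\Pi(\vv')=\o^n}\ge q_h-\alpha$.

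The main step is the inductive bound $q_i\ge q_{i-1}-(h+1)(2\alpha+1-\gamma)$ for every $i\in[h]$, which telescopes to $q_h\ge q_0 - h(h+1)(2\alpha+1-\gamma)$ and, combined with the two observations above, gives the lemma. I would prove this step by contradiction. Supposing the gap exceeds $(h+1)(2\alpha+1-\gamma)$ for some $i$, define the locally consistent static adversary $\cB_i$ corrupting $\cC:=\cP\cup\cL_i$ (of size $\le k+(t-k)=t$): the pivot parties act towards $\oC:=\oP\setminus\cL_i$ exactly as in $\Pi_i$---this also coincides with their behaviour in $\Pi_{i-1}$, since the two hybrids differ only in the messages sent to $\cL_i$---and send both the $\vv_\cP$-consistent and $\vv'_\cP$-consistent first-round messages to the (corrupted) $\cL_i$. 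The adversary then samples a uniformly random partition $\{\oC_0,\oC_1\}$ of $\oC$ (each honest party assigned independently), and has each party in $\cL_i$ send to every $p\in\oC_0$ the round-$2$ message it would have sent in $\Pi_i$ (simulating receipt of the $\vv'_\cP$-messages) and to every $p\in\oC_1$ the round-$2$ message it would have sent in $\Pi_{i-1}$. A direct check---identical to \cref{sec:technique:2}---verifies that the joint view of $\oC_0$ under $\cB_i$ is distributed as in $\Pi_i$ and that of $\oC_1$ as in $\Pi_{i-1}$, since the only difference between the hybrids is what $\cL_i$ receives from $\cP$ in round $1$, and this difference is absorbed entirely by $\cL_i$'s simulated round-$2$ messages.

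Consider the event $E_i$ that $\out_{i-1}=\o^{|\oP|}\land\halt_{i-1}$ but $\neg(\out_i=\o^{|\oP|}\land\halt_i)$; it has probability at least $q_{i-1}-q_i$. On $E_i$ there is a ``witness'' party in $\oP$ whose halting or output in $\Pi_i$ deviates from the all-$\o$-and-halt configuration of $\Pi_{i-1}$. When this witness lies in $\oC$, the random partition places it in $\oC_0$ with probability $\tfrac12$, so $\cB_i$ produces either a halting failure on that party or a disagreement with the $\o$-agreeing $\oC_1$; $(t,\alpha)$-agreement together with $(t,2,\gamma)$-halting applied to $\cB_i$ then bound this ``visible'' contribution by $2\alpha+2(1-\gamma)$. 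The ``invisible'' contribution---where every witness falls inside the corrupted $\cL_i$---is absorbed by summing over $i$ and invoking halting against the adversary that picks $i\in\{0,\ldots,h\}$ uniformly at random and runs $\cA_i$: this yields $\sum_i\pr{\neg\halt_i}\le(h+1)(1-\gamma)$, which is precisely what contributes the $(h+1)$ factor in the per-step loss.

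The main obstacle is this visibility asymmetry: a single attack $\cB_i$ cannot detect failures concentrated inside its own corrupted set $\cL_i$, and the $(h+1)$ factor emerges from pooling the $h+1$ hybrids to absorb that invisible mass. Once this accounting is fixed, the remainder is routine: verify local consistency of $\cA_i$ and $\cB_i$ using \cref{def:Semi-ConssitentParties}, check the view-matching claim for $\cB_i$, and combine the $2\alpha$ agreement penalty with the $(1-\gamma)$ halting-failure budget via a union bound to derive the required contradiction.
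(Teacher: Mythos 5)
Your plan follows the paper's own proof very closely: you build the same hybrid family over $i\in\{0,\ldots,h\}$ (the paper's protocol family $\Pi^\cP_d$), consider the same intermediate adversary corrupting $\cP\cup\cL_i$ that splits the remaining honest parties into two halves, deduce the same per-step loss of $(h+1)(2\alpha+1-\gamma)$, and invoke the same averaging argument for the halting budget. The two proofs share the same strategy and decomposition.

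There is, however, a genuine gap in the construction of the hybrid. You prescribe that the pivot parties in $\cP$ send their round-$1$ messages and then abort in \emph{every} hybrid $\cA_i$, including $i=0$ and $i=h$. With that choice the endpoint identification ``the joint view of $\oP$ under $\cA_0$ is distributed exactly as in $\Pi(\vv)$'' is false: in $\Pi(\vv)$ the parties of $\cP$ participate in round $2$, but under $\cA_0$ they abort, so the round-$2$ views of $\oP$ differ, and there is no a priori reason why $q_0\ge\pr{\Pi(\vv)=\o^n}$ (nor for the symmetric statement about $q_h$ and $\Pi(\vv')$). The paper's $\Pi^\cP_0$ and $\Pi^\cP_h$ act \emph{honestly in both rounds}, with effective inputs $\vv$ and $\vv'$ respectively, and only the interior protocols $\Pi^\cP_d$ for $0<d<h$ abort at round $2$; this is what makes the telescoping inequality anchor correctly at both ends. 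The fix is painless: make the endpoints honest. The transitions $0\to 1$ and $h-1\to h$ then involve an additional difference (whether $\cP$ aborts in round $2$), but since $\cB_i$ already corrupts $\cP$ it can simulate the honest round-$2$ message towards one half of the random partition and the abort towards the other, and local consistency is preserved.

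A secondary imprecision: in the ``invisible'' case you charge only the non-halting probability of the $\cA_i$-execution, but an invisible witness in $\cL_i$ could halt and output $1-\o$ while all of $\oC$ outputs $\o$---that is a disagreement among honest parties in the $\cA_i$-execution (an attack corrupting only $\cP$), and must also be charged against $\alpha$. The paper's coupling event is ``$\Pi^\cP_{d+1}$ halts and the output lies in $\zo^{|\oP|}\setminus\{\o^{|\oP|}\}$,'' which cleanly isolates this case. Once you include it, the per-step bound still closes, but the $\alpha$ term must be tracked or the constants will not match the stated lemma.
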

Namely, the lemma bounds from below the probability that in a random honest execution of the protocol on input $\vv'$ all parties halt within two rounds while outputting $\o$.

We prove \cref{lemma:SecondRound:Arb} below, but first use it to prove \cref{thm:SecondRound:Arb:Res}. We also make use of the following immediate observation.
\begin{claim}[Almost pre-agreement]\label{claim:SecondRound:Arb:Validity}
Let $\vv \in \zn$ and $\o \in \zo$ be such that $\ham(\vv,\o^n) \le t$. Then, $\pr{\Pi(\vv) = \o^n } \ge 1 - \alpha - \beta - (1- \gamma)$.
\end{claim}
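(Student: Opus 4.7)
The plan is to adapt the proof of \cref{claim:FirstRoundound:Arb:Validity} to the two-round halting regime, incorporating the halting event so as to convert the weaker conclusion ``output lies in $\omin^n$'' into the stronger conclusion ``output equals $\o^n$''. Recall that under our convention a party which does not halt within two rounds outputs $\perp$, so the event $\Pi(\vv) = \o^n$ is precisely ``all parties halt within two rounds and output $\o$''.

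First I would fix $\cA \subseteq [n]$ of size $n-t$ with $\vv_\cA = \o^{\size{\cA}}$, and consider the (locally consistent) adversary that corrupts the parties in $[n]\setminus\cA$ and instructs them to execute $\Pi$ honestly using the input bits prescribed by $\vv$. This attack is observationally indistinguishable, from the point of view of the honest parties in $\cA$, from the honest execution $\Pi(\vv)$. Since in this attack the honest parties' common input is $\o$, applying $(t,\beta)$-\Vld\ to input $\o^n$ bounds by $\beta$ the probability that some party in $\cA$ outputs a value different from $\o$, and applying $(t,q=2,\gamma)$-\Halt\ bounds by $1-\gamma$ the probability that some party in $\cA$ fails to halt within two rounds. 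A union bound, transported back to $\Pi(\vv)$, yields
\[
\pr{\Pi(\vv)_\cA = \o^{\size{\cA}}} \ge 1 - \beta - (1-\gamma).
\]

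Next I would extend the conclusion from $\cA$ to all of $[n]$ via agreement. Viewing the honest execution $\Pi(\vv)$ as a (trivial) attack with zero corruptions, $(t,\alpha)$-\Agr\ ensures that all $n$ parties output the same bit with probability at least $1-\alpha$; in particular none of them outputs $\perp$ on this event. Combining the two events with a union bound gives
\[
\pr{\Pi(\vv) = \o^n} \ge 1 - \alpha - \beta - (1-\gamma),
\]
as required.

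There is no real obstacle here: the proof is a direct two-round analogue of \cref{claim:FirstRoundound:Arb:Validity}, and the only subtlety is the bookkeeping around the convention that a non-halting party outputs $\perp$, which is exactly what forces us to pay the additive $(1-\gamma)$ term on top of the $\alpha+\beta$ loss inherited from the one-round statement.
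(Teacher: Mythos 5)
Your first step is fine, but your second step contains a genuine gap. You claim that applying $(t,\alpha)$-\Agr\ to the honest execution ``ensures that all $n$ parties output the same bit with probability at least $1-\alpha$; in particular none of them outputs $\perp$ on this event.'' This conflates agreement of the (untruncated) protocol $\Pi$ with halting within two rounds. The convention ``a party that runs more than two rounds outputs $\perp$'' is purely a bookkeeping device; the agreement property in \cref{def:BA} is a statement about the eventual outputs of $\Pi$, and it says nothing about whether a party decides by round two. Concretely, from your step one plus agreement you only learn that with probability at least $1-\alpha-\beta-(1-\gamma)$ every party \emph{eventually} outputs $\o$; the parties outside $\cA$ may still be undecided after two rounds, in which case $\Pi(\vv)\ne \o^n$ in the truncated notation.

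The repair is easy and brings you back exactly to the paper's proof: you need one more invocation of $(t,q=2,\gamma)$-\Halt, this time for the honest execution $\Pi(\vv)$ (zero corruptions), which bounds by $1-\gamma$ the probability that \emph{some party in $[n]$} fails to halt within two rounds. The correct union bound is then over three events: validity (restricted to $\cA$, via the attack), agreement, and halting over all of $[n]$. Equivalently, as the paper does, first derive $\pr{\Pi(\vv)\notin\omin^n}<\alpha+\beta$ exactly as in \cref{claim:FirstRoundound:Arb:Validity}, and only then subtract the additional $1-\gamma$ term via halting. Your current formulation applies halting only to $\cA$ (inside the attack) and then tries to get halting for the rest from agreement, which is not available.
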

\begin{proof}
The same argument as in the proof of \cref{claim:FirstRoundound:Arb:Validity} yields that
\[
\pr{\Pi(\vv) \notin \omin^n} \leq \alpha + \beta.
\]
Thus, by $\gamma$-second-round halting
\[
\pr{\Pi(\vv) \ne \o^n} \leq \alpha + \beta + (1-\gamma).
\]
\end{proof}

\begin{proof}[Proof of \cref{thm:SecondRound:Arb:Res}]
Consider the vectors $\vz= 0^k 0^k 0^k 1^{n-3k}$, $\vo= 1^k 1^k 0^k 1^{n-3k}$ and $\vstar =1^k 0^k 0^k 1^{n-3k}$. Note that for both $\o\in \zo$ it holds that $\ham(\vv_b,\o^n)\le t$ since $n/4 \leq k\leq t$), and that $\ham(\vv_b,\vstar)=k$. Applying \cref{lemma:SecondRound:Arb,claim:SecondRound:Arb:Validity} for each of these vectors, yields that for both $\o\in \zo$:
\begin{align*}
\pr{\Pi(\vstar) = \o^n} &\ge 1- \alpha - \beta -(1- \gamma) - h(h+1)(2\alpha + 1-\gamma) -\alpha\\
&\ge 1- \beta - (h+1)^2(2\alpha + 1-\gamma).
\end{align*}
Note that $w=h+1$, which implies $\beta +w^2(2\alpha + 1-\gamma) \ge 1/2$, and the proof follows by a simple calculation.
\end{proof}

\subsubsection{Proving \cref{lemma:SecondRound:Arb}}
We assume for ease of notation that $\ham(\vv,\vv')=k$ (rather than $\le k$) and let $\ell=t-k$. Assume for ease of notation that $h\cdot \ell = n-k $ (\ie no rounding), and for a $k$-size subset of parties $\cP \subset [n]$, let $ \cL^\cP_1,\dots,\cL_h^\cP$ be an arbitrary partition of $\oP = [n] \setminus \cP$ into $\ell$-size subsets. Consider the following family of protocols:

\newcommand{\PP}{\Pi^\cP}

{ \samepage
\begin{protocol}[$\PP_{d}$]\label{prot:main:Arb} ~
\begin{description}
	\item [Parameters:] A subset $\cP\subseteq [n]$ and an index $d\in (h)$.
	\item [Input:] Every party $\Pc_i$ has an input bit $v_i\in\zo$.
	
	\item [First round:] ~
	\begin{description}
		\item[Party $\Pc_i \in \cP$.]
		If $d=0$ [\resp $d=h$], act honestly according to $\Pi$ \wrt input bit~$v_i$ [\resp $1-v_i$].
		Otherwise,		
		\begin{enumerate}
			\item Choose random coins honestly (\ie uniformly at random).
			
			\item To each party in $\bigcup_{j \in \set{1,\ldots,d}} \cL_j^\cP$: send a message according to input $1-v_i$.
			
			\item To each party in $\bigcup_{j \in \set{d+1 ,\ldots, h}} \cL_j^\cP$: send a message according to input $v_i$ (real input).
			
			\item Send no messages to the other parties in $\cP$.
		\end{enumerate}
		
		\item[Other parties.] Act according to $\Pi$.
	\end{description}
	
	\item [Second round:]~
	
	\begin{description}
		\item[Party $\Pc_i \in \cP$.]
        If $d=0$ [\resp $d=h$], act honestly according to $\Pi$ \wrt input bit~$v_i$ [\resp $1-v_i$]; otherwise, abort.
		
		\item[Other parties.] Act honestly according to $\Pi$.
	\end{description}
\end{description}
\end{protocol}
}

Namely, the ``pivot'' parties in $\cP$ gradually shift their inputs from their real input to its negation according to parameter $d$. Note that protocol $\PP_{0}(\vv)$ is equivalent to an honest execution of protocol $\Pi(\vv)$, and $\PP_{h}(\vv)$ is equivalent to an honest execution of $\Pi(\vv')$, for $\vv'$ being $\vv$ with the coordinates in $\cP$ negated. Note that for ``intermediately'' protocols $\PP_d$ for $0<d<h$, the pivot parties send conflicting messages to honest parties in the first round and abort in the second round. The reason that aborting in the second round does not affect our analysis below is that, without loss of generality, honest parties can exchange their views in the second round, realize the pivot parties are cheating (as we consider locally consistent adversaries), and ignore their messages.
\cref{lemma:SecondRound:Arb} easily follows by the next claim about \cref{prot:main:Arb}. In the following we let $\delta_b = \pr{\Pi(\vv)_\oP = \o^{\size{\oP}}}$.

\newcommand{\ds}{{d^\ast}}
\begin{claim}\label{claim:SecondRound:Arb}
For every $k$-size subset $\cP\subset [n]$, $b\in\zo$ and $d\in (h)$, it holds that
\[
\pr{\PP_d(\vv)_\oP = \o^{\size{\oP}}} \ge \delta_b - d (h+1)(2\alpha + 1-\gamma).
\]
\end{claim}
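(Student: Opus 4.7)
\emph{Proof plan.} The plan is to prove the claim by induction on $d \in (h)$. The base case $d=0$ is immediate since, by definition, $\PP_0(\vv)$ coincides with an honest execution of $\Pi(\vv)$, making the stated probability equal to $\delta$.

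For the inductive step, I would assume the bound for $d-1$ and suppose, for contradiction, that it fails for $d$. Coupling $\PP_{d-1}(\vv)$ and $\PP_d(\vv)$ on the same honest random tape (and the same setup, if any) yields $\pr{A_{d-1} \land \neg A_d} > (h+1)(2\alpha + 1 - \gamma)$, where $A_j$ denotes the event $\PP_j(\vv)_\oP = \o^{\size{\oP}}$. I would write $\neg A_d = \neg H_d \cup B_d$, where $H_d$ is the event that all parties in $\oP$ halt within two rounds in $\PP_d$ and $B_d = H_d \land \neg A_d$; applying the halting guarantee to the (locally consistent) adversary that controls only $\cP$ and runs $\PP_d$ gives $\pr{\neg H_d} \le 1 - \gamma$, so that $\pr{A_{d-1} \land B_d}$ is of order $(h+1)\alpha$.

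The crux is that $\PP_{d-1}$ and $\PP_d$ differ only in the first-round messages sent from $\cP$ to $\cL_d^\cP$, which in turn causes the second-round messages sent by $\cL_d^\cP$ to differ between the two protocols. I would exploit this via the following adversary: corrupt $\cC = \cP \cup \cL_d^\cP$ (of size $k + \ell \le t$), sample a uniformly random partition $\set{\oC_0, \oC_1}$ of $\oC = [n] \setminus \cC$, and have the corrupt set act as in $\PP_d(\vv)$ toward $\oC_0$ and as in $\PP_{d-1}(\vv)$ toward $\oC_1$. Local consistency is preserved because parties in $\cL_d^\cP$ can justify their two sets of second-round messages by the two possible first-round messages they could have received from the corrupt set $\cP$.

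To reach the contradiction, I would split $A_{d-1} \land B_d$ according to whether some bad party $j^\ast \in \oP$ (necessarily outputting $\neq \o$ in $\PP_d$) lies in $\oC$ or whether all bad parties lie in $\cL_d^\cP$. In the former case, the random partition places a bad $\oC$-party in $\oC_0$ with probability at least $1/2$; combined with all of $\oC_1 \subseteq \oP$ outputting $\o$ by $A_{d-1}$, this produces disagreement among honest parties, so the agreement bound caps this case at $2\alpha$. The main obstacle is the latter case, in which the attack above does not directly exhibit disagreement since the bad parties are adversary-controlled. I would circumvent this by observing that in this case the honest execution of $\PP_d$ itself (corrupting only $\cP$) already exhibits disagreement among its honest set $\oP$, because all of $\oC$ outputs $\o$ while some member of $\cL_d^\cP$ outputs $\neq \o$; the agreement guarantee applied to this run directly bounds this case by $\alpha$. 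Summing the two cases gives $\pr{A_{d-1} \land B_d} \le 3\alpha$, which contradicts the previously derived lower bound whenever $h \ge 1$; the edge case $h = 0$ is vacuous, completing the induction.
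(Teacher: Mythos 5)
Your proposal is correct and follows the same induction-plus-coupling skeleton as the paper (couple $\PP_{d-1}$ and $\PP_d$, isolate the event $A_{d-1}\land B_d$, and attack by corrupting $\cC=\cP\cup\cL^\cP_d$ with a random partition of $\oC$), but your case analysis is genuinely tighter, and in fact it closes a gap that the paper's written argument leaves open. The paper's adversary samples $d\gets(h)$ uniformly and asserts that \cref{eq:main:Arb:1} directly implies disagreement with probability $>\alpha$; but this conclusion silently assumes that, conditioned on $A_\ds\land B_{\ds+1}$, the honest set $\oC$ of the two-block attack always contains some party that outputs $\neq b$ in $\PP_{\ds+1}$. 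When every such violating party happens to lie inside $\cL^\cP_{\ds+1}$ (the newly corrupted block), the attack produces no disagreement among $\oC$ at all. You explicitly split on this, bound the bad case by $\alpha$ via the observation that disagreement already occurs inside $\PP_d$ with only $\cP$ corrupted, and bound the good case by $2\alpha$ with a fixed-$d$ (non-uniform) adversary and the direct halting bound $\pr{\neg H_d}\le 1-\gamma$ rather than the paper's averaged $(h+1)(1-\gamma)$ version; the resulting $3\alpha$ upper bound against the $(h+1)2\alpha$ lower bound gives the contradiction for all $h\ge 1$. Two minor remarks: both your argument and the paper's ignore the exponentially small probability that the random partition leaves $\oC_1$ (or $\oC_0$) empty, so strictly one should absorb a $2^{-\Theta(n)}$ term or condition on a nontrivial split; and the closing clause about ``$h=0$'' is moot since $h=\ceil{(n-k)/(t-k)}\ge 2$ whenever $t<n$.
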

We prove \cref{claim:SecondRound:Arb} below, but first use it to prove \cref{lemma:SecondRound:Arb}.
\begin{proof}[Proof of \cref{lemma:SecondRound:Arb}]
By \cref{claim:SecondRound:Arb},
\begin{align*}
\pr{\PP_h(\vv)_\oP = \o^{\size{\oP}}} \ge \delta_b - h(h+1)(2\alpha + 1-\gamma).
\end{align*}
Recall that $\PP_h(\vv)_\oP = \o^{\size{\oP}}$ only when parties complete the protocol in the second round, since, by assumption, a party that continues to beyond the second round outputs $\bot$. In addition, since $\PP_h(\vv)$ is just an honest execution of $\Pi(\vv')$, by agreement it holds that
\[
\pr{\Pi(\vv') = \o^n} \ge \delta_b - h(h+1)(2\alpha + 1-\gamma) - \alpha.
\]
\end{proof}

\begin{proof}[Proof of \cref{claim:SecondRound:Arb}]
The proof is by induction on $d$. The base case $d=0$ holds by definition. Suppose for contradiction the claim does not hold, and let $\ds\in(h-1)$ be such that the claim holds for $\ds$ but not for $\ds+1$. Let $\gamma_d$ be the probability that all honest parties halt in the second round of a random execution of $\PP_d(\vv)$. Since the claim holds for $\ds$, it holds that
\begin{align}\label{eq:main:Arb:01}
\pr{\PP_\ds(\vv)_\oP = \o^{\size{\oP}}} \ge \delta_b - \beta- \ds (h+1)(2\alpha + 1-\gamma)
\end{align}
Since the claim does not hold for $\ds+1$, but  all honest parties output something in $\PP_{\ds+1}$ with probability at least $\gamma_{\ds+1}$, we have that
\begin{align}\label{eq:main:Arb:02}
\pr{\PP_{\ds+1}(\vv)_\oP \in \zo^{\size{\oP}} \setminus \sset{\o^{\size{\oP}}}} > 1-\left(\delta_b - \beta - (\ds+1) (h+1)(2\alpha + 1-\gamma)\right)- (1-\gamma_{\ds+1})
\end{align}

We note that for every $d\in (h)$
\begin{align}\label{eq:main:Arb:03}
\frac{1- \gamma_d}{h+1} \le 1-\gamma
\end{align}

Indeed, otherwise, the adversary that corrupts the parties in $\cP$ and acts like $\PP_d$ for a random $d\in (h)$, violates the $\gamma$-second-round-halting property of $\Pi$. We conclude that
\begin{align}\label{eq:main:Arb:1}
\lefteqn{\ppr{\vr}{\PP_\ds(\vv;\vr)_\oP = \o^{\size{\oP}} \qand \PP_{\ds+1}(\vv;\vr)_\oP \in \zo^{\size{\oP}} \setminus \sset{\o^{\size{\oP}}}}}\\
& \ge 1 - \left(1- \ppr{\vr}{\PP_\ds(\vv;\vr)_\oP = \o^{\size{\oP}}} \right) - \left(1- \ppr{\vr}{\PP_{\ds+1}(\vv;\vr)_\oP \in (\zo^{\size{\oP}} \setminus \sset{\o^{\size{\oP}}}}\right)\nonumber\\
& > (h+1) (2\alpha + 1 - \gamma) - (1-\gamma_{\ds+1}) \nonumber\\
& \ge 2\alpha(h+1),\nonumber
\end{align}
for $\vr$ being the randomness of the parties. The second inequality is by \cref{eq:main:Arb:01,eq:main:Arb:02}, and the third one by \cref{eq:main:Arb:03}.

Consider the adversary \Ac that samples $d\gets (h-1)$, corrupts the parties in $\cP \cup \cL^\cP_{d+1}$, and acts towards a uniform random subset of the honest parties according to $\PP_d$ and to the remaining parties according to $\PP_{d+1}$. Since \Ac violates agreement if it guesses $d= \ds$ and it partitions the honest parties suitably, \cref{eq:main:Arb:1} yields that \Ac causes disagreement with probability larger than $2\alpha(h+1)/(2(h+1)) = \alpha$.  Since \Ac corrupts $\ssize{\cP \cup \cL^\cP_{d+1}}\leq t$ parties, this contradicts the assumption about $\Pi$.
\end{proof}

\subsection{Public-Randomness Protocols}\label{sec:TwoRoundProtcol:PR}
We proceed to prove our lower bound for second-round halting of public-randomness protocols.

\begin{theorem}[Lower bound on second-round halting, public-randomness protocols. \cref{thm:SecondRound:PR} restated] \label{thm:SecondRound:PR:Res}
\ThmSeconRoundPR
\end{theorem}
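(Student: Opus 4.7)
The plan is to extend the hybrid argument of Theorem~\ref{thm:SecondRound:Arb:Res} by adding an ``aborting set'' $\cS$ as a third attack dimension, and to invoke Conjecture~\ref{con:IsoBot} in place of the KKL-based Lemma~\ref{lem:KKL} that handles the superb single-coin case of Theorem~\ref{bound:KKL}. Fix $\ept,\epg>0$, set $\sigma=\Theta(\ept)$ and $\ell=\Theta(1/\ept)$ so that $|\cP|+\max_i|\cL_i|+\sigma n\le t$, and let $\delta=\delta(\sigma,\lambda)$ be the constant promised by Conjecture~\ref{con:IsoBot} for $\lambda:=\gamma-O(\alpha+\beta)$. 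For $t\ge(1/3+\ept)n$ I take $\vz=0^{n-t}1^{t}$ and $\vo=0^{t}1^{n-t}$ (so $\ham(\vz,\vo)=n-2t\le n/3$), let $\cP$ be the coordinates where they differ, partition $\oP=\cL_1\cup\cdots\cup\cL_\ell$ into roughly equal pieces, and define the hybrid family $\Pi_i^\cS$ exactly as in Section~\ref{sec:technique:3:2}: the pivot parties in $\cP$ send according to $\vo$ to $\cL_1,\ldots,\cL_i$ and according to $\vz$ to $\cL_{i+1},\ldots,\cL_\ell$ in round one, while the aborting parties in $\cS$ execute round one honestly but send no round-two messages.

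Let $\halt_i^\cS$ denote the event that every party outside $\cP\cup\cS$ halts at round two, and let $\cR_i^\cS(b)\subseteq\Sigma^n$ collect the random-coin vectors $\vr$ for which $\halt_i^\cS$ occurs and all those parties output $b$. The key feature of the public-randomness model is that in $\Pi_i^\cS$ every honest party's view of $\vr$ is mediated by $\bot_\cS(\vr)\in\sbn$ (parties in $\cS$ reveal their first-round coins honestly but contribute nothing in round two), so each $\cR_i^\cS(b)$ is the pre-image of some $\cA_{b,i}^\cS\subseteq\sbn$ under $\vr\mapsto\bot_\cS(\vr)$. The technical core is an inductive ``neighboring'' lemma: for every $i\le\ell$, the mass $\ppr{\vr}{\vr\in\cR_i^\cS(0)}$ remains $\ge\lambda$ on a $(1-\delta)$-measure of $\cS\gets\bns$, up to an additive error of order $i\cdot\ell\cdot(\alpha+1-\gamma)$. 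The induction step splits into two cases: (a) the average halting probability of $\Pi_i^\cS$ drops noticeably, which is ruled out exactly as in Claim~\ref{claim:SecondRound:Arb} by a random-hybrid attack contradicting $\gamma$-halting; or (b) halting stays high but on some $\cS^\ast$ the mass of $\cR_i^{\cS^\ast}(0)$ falls while that of $\cR_i^{\cS^\ast}(1)$ rises. In case (b), one extracts two \emph{fixed} sets $\cA_0,\cA_1\subseteq\sbn$ from the families $\{\cA_{b,i-1}^\cS\}$ and $\{\cA_{b,i}^\cS\}$ that satisfy the hypothesis of Conjecture~\ref{con:IsoBot}; its conclusion then produces a noticeable-measure event on $(\vr,\cS^\ast)$ where the honest parties of $\Pi_{i-1}^{\cS^\ast}(\vz)$ output $0$ if $\cS^\ast$ delivers its second-round message and $1$ if it aborts. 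An adaptive, rushing, locally consistent adversary corrupting $\cP\cup\cL_i\cup\cS^\ast$ --- with the parties in $\cS^\ast$ aborting towards a random half of the non-corrupted parties and delivering real round-two messages to the other half --- then violates agreement with probability $\ge\delta/2$, contradicting $(t,\alpha)$-agreement for $\alpha<\delta/2$.

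Applying the neighboring lemma at $i=\ell$, and observing that $\Pi_\ell^{\cS}(\vz)$ is equivalent to $\Pi(\vo)$ under the locally-consistent attack corrupting $\cP\cup\cS$ (with $\cP$ playing as per $\vo$ and $\cS$ aborting at round two), we conclude that for typical $\cS$ the honest parties under this attack output $0$ with probability $\ge\lambda-O(\ell^2(\alpha+1-\gamma))$; on the other hand, $(t,\beta)$-validity applied to $\vo$ under the same attack forces them to output $1$ with probability $\ge1-\beta$, and a short rearrangement yields $\gamma<\epg$ once $\alpha$ is chosen of order $\delta\cdot\ept\cdot\epg^2$ and $\beta=\epg^2/200$ (as given in the statement). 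The $(1/4+\ept)$-threshold is then handled exactly as in Theorem~\ref{thm:SecondRound:Arb:Res}, by gluing two applications of the above argument through the intermediate vector $\vstar$, costing only a factor of two in the final bound to obtain $\gamma<1/2+\epg$. The main obstacle will be executing case (b) cleanly --- specifically, extracting two \emph{fixed} sets $\cA_0,\cA_1\subseteq\sbn$ whose quantifier structure matches the exact premise of Conjecture~\ref{con:IsoBot} (``with probability $1-\delta$ over $\cS$, with probability at least $\lambda$ over $\vr$, both $\vr$ and $\bot_\cS(\vr)$ lie in $\cA_b$'') from hybrid families indexed by \emph{both} $i$ and $\cS$, and then bookkeeping the $O(\ell^2)$ additive error accumulated across the hybrid steps tightly enough to conclude $\gamma<\epg$ rather than only $\gamma$ bounded away from $1$.
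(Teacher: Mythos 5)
Your high-level plan matches the paper's strategy exactly: extend the pivot-party hybrid with an aborting set $\cS$ as a third attack dimension, kill the case where halting drops via a random-hybrid attack, and invoke Conjecture~\ref{con:IsoBot} in the remaining case. You have also correctly flagged the real difficulty --- extracting two \emph{fixed} sets $\cA_0,\cA_1\subseteq\sbn$ whose quantifier structure matches the conjecture's hypothesis --- but you have not resolved it, and the resolution is the content of the paper's Lemma~\ref{lemma:SecondRound:PRR}. The paper works \emph{conditionally on the setup parameters $\vf$}, which in a public-randomness protocol are broadcast in round one and hence visible to a rushing adversary. For each fixed $\vf$, each fixed hybrid index $d$, and each output value $c\in\zo$, it defines $\cA_c^{\vf}$ to be the set of ($\bot$-extended) second-round coin vectors on which $\Pi_d(\vv;(\vf,\cdot))$ halts in two rounds with output $c$. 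The conjecture is applied \emph{pointwise in $\vf$}, for a single $d$, to the pair $(\cA_0^{\vf},\cA_1^{\vf})$; the two sets are indexed by the output bit only, and the roles of $\cS$ and of $\bot_\cS(\vr)$ are absorbed into the definition of $\cA_c^{\vf}$ as a subset of $\sbn$. The induction therefore tracks the $\vf$-\emph{measure} of a set $\cT_{d,c}$ of "good setups" --- those $\vf$ for which, with $\cS$-probability $\ge 1-\delta$, \emph{both} $\vr$ and $\bot_\cS(\vr)$ land in $\cA_c^{\vf}$ with $\vr$-probability $\ge\lambda$. Crucially this double-membership is built into the invariant from the start; your sketched invariant only tracks $\ppr{\vr}{\vr\in\cR_i^\cS(0)}$, which does not match the conjecture's premise, and the error accumulates in $\vf$-measure (Claim~\ref{claim:PR:1} gives $\pr{\cT_{d,0}\mid\tT}+\pr{\cT_{d,1}\mid\tT}\le 1+O(\lambda/h)$), not as an additive $O(\ell^2(\alpha+1-\gamma))$ slack in $\vr$-mass.

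You also misplace the use of adaptivity. The agreement-violating adversary in case (b) (the proof of Claim~\ref{claim:PR:1}) is rushing but corrupts \emph{statically}: it fixes $\cP\cup\cS\cup\cL_d$ before interaction and only decides how to split its second-round messages after seeing the parties' coins. Adaptivity is instead essential for a different step, the paper's Claim~\ref{claim:PR:00}, which shows $\pr{\tT}\ge\gamma-6\lambda$. There the adversary corrupts only $\cP$ initially; after round one it has seen $\vf$, samples $\cS$, estimates for each hybrid index $d$ the probability of simultaneous halting with and without $\cS$ aborting, and then adaptively corrupts $\cS\cup\cL_{d+1}$ for the worst $d$. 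This is what guarantees that the inductive argument starts from a large $\vf$-measure of good setups, and it has no counterpart in your sketch --- your claim that case (a) is ruled out "exactly as in Claim~\ref{claim:SecondRound:Arb}" gives only an average halting bound, which by itself does not control the measure of good $\vf$.
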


Assume \cref{con:IsoBot} holds. Let $\Pi$ be as in the theorem statement, and assume $\gamma= \epg$ in the case $t \ge (1/3 +\ept)\cdot n$ and $\gamma= \half + \epg$ in the case $t \ge (1/4 +\ept)\cdot n$. Let $\lambda =\epg /10$ and $\sigma= \ept/4$. Recall that $\bot_\cS(\vx)$ is the string resulting by replacing all entries of $\vx$ indexed by $\cS$ with $\bot$.
\cref{con:IsoBot} yields that there exists $\delta>0$ such that the following holds for large enough $n$: let $\Sigma$ be a finite alphabet and let $\cA_0,\cA_1 \subset \sbn$ be two sets such that for both $b\in \zo$:
\begin{align*}
\ppr{\cs\gets \bns}{\ppr{\vr \gets \sn}{\vr,\bot_{\cS}(\vr) \in \cA_b} \ge \lambda} \ge 1-\delta.
\end{align*}

Then,
\begin{align}\label{eq:IsoBot}
\ppr{\vr\gets \sn,\cS \gets \bns}{\forall b\in \zo\colon \set{\vr,\bot_{\cS}(\vr)} \cap \cA_b \neq \emptyset} \ge \delta.
\end{align}

In the following we assume $\alpha = \min\set{\delta\lambda \ept/10, \beta}$ and derive a contradiction, yielding that the agreement error has to be larger than that.

Fix $n$ that is large enough for \cref{eq:IsoBot} to hold and that (by Chernoff bound) $\ppr{\cs \gets \bns}{\size{\cs} > 2\sigma n} = 2^{- \Theta(n\cdot \sigma)} \le \alpha$, \ie $n> \Theta((\log 1/\alpha)/\sigma)$. As in the proof of \cref{thm:SecondRound:Arb:Res}, we assume for ease of notation that an honest party that runs more than two round outputs $\perp$, and that the honest parties in $\Pi$ never halt in one round. We also omit the security parameter from the parties input list. We assume \wlg that in the first round, the parties flip no coin, since such coins can be added to the setup parameter.

We use the following notation: the setup parameter and second-round randomness of the parties in $\Pi$ are identified with elements of $\cF$ and $\cR$, respectively. We denote by $f_i$ and $r_i$ the setup parameter and the second-round randomness of party $\Pc_i$ in $\Pi$, and let $\DF$ be the joint distribution of the parties' setup parameters (by definition, the joint distribution of the second-round randomness is the product distribution $\rn$). For $\vv \in \zn$, $\vf=(f_1,\ldots,f_n) \in \SDF$, and $\vr=(r_1,\ldots,r_n) \in\rn$, let $\Pi(\vv;(\vf,\vr))$ denote the execution of $\Pi$ in which party $\Pc_i$ gets input $v_i$, setup parameter $f_i$ and second-round randomness $r_i$. We naturally apply this notation for the variants of $\Pi$ considered in the proof.

For $\cs\subseteq[n]$, let $\Pi^\cs$ be the variant of $\Pi$ in which the parties in $\cs$ halt at the end of the first round. Let $k = \ceil{{t-\ept \cdot n}}$ (\ie $k = \ceil{n/3}$ if $t\ge (1/3 + \ept)\cdot n$, and $k = \ceil{n/4}$ if $t\ge (1/4 + \ept)\cdot n$). The heart of the proof lies in the following lemma.

\begin{lemma}[Neighboring executions]\label{lemma:SecondRound:PR}
Let $\vv,\vv' \in \zn$ be with $\ham(\vv,\vv') \le k$, let $\o\in \zo$, and let $\oS = [n] \setminus \cs$.  Then, with probability at least $\gamma- 7\lambda- \frac{\alpha + \pr{\Pi(\vv) \ne \o^n}}{\lambda}$ over $\vf \gets \DF$, it holds that
\begin{align*}
\ppr{\cs \gets \bns}{\ppr{\vr \gets \rn}{\Pi(\vv';(\vf,\vr)) = \o^n \qand \Pi^\cs(\vv';(\vf,\vr))_\oS = \o^{\size{\oS}} } \ge \lambda}\ge 1-\delta.
\end{align*}
\end{lemma}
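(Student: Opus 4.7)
The plan is to induct on $i \in \sset{0, 1, \ldots, \ell}$ with $\ell = \ceil{2/\ept}$, through a family of protocol variants $\Pi^\cP_i$ interpolating between $\Pi(\vv)$ and $\Pi(\vv')$, analogously to \cref{sec:TwoRoundProtcol:Arbitrary} but enriched with an aborting dimension to leverage \cref{con:IsoBot}. Let $\cP = \sset{j \in [n] \suchthat \vv_j \ne \vv'_j}$ (so $\size{\cP} \le k$), and let $\cL_1^\cP, \ldots, \cL_\ell^\cP$ be an arbitrary partition of $\oP = [n] \setminus \cP$ into parts each of size at most $(\ept/2)\cdot n$. In $\Pi^\cP_i$, the ``pivot'' parties $\cP$ send round-1 messages per input $\vv'$ to parties in $\cL_1^\cP \cup \cdots \cup \cL_i^\cP$ and per input $\vv$ to the remaining recipients, while all other parties act honestly on input $\vv$; in particular, $\Pi^\cP_0(\vv) \equiv \Pi(\vv)$ and $\Pi^\cP_\ell(\vv) \equiv \Pi(\vv')$. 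For each $\vf \in \Supp(\DF)$, define $\cA^\vf_i \subseteq (\cR \cup \bot)^n$ by: $\vx \in \cA^\vf_i$ iff running $\Pi^\cP_i$ on input $\vv$, setup $\vf$, second-round randomness given by $\vx$ at non-$\bot$ coordinates, and aborts at $\bot$-coordinates after round~1, yields all non-aborted parties outputting $\o$. Since $\Pi^\cP_\ell(\vv) \equiv \Pi(\vv')$, the inner event in the lemma's statement matches the stability of $\cA^\vf_\ell$ in the sense of \cref{con:IsoBot}'s hypothesis.

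The induction invariant I would maintain is that for at least a $\gamma - (7i/\ell)\lambda - (\alpha + \pr{\Pi(\vv) \ne \o^n})/\lambda$ fraction of $\vf$, the set $\cA^\vf_i$ is \emph{stable}, namely $\ppr{\cs \gets \bns}{\ppr{\vr \gets \rn}{\vr, \bot_\cs(\vr) \in \cA^\vf_i} \ge \lambda} \ge 1 - \delta$. The base case ($i = 0$) would follow from $\gamma$-halting plus $\alpha$-agreement applied to the pure-aborting adversary (valid since $\size{\cs} \le 2\sigma n \le t$ with probability $\ge 1 - \alpha$ by a Chernoff tail on $\bns$ for large $n$), combined with the bound $\pr{\Pi(\vv) \ne \o^n}$, via a Markov-type averaging over $\vf$. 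For the induction step, I would suppose the invariant holds at $i-1$ but fails at $i$ on a noticeable set of $\vf$. For those $\vf$, using $\gamma$-halting---which makes $\cA^\vf_i \cup \cA^\vf_i(1-\o)$ dense (with $\cA^\vf_i(1-\o)$ the analogue for output $1-\o$)---together with $\alpha$-agreement on the bridge-adversary $\Pi^\cP_{i-1}$-vs-$\Pi^\cP_i$, I would show that $\cA^\vf_i(1-\o)$ is itself stable. Applying \cref{con:IsoBot} with $\cA_0 = \cA^\vf_{i-1}$ and $\cA_1 = \cA^\vf_i(1-\o)$ then gives: with probability $\ge \delta$ over $(\vr, \cs)$, the pair $\sset{\vr, \bot_\cs(\vr)}$ hits both sets.

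The hit-event translates into a locally consistent, adaptive, rushing attack violating $\alpha$-agreement: corrupt $\cP \cup \cL_i^\cP \cup \cs$ (total size at most $k + \ept n \le t$), fix any partition $\sset{\oC_0, \oC_1}$ of the remaining honest parties, and have $\cP \cup \cL_i^\cP$ multicast $\Pi^\cP_{i-1}$-messages to $\oC_0$ and $\Pi^\cP_i$-messages to $\oC_1$---locally consistent since corrupted $\cL_i^\cP$ may, per \cref{def:Semi-ConssitentParties}, lie about its round-1 messages from corrupted $\cP$. Having observed the round-2 randomness $\vr$ (public-randomness model, rushing adversary), the attacker chooses $\cs$'s per-recipient abort pattern (aborting toward $\oC_b$ exactly when the $\cA_b$-hit is realized by $\bot_\cs(\vr)$ rather than $\vr$) to match whichever of the four hit-cases the conjecture delivers, making $\oC_0$ output $\o$ and $\oC_1$ output $1 - \o$; the resulting disagreement probability $\ge \delta$ exceeds $\alpha$ by our choice of $\alpha$, contradicting $\alpha$-agreement. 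The main obstacle I anticipate is establishing stability of $\cA^\vf_i(1-\o)$ in the failure case, which requires delicate use of $\alpha$-agreement applied to the bridge-attack together with the halting bound; a secondary bookkeeping difficulty is that each individual $\Pi^\cP_i$'s halting probability may fall below $\gamma$, addressed by a random-$i$-rushing averaging argument (as in \cref{claim:SecondRound:Arb}) bounding $\sum_i (1 - \gamma_i) \le (\ell + 1)(1 - \gamma)$ and yielding the $7\lambda/\ell$ per-step allowance in the invariant.
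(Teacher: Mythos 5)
Your proposal captures the right high-level shape: interpolating protocol variants indexed by how many parts of $\oP$ the pivot parties $\cP$ treat as if the input were $\vv'$, augmenting the attack with a random aborting set $\cS \gets \bns$, and invoking \cref{con:IsoBot} on pairs of stable sets to manufacture a disagreement attack. Your application of the conjecture with $\cA_0 = \cA^\vf_{i-1}$ (output $\o$ at step $i-1$) and $\cA_1 = \cA^\vf_i(1-\o)$ (output $1-\o$ at step $i$) is a genuinely different pairing than the paper's (which applies it to the output-$0$ and output-$1$ sets of the \emph{same} variant $\Pi_d$), but it is a legitimate instantiation and the accompanying rushing attack is plausible.

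There is, however, a real gap in your induction step, and it is the crux of why the paper's proof is structured the way it is. You write that for the "bad" $\vf$'s (where $\cA^\vf_{i-1}$ is stable but $\cA^\vf_i$ is not), ``$\gamma$-halting makes $\cA^\vf_i \cup \cA^\vf_i(1-\o)$ dense.'' But $\gamma$-halting is a guarantee averaged over $\vf$, $\cS$ \emph{and} $\vr$, against a fixed adversary; it gives you nothing pointwise on the bad $\vf$'s, which are by assumption a $7\lambda/\ell$-measure subset. A Markov step to go pointwise over $\vf$ only works when $1-\gamma$ is small, which is precisely \emph{not} the regime here ($\gamma = \epg$ for $t \ge (1/3 + \ept)n$, so $1-\gamma \approx 1$; and even for $t \ge (1/4 + \ept)n$, $1-\gamma \approx 1/2$). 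For the same reason, the ``random-$i$ averaging bounding $\sum_i(1-\gamma_i) \le (\ell+1)(1-\gamma)$'' that you borrow from the arbitrary-protocol case does not give a $7\lambda/\ell$-per-step budget: it only bounds the average $1-\gamma_i$ by $1-\gamma$, which is a constant, not $o(1/\ell)$. That argument was adequate for \cref{thm:SecondRound:Arb} because there the conclusion is only $\gamma \le 1 - \Theta(1)$; here the target is $\gamma \le \epg$, and per-step losses of order $1-\gamma$ swamp everything.

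The paper's proof closes exactly this gap with a device you do not have: it first defines the global set $\tT$ of setup strings $\vf$ that are ``well-behaved simultaneously at every interpolation index $d$, for some output,'' and proves $\Pr[\tT] \ge \gamma - 6\lambda$ via an \emph{adaptive} adversary (\cref{claim:PR:00}) that, after corrupting $\cP$ and seeing $\vf$ (and its public round-1 messages), \emph{estimates} by sampling which index $d$ is bad and only then corrupts $\cL_{d+1}^\cP \cup \cS$ accordingly. This is the only place in the argument where adaptivity is genuinely needed, and it is what converts the averaged $\gamma$-halting guarantee into a pointwise statement on a large set of $\vf$'s. All subsequent claims (\cref{claim:PR:0,claim:PR:01,claim:PR:1}) are conditioned on $\tT$, which is what makes ``density'' and the dichotomy between outputs available pointwise. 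Your proposal has no analogue of $\tT$ or the adaptive estimation step, and without it, neither the base case ($i=0$) nor the claim that $\cA^\vf_i(1-\o)$ is stable for the bad $\vf$'s can be justified. You would need to introduce this global conditioning set and the adaptive-estimation adversary before the rest of your argument can be made to work.
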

Namely, in an execution of $\Pi(\vv')$, all honest parties halt after two rounds and output $\o$, regardless of whether a random subset of parties aborts after the first round. \cref{lemma:SecondRound:PR} is proven in \cref{sec:lemma:SecondRound:PR}, but let us first use it to prove \cref{thm:SecondRound:PR:Res}. We make use of the following immediate observation:
\begin{claim}[Almost pre-agreement]\label{claim:SecondRound:PR:Validity}
Let $\vv \in \zn$ and $\o \in \zo$ be such that $\ham(\vv,\o^n) \le t$. Then, $\pr{\Pi(\vv) \in \set{\o,\perp}^n } \ge 1 - \alpha - \beta $.
\end{claim}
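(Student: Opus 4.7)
The plan is to essentially replay the argument used for \cref{claim:FirstRoundound:Arb:Validity} (Claim~3.3), since the statement is identical and its justification does not rely on first-round-specific features. The key point is that $\omin^n = \set{\o,\bot}^n$ is exactly the complement of ``some honest party outputs $1-\o$'', and the latter event can be bounded using validity and agreement together.

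Concretely, I would first pick a set $\cA \subseteq [n]$ of size $n - t$ on which $\vv$ agrees with $\o^n$ (such a set exists because $\ham(\vv,\o^n) \le t$). Then I would introduce the virtual adversary that corrupts the at most $t$ parties outside $\cA$ and instructs each of them to execute $\Pi$ honestly using the bit $\vv_i$ as its input. From the perspective of the parties in $\cA$, the resulting execution is distributed identically to an honest execution of $\Pi(\vv)$; but from the adversary's perspective it is an attack on $\Pi(\o^{\size{\cA}} \concat \star)$ where the honest parties start with input $\o$. By $(t,\beta)$-validity, the parties in $\cA$ output $\o$ with probability at least $1-\beta$. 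Translating back to the honest execution, this gives
\[
\pr{\Pi(\vv)_\cA \notin \omin^{\size{\cA}}} \le \beta.
\]

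Next, I would invoke $(t,\alpha)$-agreement, applied to the null adversary that corrupts no party (legal since $0 \le t$): in a random execution of $\Pi(\vv)$, all $n$ parties output the same bit with probability at least $1-\alpha$. I would combine these two bounds by observing that the event $\Pi(\vv) \notin \omin^n$ requires some party to output $1-\o$; under agreement that would force \emph{all} parties, and in particular all parties in $\cA$, to output $1-\o$, an event of probability at most $\beta$ by the validity step. A union bound over the failure of agreement and the failure of ``parties in $\cA$ output in $\omin$'' then yields
\[
\pr{\Pi(\vv) \notin \omin^n} \le \alpha + \beta,
\]
which is exactly the claim.

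I do not anticipate a genuine obstacle here: the argument is a two-line application of validity followed by agreement, so the only thing to be careful about is the bookkeeping between $\perp$-outputs and bit-outputs in the statement of agreement (outputs of $\perp$ are consistent with $\omin$, so they only help the bound). In particular, no use is made of the halting parameter $\gamma$, which is why the conclusion here is slightly weaker than (and implied by) what one could prove using $(t,q,\gamma)$-halting, as in \cref{claim:SecondRound:Arb:Validity}.
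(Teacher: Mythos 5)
Your proof is correct and matches the paper's argument, which simply points back to the proof of \cref{claim:FirstRoundound:Arb:Validity}: fix a set $\cA$ of $n-t$ coordinates where $\vv$ agrees with $\o^n$, use validity against the adversary that plays honestly on $[n]\setminus\cA$ with the $\vv$-inputs to get $\pr{\Pi(\vv)_\cA \notin \omin^{\size{\cA}}} \le \beta$, then union-bound with agreement to extend to all $n$ coordinates. One small quibble with your closing aside: the present bound $\pr{\Pi(\vv)\in\omin^n}\ge 1-\alpha-\beta$ is \emph{not} implied by the halting-strengthened conclusion of \cref{claim:SecondRound:Arb:Validity} (that only yields $\pr{\Pi(\vv)\in\omin^n}\ge 1-\alpha-\beta-(1-\gamma)$); rather, the two claims share the same first step and the latter spends an extra $(1-\gamma)$ to upgrade $\perp$-outputs to actual halts.
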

\begin{proof}
The proof of this claim uses an identical argument as in the proof of \cref{claim:FirstRoundound:Arb:Validity}.	
\end{proof}

\paragraph{Proving \cref{thm:SecondRound:PR:Res}.}
\begin{proof}[Proof of \cref{thm:SecondRound:PR:Res}]~
We separately prove the case $t \ge (1/3 + \ept)\cdot n$ and $t \ge (1/4 + \ept)\cdot n$.

\paragraph{The case $t \ge (1/3 + \ept)\cdot n$.}
Let $\vz= 0^k 1^{\cnk} 0^{\fnk} $ and let $\vo= 1^k 1^{\cnk} 0^{\fnk} $. Note that $\ham(\vz,\vo)=k$ and that for both $\o\in \zo$ it holds that $\ham(\vv_b,\o^n) \le t$. We will use \cref{lemma:SecondRound:PR,claim:SecondRound:PR:Validity} to prove that $\Pi(\vv_1) = 0^n$ with noticeable probability, contradicting the validity of the protocol.

Recall that, in this case, $\gamma = \epg$, that $\lambda = \epg/10$ and $\alpha,\beta \le \epg^2/200 = \lambda^2/2$.  \cref{claim:SecondRound:PR:Validity} yields that for both $\o\in\zo$:
\begin{align}\label{eq:SecondRound:PR:1}
\pr{\Pi(\vv_\o) \ne \oo^n }\ge \pr{\Pi(\vv_\o) \in \set{\o,\perp}^n} \ge 1- \alpha - \beta \ge 1- \lambda^2
\end{align}
Applying \cref{lemma:SecondRound:PR} \wrt $\vz$ and $\vo$ and $b=0$, yields that with probability at least
\[
\gamma- 7\lambda- \frac{\alpha + \pr{\Pi(\vz) \ne 0^n}}{\lambda} \geq 3\lambda- \lambda=2 \lambda
\]
over $\vf \gets \DF$, it holds that (by discarding the probability over $\cS$ since the item below does not depend on $\cS$)
\begin{align*}
\ppr{\vr}{\Pi(\vo;(\vf,\vr)) = 0^n}\ge \lambda.
\end{align*}
Therefore, overall
\begin{align*}
\pr{\Pi(\vo) = 0^n}\ge 2 \lambda^2,
\end{align*}
in contradiction to \cref{eq:SecondRound:PR:1}.

\paragraph{The case $t \ge (1/4 + \ept)\cdot n$.} Consider the vectors $\vz= 0^k 0^k 0^k 1^{n-3k}$, $\vo= 1^k 1^k 0^k 1^{n-3k}$ and $\vstar =1^k 0^k 0^k 1^{n-3k}$. Note that for both $\o\in \zo$ it holds that $\ham(\vv_\o,\o^n) \le t$ and that $\ham(\vv_\o,\vstar)=k$. Applying \cref{lemma:SecondRound:PR,claim:SecondRound:PR:Validity} on $\vv_b$ and $\vstar$, for both $\o\in \zo$, yields that $\Pi^\cs(\vstar) = \o^n$ with noticeable probability over the choice of $\cs$. This will allow us to use \cref{con:IsoBot} to lowerbound the protocol's agreement.

Recall that the distribution $\bns$, from which set $\cs$ is sampled, is the distribution induced on the subsets of $[n]$ by sampling each element independently with probability $\sigma$. In addition, recall that in the case at hand ($t \ge (1/4 + \ept)\cdot n$), we assume that $\gamma=1/2+\eps_\gamma$.
A similar calculation to the previous case yields that by \cref{lemma:SecondRound:PR,claim:SecondRound:PR:Validity}, for both $\o\in \zo$: with probability at least $\frac12 + 2\lambda$ over $\vf \gets \DF $ it holds that
\begin{align*}
\ppr{\cs \gets \bns}{\ppr{\vr \gets \rn}{\Pi(\vstar;(\vf,\vr)) = \o^n \qand \Pi^\cs(\vstar;(\vf,\vr))_\oS = \o^{\size{\oS}} } \ge \lambda}\ge 1-\delta.
\end{align*}
It follows that there exists a set $\cT \subseteq \SDF$ with $\ppr{\vf \gets \DF}{\cT}\ge 4\lambda$, such that for every $\vf\in \cT$, for \emph{both} $\o\in \zo$:
\begin{align}\label{eq:mainPR:4}
\ppr{\cs \gets \bns}{\ppr{\vr \gets \rn}{\Pi(\vstar;(\vf,\vr)) = \o^n \qand \Pi^\cs(\vstar;(\vf,\vr))_\oS = \o^{\size{\oS}} } \ge \lambda}\ge 1-\delta
\end{align}

We assume \wlg that if a party gets $\perp$ as its second-round random coins, it aborts after the first round. For $\vr \in (\cR\cup\sset{\bot})^n$ let $ \cE(\vr)$ be the indices in $\vr$ of the value $\bot$.
For $\vf \in \SDF$ and $\o \in \zo$, let
\begin{align}
\cA^\vf_\o = \set{\vr \in \rbot \colon \Pi(\vstar;(\vf,\vr))_{\overline{\cE(\vr)}} = \o^{\size{\overline{\cE(\vr)}}}}
\end{align}
By \cref{eq:mainPR:4}, for $\vf \in \cT$ and $\o\in \zo$, it holds that
\begin{align}
\ppr{\cs\gets \bns}{\ppr{\vr \gets \rn}{\vr,\bot_{\cS}(\vr) \in \cA^\vf_\o} \ge\lambda} \ge 1-\delta
\end{align}
Hence by \cref{con:IsoBot}, see \cref{eq:IsoBot}, for $\vf \in \cT$ it holds that
\begin{align*}
\ppr{\vr\gets \rn,\cS \gets \bns}{\forall \o \in \zo\colon \set{\vr,\bot_{\cS}(\vr)}\cap \cA_b \neq \emptyset} > \delta.
\end{align*}
That is,
\begin{align}\label{eq:PR:3}
\ppr{\vr\gets \cR^n,\cS \gets \bns}{\forall \o\in \zo \quad \exists \cs_\o \in \set{\cs,\emptyset} \colon \Pi^{\cS_\o}(\vstar;(\vf,\vr))_{\overline{\cS_\o}}= \o^{ \size{\overline{\cS_\o}}}} >\delta
\end{align}

{\samepage
\noindent
Consider the following adversary:

\begin{algorithm}[$\Ac$]~
\begin{description}
	\item[Pre-interaction.]
	Corrupt a random subset $\cS \gets \bns $ conditioned on $\size{\cs} \le 2\sigma n$.
	
	\item[First round.] Act according to $\Pi$.
	\item[Second round.] Sample $\cs_0,\cs_1$ at random from $\set{\emptyset,\cs}$, and act towards some honest parties according to $\Pi^{\cs_0}$ and towards the others according to $\Pi^{\cs_1}$ .
\end{description}
\end{algorithm}
}
Recall that $n$ is chosen so that $\ppr{\cs \gets \bns}{\size{\cs} >2\sigma n} \leq \alpha$ and that $\alpha < \delta/2$.
By \cref{eq:PR:3}, the above adversary violates the agreement of $\Pi$ on input $\vstar$ with probability larger than $\delta - \ppr{\cs \gets \bns}{\size{\cs} >2\sigma n} \ge \delta - \alpha >\alpha$, in contradiction with the assumed agreement of $\Pi$.
\end{proof}

\newcommand{\VV}{\cV^{\cP}}
\renewcommand{\PP}{\Pi^{\cP,\cS}}
\newcommand{\ops}{{\overline{\cP \cup \cS}}}

\subsubsection{Proving \cref{lemma:SecondRound:PR}}\label{sec:lemma:SecondRound:PR}
Fix $\vv,\vv' \in \zn$ and $b\in \zo$ as in the lemma statement. We assume for simplicity that $\ham(\vv,\vv')=k$ (rather than $\le k$).
Let $\ell= \floor{(t -k)/2}$ and let $h = \ceil{(n-k)/\ell}$. Assume for ease of notation that $h\cdot \ell = n-k $ (\ie no rounding), and for a $k$-size subset of parties $\cP \subset [n]$, let $ \cL^\cP_1,\dots,\cL_h^\cP$ be an arbitrary partition of $\oP = [n] \setminus \cP$ into $\ell$-size subsets. Consider the following protocol family.

{ \samepage
\begin{protocol}[$\PP_d$]\label{prot:main:2} ~
\begin{description}
    \item [Parameters:] subsets $\cP,\cS \subseteq [n]$ and an index $d\in (h)$.
    \item [Input:] Party $\Pc_i$ has a setup parameter $f_i$ and an input bit $v_i$.

	\item [First round:] ~
	\begin{description}
		\item[Party $\Pc_i \in \cP$.] ~
		If $d=0$ [\resp $d=h$], act honestly according to  $\Pi$ \wrt input bit $v_i$ [\resp $1-v_i$].
		Otherwise,
		\begin{enumerate}
			\item Choose random coins honestly (\ie uniformly at random).
			
			\item To each party in $\bigcup_{j \in \set{1,\ldots,d}} \cL_j^\cP$: send a message according to input $ 1-v_i$.
			
			\item To each party in $\bigcup_{j \in \set{d+1 ,\ldots, h}} \cL_j^\cP$: send a message according to input $v_i$ (real input).
			
			\item Send no messages to the other parties in $\cP$.
		\end{enumerate}
		
		\item[Other parties.] Act according to $\Pi$.
	\end{description}
	
	\item [Second round:]~
	\begin{description}
		\item[Parties in $\cP \setminus \cs$.] If $d=0$ [\resp $d=h$], act honestly according to $\Pi$ \wrt input bit $v_i$ [\resp $1-v_i$]; otherwise, abort.
		
		\item[Parties in $\cS$.] Abort.

		\item[Other parties.] Act according to $\Pi$.
	\end{description}
\end{description}
\end{protocol}
}
Namely, the ``pivot'' parties in $\cP$ shift their inputs from their real input to the flipped one according to parameter $d$. The ``aborting'' parties in $\cs$ abort at the end of the first round. Note that protocol $\PP_{0}$ is the same as protocol $\Pi^\cs$, and $\PP_h(\vv)$ acts like $\Pi^\cs(\vv')$, for $\vv'$ being $\vv$ with the coordinates in $\cP$ flipped.

For $\cP, \cS \subseteq [n]$, let $\ops = [n] \setminus (\cP \cup \cs)$, let $d\in (h)$, let $c\in \zo$, and let
\[
\VV_{d,c} = \set{(\vf,\cs,\vr) \colon \quad \PP_d(\vv;(\vf,\vr))_\ops = c^{\size{\ops}}}.
\]
Namely, $\VV_{d,c}$ are the sets, setup parameters and random strings on which honest parties in $\PP_{d}$ halt in the second round and output $c$. Let $\chi = \pr{ \Pi(\vv) \ne \o^n}$ and let
\[
\cT_{d,c}^\cP = \set{\vf \colon \ppr{\cs \gets \bns}{\ppr{\vr \gets \rn}{(\vf,\cS,\vr),(\vf,\emptyset,\vr) \in \VV_{d,c} } \ge \lambda } \ge 1-\delta}.
\]
The proof of  \cref{lemma:SecondRound:PR} immediately follows by the next lemma.
\begin{lemma}\label{lemma:SecondRound:PRR}
For every $k$-size subset $\cP \subset [n]$ and $d\in [h]$,
it holds that
\begin{align*}
\ppr{ \DF}{\cT_{d,\o}^\cP} \ge \gamma- 7\lambda- \frac{\chi + \alpha }{\lambda}.
\end{align*}
\end{lemma}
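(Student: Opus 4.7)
The plan is to prove the lemma by induction on $d \in (h) = \set{0, 1, \ldots, h}$, with the base case $d = 0$ established from the halting, agreement, and validity-style guarantees of $\Pi$, and with the inductive step $d-1 \to d$ driven by a locally consistent agreement-violating adversary in the spirit of \cref{claim:SecondRound:Arb}. Since $h = O(1/\ept)$ is a constant, a per-step loss of order $\lambda/h$ in the probability-over-$\vf$ bound accumulates over the $h$ induction steps into the $7\lambda$ slack in the stated bound.

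For the base case $d = 0$, the protocol $\PP_0^{\cP,\cS}$ is the honest execution $\Pi(\vv)$ with the parties in $\cS$ aborting after round~$1$, and $\PP_0^{\cP,\emptyset}$ is fully honest. A reverse-Markov argument applied to $\Ex_\vf \bigl[\ppr{\vr}{\Pi(\vv;(\vf,\vr)) = \o^n}\bigr] = 1 - \chi$ yields that all but an $O(\chi/\lambda)$-fraction of $\vf$ satisfy $\ppr{\vr}{\Pi(\vv;(\vf,\vr)) = \o^n} \ge \lambda$. For any such $\vf$ and any fixed $\cS$ of size at most $2\sigma n < t$, the aborting-$\cS$ attack is legal, so the halting property together with $\alpha$-agreement imply that the non-corrupted parties in $\oS$ halt and agree with probability at least $\gamma - \alpha$ over $\vr$; a short hybrid comparison with the honest execution (in which $\cS$ plays round $1$ honestly in both versions) shows that their common output equals $\o$ up to $O(\alpha/\lambda)$ loss, and averaging over $\cS \gets \bns$ (while absorbing the $\alpha$-small probability that $\size{\cS} > 2\sigma n$) yields the base-case bound.

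For the inductive step, assume the bound for $d-1$ and suppose toward contradiction that the bound fails for $d$ by more than order $\lambda/h$. Then a noticeable fraction of $\vf \in \cT_{d-1,\o}^\cP \setminus \cT_{d,\o}^\cP$ exists, and for each such $\vf$ more than a $\delta$-fraction of $\cS$ is ``good'' for $\PP_{d-1}$ yet ``bad'' for $\PP_d$, producing a noticeable mass on triples $(\vf, \cS, \vr)$ where $\PP_{d-1}^{\cP,\cS}$ halts with $\ops$-output $\o$ but $\PP_d^{\cP,\cS}$ does not. I then construct a locally consistent adversary corrupting $\cP \cup \cL_d^\cP \cup \cS$ (of total size at most $t$ by the choices $\ell = \lfloor(t-k)/2\rfloor$ and $\sigma = \ept/4$) that simulates both protocols simultaneously on a random partition of $\ops \setminus \cL_d^\cP$ into two halves. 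This is possible because $\PP_{d-1}$ and $\PP_d$ differ only in the behavior of the corrupted parties (in $\cP$'s round-$1$ messages to $\cL_d^\cP$, and, when $d \in \set{1,h}$, in $\cP$'s round-$2$ behavior; and correspondingly in $\cL_d^\cP$'s round-$2$ message, which depends on its round-$1$ view): exploiting local consistency, $\cP$ transmits to $\cL_d^\cP$ two consistent round-$1$ messages, one per view, so that $\cL_d^\cP$ computes both its $\PP_{d-1}$- and $\PP_d$-type round-$2$ messages, and the adversary delivers one type to each half; parties in $\cS$ play round~$1$ honestly and then abort. On every triple where $\PP_{d-1}^{\cP,\cS}$ outputs $\o$ on $\ops \setminus \cL_d^\cP$ but $\PP_d^{\cP,\cS}$ does not, the random partition causes the two halves to disagree with constant probability, violating $\alpha$-agreement and yielding the contradiction.

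The main obstacle is the three-level probabilistic bookkeeping required to convert a pointwise ``gap'' between $\PP_{d-1}$ succeeding and $\PP_d$ failing into a noticeable success probability for the adversary, who averages over $\vf$, over $\cS \gets \bns$, and over $\vr$. One must also absorb the negligible probability that $\cS \gets \bns$ has size exceeding $2\sigma n$ into the $\alpha$ slack (using $n > \Theta(\log(1/\alpha)/\sigma^2)$), and verify that the attack respects the \emph{joint} requirement in $\cT_{d,\o}^\cP$ that the $\cS$-aborting and $\emptyset$-aborting executions both succeed on the same $\vr$---which is preserved because $\cS$'s round-$1$ messages are honest in both versions, so the round-$1$ view of every non-corrupted party outside $\cL_d^\cP$ is identical in $\PP_d^{\cP,\cS}$ and $\PP_d^{\cP,\emptyset}$.
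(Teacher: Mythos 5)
Your proposal has a fatal gap: it never invokes \cref{con:IsoBot}, yet the theorem explicitly assumes it and the lemma's bound is false without it. The adversary you construct in the inductive step (corrupt $\cP\cup\cL_d^\cP\cup\cS$, simulate $\PP_{d-1}$ to one random half of $\ops\setminus\cL_d^\cP$ and $\PP_d$ to the other) is essentially the static partition attack from the arbitrary-protocol proof (\cref{claim:SecondRound:Arb}). That attack only rules out halting probability \emph{close to $1$}; it cannot push $\gamma$ down to an arbitrary constant $\epg$. The paper's own technical overview (\cref{sec:technique:3}) gives a concrete public-randomness counterexample to exactly your strategy: if the second-round randomness uniquely determines the output and the protocol is designed to halt precisely when the coin toss ``agrees'' with the input super-majority, then the partition adversary causes no disagreement at all, yet the protocol halts with probability $1/2$. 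Overcoming this requires actively \emph{decoupling} the output from $\vr$ by exploiting the aborting set $\cS$, and that decoupling is precisely what \cref{con:IsoBot} buys: when $\cT_{d,0}$ and $\cT_{d,1}$ both hold for a given $\vf$, the conjecture guarantees that a random $(\vr,\cS)$ yields $\vr$ and $\bot_\cS(\vr)$ landing in opposite agreement regions, so the adversary who shows $\cS$-aborted behavior to some honest parties and non-aborted behavior to others violates agreement. This is Claim~\ref{claim:PR:1}, the step your proof omits.

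There is also a quantitative error in your inductive step. For $\vf\in\cT^\cP_{d-1,\o}\setminus\cT^\cP_{d,\o}$ you claim that ``more than a $\delta$-fraction of $\cS$ is `good' for $\PP_{d-1}$ yet `bad' for $\PP_d$.'' But the membership condition only gives you a $\ge(1-\delta)$-measure set of $\cS$ that are good for $\PP_{d-1}$ and a $\ge\delta$-measure set that are bad for $\PP_d$; these two sets can be disjoint, so the claimed overlap does not follow, and a static adversary has no way to hit a useful $\cS$. The paper sidesteps this entirely by introducing the invariant set $\tT$ (roughly: $\vf$'s for which the $\cS$-aborted and non-aborted executions agree on output for \emph{every} $d$), lower-bounding $\Pr[\tT]\ge\gamma-6\lambda$ via an \emph{adaptive} rushing adversary that estimates, after seeing $\vf$, which $d$ and $\cS$ break the invariant (\cref{claim:PR:00}), and then running the entire cascade of \cref{claim:PR:0,claim:PR:01,claim:PR:1} \emph{conditioned} on $\tT$, where the ``wrong $\cS$'' issue does not arise. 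Your proof does not use adaptivity and does not introduce $\tT$ (or any analogue), so both the base case and the inductive step sit on the wrong probability space and the claimed $\gamma-7\lambda-(\chi+\alpha)/\lambda$ bound is not actually established.
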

\begin{proof}[Proof of \cref{lemma:SecondRound:PR}]
Immediate by \cref{lemma:SecondRound:PRR}.
\end{proof}
\newcommand{\tV}{\widetilde{\cV}}
\newcommand{\tT}{{\widetilde{\cT}}}

The rest of this subsection is devoted to proving \cref{lemma:SecondRound:PRR}. Fix a $k$-size subset $\cP \subset [n]$ and omit it from the notation when clear from the context.
Let
\[
\tV_{d,c} = \set{(\vf,\cs,\vr) \colon \forall a\in \zo \quad \PP_{d+a}(\vv;(\vf,\vr))_\ops = c^{\size{\ops}}}.
\]
Namely, $\tV_{d,c} \subseteq \cV_{d,c}$ are the sets, setup parameters and random strings, on which honest parties in $\PP_{d+a}$ halt in the second round and output $c$, if the parties in $\cS$ abort \emph{and} \emph{regardless} of whether the parties in $\cP$ act toward those in $\cL_{d+1}$ according to input $0$ or $1$.
Let
\[
\tT_{d,c} = \set{\vf \colon \ppr{\cs \gets \bns}{\ppr{\vr \gets \rn}{(\vf,\cS,\vr),(\vf,\emptyset,\vr) \in \tV_{d,c} } \ge \lambda} \ge 1-\delta},
\]
let $\tT_{d} = \tT_{d,0}\cup \tT_{d,1}$, and let $\tT = \bigcap_{d \in (h-1)}\tT_{d}$. \cref{lemma:SecondRound:PRR} is proved via the following claims (the following probabilities are taken over $\vf \gets \DF$).

\begin{claim}\label{claim:PR:0}
$\pr{\cT_{d+1,\o} \mid \tT} < \eta$ implies $\pr{\cT_{d,1-\o} \mid \tT} \ge 1- \eta$.
\end{claim}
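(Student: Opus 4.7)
The plan is to exploit two elementary inclusions whose combination forces exactly the dichotomy underlying the claim. First, directly from the definition, any triple $(\vf,\cS,\vr) \in \tV_{d,c}$ must satisfy $\PP_{d+a}(\vv;(\vf,\vr))_\ops = c^{\size{\ops}}$ for \emph{both} $a \in \zo$, so in particular it lies in $\VV_{d,c}$ and in $\VV_{d+1,c}$; that is, $\tV_{d,c} \subseteq \VV_{d,c} \cap \VV_{d+1,c}$. Since $\cT_{d,c}$ and $\tT_{d,c}$ apply the same outer quantifiers $\ppr{\cS}{\ppr{\vr}{\cdot} \ge \lambda} \ge 1-\delta$ to their respective sets of triples, this triple-level inclusion lifts to the certificate-level inclusion
\[
\tT_{d,c} \;\subseteq\; \cT_{d,c} \cap \cT_{d+1,c}.
\]

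Second, by construction of $\tT$ as an intersection over indices, $\tT \subseteq \tT_d = \tT_{d,0} \cup \tT_{d,1}$. Combining this with the first observation gives the key consequence: every $\vf \in \tT$ satisfies either $\vf \in \cT_{d+1,\o}$ (via membership in $\tT_{d,\o}$) or $\vf \in \cT_{d,\oo}$ (via membership in $\tT_{d,\oo}$). Contrapositively, any $\vf \in \tT$ with $\vf \notin \cT_{d+1,\o}$ is forced into $\tT_{d,\oo}$ and hence into $\cT_{d,\oo}$, so $\tT \setminus \cT_{d+1,\o} \subseteq \cT_{d,\oo}$. The routine conditional-probability manipulation
\[
\pr{\cT_{d,\oo} \mid \tT} \;\ge\; \pr{(\tT \setminus \cT_{d+1,\o}) \mid \tT} \;=\; 1 - \pr{\cT_{d+1,\o} \mid \tT} \;>\; 1 - \eta
\]
then delivers the conclusion.

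There is no substantial obstacle here; the argument is essentially a bookkeeping step. The only point worth emphasizing is the ``double-certification'' role of $\tV_{d,c}$, which simultaneously witnesses the behavior of $\PP_d$ and $\PP_{d+1}$ on the same $(\vf,\cS,\vr)$. This double role is exactly the mechanism that transports a hypothesis about second-round halting at index $d+1$ (with output bit $\o$) into a conclusion at the neighboring index $d$ (with the flipped output bit $\oo$), which is what the inductive step of the broader proof of \cref{lemma:SecondRound:PRR} will rely on.
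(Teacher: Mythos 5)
Your proof is correct and follows essentially the same approach as the paper's: both hinge on $\tT \subseteq \tT_d = \tT_{d,\o} \cup \tT_{d,\oo}$ combined with the inclusions $\tT_{d,\o} \subseteq \cT_{d+1,\o}$ and $\tT_{d,\oo} \subseteq \cT_{d,\oo}$, which are the two halves of your ``double-certification'' observation $\tT_{d,c} \subseteq \cT_{d,c} \cap \cT_{d+1,c}$. One remark: the paper's one-line proof appears to carry an off-by-one typo, writing $\tT_{d+1}$, $\tT_{d+1,\o}$, $\tT_{d+1,\oo}$ where (as your argument correctly uses) $\tT_d$, $\tT_{d,\o}$, $\tT_{d,\oo}$ is meant — with the printed indices, the final step would need $\tT_{d+1,\oo} \subseteq \cT_{d,\oo}$, which does not follow from the definition of $\tV$.
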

\begin{proof}[Proof of \cref{claim:PR:0}]
Assuming $\pr{\cT_{d+1,b}\mid \tT}\le \eta$ notice that
\[
\pr{\tT_{d,b}\mid \tT}\le \pr{\cT_{d+1,b}\mid \tT}\le \eta.
\]
Consequently, since $\pr{\tT_{d}\mid \tT}=1$, it follows that $\pr{\tT_{d,b}\mid \tT}\le \eta$ implies $\pr{\tT_{d,1-b}\mid \tT}\ge 1-\eta$ and thus $\pr{\cT_{d,1-b}\mid \tT}\ge \pr{\tT_{d,1-b}\mid \tT}\ge 1-\eta$.
\end{proof}

\begin{claim}\label{claim:PR:00}
$\pr{\tT} \ge \gamma- 5\lambda$.
\end{claim}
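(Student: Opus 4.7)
The plan is to upper-bound $\pr{\vf \notin \tT_d}$ for each individual $d \in (h-1)$ and then union-bound over the $h = O(1/\ept)$ values of $d$, noting that $h$ is a constant independent of $n$. For each fixed $d$, I would consider the adversary that corrupts $\cP \cup \cS$, where $\cS \gets \bns$ is sampled conditioned on $|\cS| \le 2\sigma n$ (the event $|\cS| > 2\sigma n$ occurs with probability at most $\alpha$ by the choice of $n$); since $|\cP \cup \cS| \le k + 2\sigma n \le t$, this stays within the resilience threshold. Using auxiliary coins $a,z \in \zo$, the adversary runs one of the four correlated attacks $\PP_{d+a}^{\cs_z}$, with $\cs_0 = \emptyset$ and $\cs_1 = \cS$. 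By the halting property of $\Pi$, each of the four attacks halts within two rounds with probability at least $\gamma$, while agreement bounds the per-attack disagreement among honest parties by $\alpha$.

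The technical heart is to establish a joint halt-and-agree event across all four attacks with probability at least $\gamma - O((1-\gamma)/h + \lambda)$ over the uniform $(\vf,\cs,\vr)$. I expect this to combine three ingredients: (i) the first-round identity among the four attacks as seen from $\ops \setminus \cL_{d+1}^\cP$, since the $d$- versus $(d+1)$-pivot only differs in round-1 messages sent to $\cL_{d+1}^\cP$ and $\cS$'s abort is a round-2 event; (ii) per-attack agreement, which glues the outputs at $\cL_{d+1}^\cP$ and $\cS\cap\ops$ to those outside $\cL_{d+1}^\cP$; and (iii) an amortization of the non-halting probability over the random choice of $d \in (h)$, analogous to the random-$d$ argument used in the arbitrary-protocol case (\cref{eq:main:Arb:03}), which reduces the per-$d$ halting-failure mass from $O(1-\gamma)$ to $O((1-\gamma)/h)$. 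A double Markov argument, first over $\vf$ and then over $\cs$, then converts this joint lower bound into the statement $\pr{\vf \in \tT_{d,c}} \ge 1 - O((1-\gamma)/h + \lambda)$ for some $c \in \zo$, and hence into the same bound for $\tT_d$.

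A union bound over the at most $h$ values of $d$ then yields $\pr{\tT}\ge \gamma - 6\lambda$, after absorbing the $O(\alpha)$ and $O(\chi)$ errors into the slack via the assumed parameter relationships (in particular $\alpha\le \beta \le \lambda^2$). The main obstacle is precisely the amortization step: naive per-$d$ halting bounds would lose a factor of $h$ in the union bound and destroy the statement, so the argument must exploit both the correlation structure of the four attacks through the shared $(\vf,\vr)$ and the amortization over the choice of $d$ to guarantee a per-$d$ loss of order $1/h$ rather than a constant. Tying the common output $c$ across all four attacks, given that the pivot pattern differs on $\cL_{d+1}^\cP$ and that $\cS$'s abort alters round 2, is the delicate point at which per-attack agreement and the shared-randomness correlation must be used together.
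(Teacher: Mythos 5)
Your proposed decomposition—bounding $\pr{\vf\notin\tT_d}$ for each fixed $d$ and then union-bounding over $d\in(h-1)$—cannot yield the claim, and the "amortization" you hope for does not exist at the per-$d$ level. The random-$d$ averaging trick (as in the proof of \cref{eq:main:Arb:03}) shows only that the \emph{average} over $d$ of the non-halting probability of a $\Pi_d$-style attack is at most $1-\gamma$; it gives no handle on an individual $d$, which is exactly what your union bound needs. Even granting the average bound $O((1-\gamma)/h)$, summing it over $h$ values of $d$ recovers $O(1-\gamma)$, which in the regime $\gamma=\epg$ is close to $1$ — far too lossy to conclude $\pr{\tT}\ge\gamma-6\lambda$. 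So a static choice of $d$ followed by a union bound is structurally doomed here.

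The missing idea is \emph{adaptivity}, and this is precisely the point the paper flags as "the only part in the proof where we exploit the fact that the protocol is secure against adaptive adversaries." The paper builds a \emph{single} rushing adaptive adversary that first corrupts only $\cP$, observes the publicly-revealed setup $\vf$, and \emph{then} repeatedly samples candidate aborting sets $\cS\gets\bns$ and empirically estimates $\xi_i=\ppr{\vr}{(\vf,\cs,\vr),(\vf,\emptyset,\vr)\in\tV_i}$ for all $i$, selecting the $d$ that minimizes $\xi_d$ and additionally corrupting $\cs\cup\cL_{d+1}$ only at that point. Because $\vf\notin\tT$ means \emph{there exists} some $d$ with a $\delta$-fraction of "bad" $\cS$ for which $\xi_d<2\lambda$, the adaptive adversary can locate that $(d,\cS)$ with high probability in $1/(\lambda\delta)$ attempts, and then (rushing on $\vr$) either prevent halting or cause disagreement except with probability about $5\lambda$. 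This converts the existential statement "$\vf\notin\tT$" directly into a single attack with no union bound over $d$ — the selection of $d$ is made post-hoc per $\vf$, which is exactly what your static formulation cannot do. You should abandon the union-bound structure and instead construct an adaptive adversary that chooses its additional corruptions as a function of the observed $\vf$.
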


\begin{claim}\label{claim:PR:01}
$\pr{\cT_{1,\o} \mid \tT} \ge 1- (\chi + \alpha)/(\Pr[\tT]\cdot \lambda)$.
\end{claim}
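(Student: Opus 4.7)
The plan is to establish Claim~\ref{claim:PR:01} by proving the set inclusion $\tT_{0,\o}^\cP \subseteq \cT_{1,\o}^\cP$ and then upper bounding $\Pr[\tT_{0,\oo}^\cP]$ via a Markov-type argument involving $\chi$. Since $\tT = \bigcap_{i \in (h-1)} \tT_i \subseteq \tT_0 = \tT_{0,0} \cup \tT_{0,1}$, any $\vf \in \tT$ that lies outside $\tT_{0,\o}$ must lie in $\tT_{0,\oo}$. Hence $\tT \setminus \tT_{0,\o} \subseteq \tT_{0,\oo}$, and it will suffice to bound $\Pr[\tT_{0,\oo}] \le (\chi + \alpha)/\lambda$ and then conclude
\[
\pr{\cT_{1,\o} \mid \tT} \;\ge\; \pr{\tT_{0,\o} \mid \tT} \;\ge\; 1 - \pr{\tT_{0,\oo}}/\pr{\tT}.
\]

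For the inclusion $\tT_{0,\o}^\cP \subseteq \cT_{1,\o}^\cP$, note that $\tV_{0,\o}$ requires the honest parties in $\ops$ to output $\o^{|\ops|}$ in \emph{both} $\PP_0$ and $\PP_1$, while $\VV_{1,\o}$ only imposes this for $\PP_1$. Thus $\tV_{0,\o} \subseteq \VV_{1,\o}$ pointwise in $(\vf,\cs,\vr)$, and comparing the definitions of $\tT_{0,\o}$ and $\cT_{1,\o}$ (both are ``with high probability over $\cs$, the $\vr$-probability that $(\vf,\cs,\vr)$ and $(\vf,\emptyset,\vr)$ lie in the set is at least $\lambda$'') yields $\tT_{0,\o} \subseteq \cT_{1,\o}$ directly.

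To bound $\Pr[\tT_{0,\oo}]$, the key observation is that $\PP_0$ with $\cs = \emptyset$ degenerates to the honest execution of $\Pi$: when $d = 0$ the parties in $\cP$ follow $\Pi$ with input $\vv$, and when $\cs = \emptyset$ no one aborts. Consequently, for any $\vf \in \tT_{0,\oo}$, pick any witnessing $\cs$ from the outer probability; the inner probability on the joint event forces the marginal $\ppr{\vr}{(\vf,\emptyset,\vr) \in \tV_{0,\oo}} \ge \lambda$, which unpacks to $\ppr{\vr}{\Pi(\vv;(\vf,\vr))_{\bar{\cP}} = \oo^{|\bar{\cP}|}} \ge \lambda$. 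Since $|\cP| = k < n$, $\bar{\cP}$ is non-empty and this implies $\ppr{\vr}{\Pi(\vv;(\vf,\vr)) \ne \o^n} \ge \lambda$. A Markov application on $\chi = \Pr[\Pi(\vv) \ne \o^n]$ then yields $\Pr[\tT_{0,\oo}] \le \chi/\lambda$.

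The main obstacle is accounting for the $\alpha$ term in the statement. I expect it to arise from one of two sources: either a safety slack to cover the event that a $\cs$ with the prescribed behavior does not match $\cs = \emptyset$ closely (\eg, to upgrade ``$(\vf,\cs,\vr) \in \tV_{0,\oo}$ for a typical $\cs$'' to a cleaner statement about $\Pi(\vv)$), or an invocation of $(t,\alpha)$-agreement to convert ``parties in $\bar{\cP}$ all output $\oo$'' into ``$\Pi(\vv) = \oo^n$,'' which loses at most $\alpha$ via honest-execution disagreement. In either case the bound becomes $\Pr[\tT_{0,\oo}] \le (\chi + \alpha)/\lambda$, and combining with $\tT \setminus \tT_{0,\o} \subseteq \tT_{0,\oo}$ and $\tT_{0,\o} \subseteq \cT_{1,\o}$ gives the claim.
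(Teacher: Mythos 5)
Your argument is correct, and in fact it proves something slightly stronger than the claim: you obtain $\Pr[\tT_{0,\oo}] \le \chi/\lambda$, hence $\Pr[\cT_{1,\o}\mid\tT]\ge 1-\chi/(\Pr[\tT]\cdot\lambda)$, which implies the stated bound since $\alpha\ge 0$. So the speculative final paragraph, where you try to account for the $\alpha$ term, is unnecessary — your argument simply does not need it. (Your second guess there is also off: you do not need to upgrade ``parties in $\oP$ output $\oo$'' to ``$\Pi(\vv)=\oo^n$''; the weaker conclusion $\Pi(\vv)\ne\o^n$ is exactly what feeds into $\chi$, and it follows with no agreement loss.)

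The route is genuinely different from the paper's. The paper sets $\eta=\Pr[\cT_{1,\oo}\mid\tT]$, notes $\Pr[\cT_{1,\o}\mid\tT]\ge 1-\eta$, and lower-bounds $\Pr\bigl[\Pi_1(\vv)_{\oH}=\oo^{|\oH|}\bigr]\ge\Pr[\tT]\cdot\eta\cdot\lambda$ for $\cH=\cP\cup\cL_1$. Since $\Pi_1$ is an \emph{adversarial} execution (parties in $\cP$ pivot toward $\cL_1$ in round one and abort in round two), the paper cannot compare it directly to $\chi$; instead it invokes the $(t,\alpha)$-agreement of $\Pi$ via a new adversary corrupting $\cH$ that acts as $\Pi$ toward some honest parties and as $\Pi_1$ toward the rest, giving $\Pr\bigl[\Pi_1(\vv)_{\oH}=\oo^{|\oH|}\bigr]+\Pr[\Pi(\vv)=\o^n]\le 1+\alpha$, and that is where the paper's $\alpha$ comes from. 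You sidestep this entirely by descending from $\cT_{1,\oo}$ to the smaller set $\tT_{0,\oo}$ (valid, since $\tV_{0,\oo}\subseteq\VV_{1,\oo}$) and then unpacking the marginal $(\vf,\emptyset,\vr)\in\tV_{0,\oo}$ via the $a=0$ clause, which pins down the \emph{honest} execution $\Pi(\vv;(\vf,\vr))_{\oP}=\oo^{|\oP|}$, so a single Markov step against $\chi$ closes the bound with no additional adversary. Both proofs are valid; yours is the more elementary, introduces no extra attack, and drops the additive $\alpha$.
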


\begin{claim}\label{claim:PR:1}
For every $d \in [h-1]$.
\[
\pr{\cT_{d,0} \mid \tT} + \pr{\cT_{d,1} \mid \tT} \le 1 + \frac{\lambda}{h\cdot\Pr[\tT]}.
\]
\end{claim}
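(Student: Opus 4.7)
The plan is to bound the overlap $\pr{\cT_{d,0}\cap\cT_{d,1}}$: if a setup parameter $\vf$ belongs to both events, then \cref{con:IsoBot} supplies randomness under which a locally consistent static adversary violates the agreement of $\Pi$ with probability $\Omega(\delta)$, forcing $\pr{\cT_{d,0}\cap\cT_{d,1}}\le O(\alpha/\delta)$ by the assumed $\alpha$-agreement. The claim will then follow from the identity
\[
\pr{\cT_{d,0}\mid\tT} + \pr{\cT_{d,1}\mid\tT} = \pr{\cT_{d,0}\cup\cT_{d,1}\mid\tT} + \pr{\cT_{d,0}\cap\cT_{d,1}\mid\tT} \le 1 + \frac{\pr{\cT_{d,0}\cap\cT_{d,1}}}{\pr{\tT}},
\]
combined with our parameter choices $\alpha\le\delta\lambda\ept/\Theta(1)$ and $h\le\Theta(1/\ept)$, which together give $\pr{\cT_{d,0}\cap\cT_{d,1}}\le\lambda/h$.

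To invoke the conjecture, I will associate with each setup $\vf$ the pair of sets
\[
\cA^\vf_c = \set{\vr'\in(\cR\cup\set{\bot})^n \colon \Pi^{\cP,\cE(\vr')}_d(\vv;(\vf,\vr'))_{\overline{\cP\cup\cE(\vr')}} = c^{\size{\overline{\cP\cup\cE(\vr')}}}},\quad c\in\zo,
\]
where $\cE(\vr')$ is the set of indices at which $\vr'$ takes the value $\bot$ (interpreted as the aborting set). Because aborting parties never use their round-two coins, one has $(\vf,\cS,\vr)\in\VV_{d,c}\iff \bot_\cs(\vr)\in\cA^\vf_c$ and $(\vf,\emptyset,\vr)\in\VV_{d,c}\iff \vr\in\cA^\vf_c$, making the membership $\vf\in\cT_{d,c}$ literally the hypothesis of \cref{con:IsoBot} for $b=c$. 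For $\vf\in\cT_{d,0}\cap\cT_{d,1}$, the conjecture therefore yields
\[
\ppr{\vr\gets\cR^n,\,\cs\gets\bns}{\exists c\in\zo\colon \vr\in\cA^\vf_c\text{ and } \bot_\cs(\vr)\in\cA^\vf_{1-c}}\ge\delta - o(1),
\]
where the $o(1)$ absorbs the vanishing contribution of atypical pairs with $\overline{\cP\cup\cE(\bot_\cs(\vr))}=\emptyset$ (in all other cases $\cA^\vf_0$ and $\cA^\vf_1$ are disjoint, as a deterministic execution has a single output).

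To exploit this, I will consider the locally consistent static adversary that corrupts $\cP$ together with $\cs\gets\bns$ (conditioned on $\size{\cs}\le 2\sigma n$, which fails with probability at most $\alpha$ by the choice of $n$), and has all corrupted parties follow $\Pi^{\cP,\cs}_d$ in round one. In round two, each party in $\cP$ acts per $\Pi^{\cP,\cs}_d$ (aborting, since $d\in[h-1]$), while each party $j\in\cs$ performs a per-receiver split: it sends $\bot$ to one fixed honest party $i^\ast\in\ops$ (say, the smallest-indexed one) and delivers its prescribed $\Pi^{\cP,\emptyset}_d$ round-two message to every other honest party, using the single honestly flipped round-two tape of $j$. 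This is permitted by the Abort and Input-and-Message-Selection clauses of \cref{def:Semi-ConssitentParties}, applied independently per receiver. The total number of corruptions is $\size{\cP}+\size{\cs}\le k+2\sigma n\le t$. When the straddling event holds, party $i^\ast$'s view is identical to its view in $\Pi^{\cP,\cs}_d(\vv;(\vf,\vr))$ (output $c$), while every other honest party's view is identical to its view in $\Pi^{\cP,\emptyset}_d(\vv;(\vf,\vr))$ (output $1-c$), yielding disagreement.

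Assembling the pieces, the $\alpha$-agreement of $\Pi$ forces $\delta\cdot\pr{\cT_{d,0}\cap\cT_{d,1}}\le\alpha+o(1)$, and hence $\pr{\cT_{d,0}\cap\cT_{d,1}}\le\lambda/h$ by the chosen constants; dividing by $\pr{\tT}$ then delivers the claim. The main obstacle will be showing rigorously that the per-receiver split at round two produces the two desired views in a public-randomness protocol; this hinges on the fact that a locally consistent party commits to a single honestly flipped random tape per round, yet retains the freedom to choose recipient-by-recipient whether to transmit the prescribed message or to abort. A secondary technicality is bookkeeping the $o(1)$ error absorbing both the $\size{\cs}>2\sigma n$ event and the degenerate pairs where $\overline{\cP\cup\cE(\vr')}=\emptyset$.
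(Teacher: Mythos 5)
Your proof is correct and follows the same basic strategy as the paper: define per-setup sets $\cA^\vf_c$ so that membership $\vf\in\cT_{d,0}\cap\cT_{d,1}$ becomes the hypothesis of \cref{con:IsoBot}, invoke the conjecture to obtain a constant-probability straddle $\vr\in\cA^\vf_b,\ \bot_\cS(\vr)\in\cA^\vf_{1-b}$, and turn that straddle into an agreement-violating attack, which by $\alpha$-agreement caps $\Pr[\cT_{d,0}\cap\cT_{d,1}]$ and hence, via $\Pr[A\mid\tT]+\Pr[B\mid\tT]=\Pr[A\cup B\mid\tT]+\Pr[A\cap B\mid\tT]\le 1+\Pr[A\cap B]/\Pr[\tT]$, yields the claim. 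The small deviations you make are all sound and arguably streamline the paper's proof: your $\cA^\vf_c$ matches the definition of $\cT_{d,c}$ exactly rather than projecting away $\cL_d$, your adversary corrupts only $\cP\cup\cS$ (the paper's additional corruption of $\cL_d$ is redundant here, since the round-two split is only between aborting and not aborting $\cS$ and the parties in $\cL_d$ behave identically under both), you use a deterministic per-receiver split with one designated $i^\ast$ rather than randomly sampling $\cS_0,\cS_1\in\set{\emptyset,\cS}$ (saving the constant factor), and you hard-code $d$ into the adversary in the contradiction argument rather than having it estimate the worst $\xi_i$.

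Two places to tighten when writing this up: the $\size{\cs}>2\sigma n$ event has probability bounded by $\alpha$ (the paper's choice of $n$), not $o(1)$, so it is cleaner to account for it as a separate $-\alpha$ term rather than folding it into $o(1)$; and after the conjecture fires you should briefly argue that $\vr$ (having no $\bot$ coordinates) cannot lie in $\cA^\vf_0\cap\cA^\vf_1$ since $\overline{\cP}\neq\emptyset$, so the conjecture's conclusion $\forall b\colon\set{\vr,\bot_\cS(\vr)}\cap\cA^\vf_b\neq\emptyset$ forces a genuine straddle except on the exponentially rare event $\cS\supseteq\overline{\cP}$. With those bookkeeping items in place, the calculation $\delta\cdot\Pr[\cT_{d,0}\cap\cT_{d,1}]-\alpha-\negl\le\alpha$ and the parameter choices $\alpha\le\delta\lambda\ept/10$, $1/h=\Theta(\ept)$ give $\Pr[\cT_{d,0}\cap\cT_{d,1}]\le\lambda/h$ as needed.
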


We prove \cref{claim:PR:00,claim:PR:01,claim:PR:1} below, but first use the above claims for proving \cref{lemma:SecondRound:PR}.

\paragraph{Proving \cref{lemma:SecondRound:PRR}.}

\begin{proof}[Proof of \cref{lemma:SecondRound:PRR}.]
We first prove that for every $d\in [h]$:
\begin{align}\label{eq:SecondRound:PRR}
\pr{\cT_{d,\o} \mid \tT} \ge 1- \frac{ \chi + \alpha}{\Pr[\tT]\cdot \lambda} - \frac{d\lambda}{h\cdot\Pr[\tT]}
\end{align}	
The proof is by induction on $d$. The base case, $d=1$, is by \cref{claim:PR:01}. The induction steps follows by the combination of \cref{claim:PR:1} and the contrapositive of \cref{claim:PR:0}. Applying \cref{eq:SecondRound:PRR} for $d=h$, yields that
\begin{align*}
\pr{\cT_{h,\o} } \ge \Pr[\tT]- \frac{ \chi +\alpha}{\lambda} - \lambda,
\end{align*}	
and the proof follows by \cref{claim:PR:00}.
\end{proof}

\newcommand{\PPP}{\Pi^{\cS}}

So it is left to prove \cref{claim:PR:00,claim:PR:01,claim:PR:1}. Note that the following adversaries corrupt at most $k + \ell + 2\sigma n \le t$ parties and thus they make a valid attack. Since our security model considers rushing adversaries, and $\Pi$ has public randomness, we assume the adversary knows $\vf = (f_1,\ldots,f_n)$ \emph{before} sending its first-round messages. In the following we let $\PPP_d = \PP_d$ and $\Pi_d = \Pi^{\emptyset}_d$.

\paragraph{Proving \cref{claim:PR:00}.}
This is the only part in proof where we exploit the fact that the protocol is secure against \emph{adaptive} adversaries.
\begin{proof}[Proof of \cref{claim:PR:00}.]
For $d\in (h)$, let $\VV_d = \VV_{d,0} \cup \VV_{d,1} $ and $\tV_d = \tV_{d,0} \cup \tV_{d,1}$. Since $\ppr{\vr \gets \rn}{(\vf,\cS,\vr),(\vf,\emptyset,\vr) \in \tV_{d}} \le \sum_{c\in \zo}  \ppr{\vr \gets \rn}{(\vf,\cS,\vr),(\vf,\emptyset,\vr) \in \tV_{d,c}} $, for $f\notin \tT_{d}$ it holds  that
\begin{align}\label{eq:alg:PR:00:1}
\ppr{\cs \gets \bns}{\ppr{\vr \gets \rn}{(\vf,\cS,\vr),(\vf,\emptyset,\vr) \in \tV_{d} } \ge 2\lambda} < \delta
\end{align}

\noindent
Consider the following \emph{rushing adaptive} adversary.
{ \samepage
\begin{algorithm}[$\Ac$]\label{alg:PR:00}~
		
\begin{description}
	
\item[Pre interaction:]~
Corrupt the parties in $\cP$.

\item[First round.] ~
Let $\vf$ be the parties' setup parameters.

Do $\ceil{1/\lambda\delta}$ times:
\begin{enumerate}
	\item  Sample  $\cS \gets \bns $ conditioned on $ \size{\cs} \le 2\sigma n$. \label{step:alg:PR:00:1}
	\item  	For each   $i\in (h-1)$: estimate $\xi_i =  \ppr{\vr \gets \rn}{(\vf,\cs,\vr),(\vf,\emptyset,\vr)  \in \tV_{i}}$ by taking $\Theta( \log (h/ \lambda))$ samples of $\vr$. Let $\xi_i'$ be the result of this estimation.
	
	\item Let $d = \argmin_{i\in (h-1)} \set{\xi_i'}$.
			
	\item If $\xi'_d < 3\lambda$, break the loop.
\end{enumerate}
		
Corrupt the parties in $\cs \cup \cL_{d+1}$ ($\cs$ is the set sampled in the last loop), and act according to $\Pi_d$.

\item[Second round.]~

Let $\vr$ be the parties' second-round randomness.

If $(\vf,\cW,\vr) \notin \cV_{d+a}$ for some $a\in \zo$ and $\cW \in \set{\emptyset, \cs}$,

\quad act according to $\Pi^\cW_{d+a}$.

Else, abort.
\end{description}			
\end{algorithm}
}

By definition,  if the attack does not abort then it  violates either agreement or (second-round) halting. Let $D$ be the value of $d$ chosen by the adversary $\Ac$ at  the first round of the protocol. By construction, the  attack abort with probability $\xi_D$. So it is left to argue about the value of  $\xi_D$.

Assume $\vf \notin \tT$.  Recall that (by Chernoff/Hoeffding bound) $\ppr{\cs \gets \bns}{ \size{\cs} > 2\sigma n} \le \alpha < \delta/2$. Therefore, with probability at least $\delta/2$ over the choice of $\cs$ in Step \ref{step:alg:PR:00:1} of \Ac, there exists $i\in (h-1)$ such that $\xi_i < 2\lambda$ (by \cref{eq:alg:PR:00:1}). It follows that $\xi_D < 3\lambda$, except with probability at most $\lambda$ (\ie  error estimating $\xi_D$ by another Chernoff/Hoeffding bound). We conclude that if $\vf \notin \tT$ the attack succeeds with probability at least  $1-4\lambda$.

It follows that under the above attack,   the honest parties halt in the second round and output the same value  with probability at most $\Pr[\tT] + \Pr[\neg \tT]\cdot 4\lambda \le \Pr[\tT] + 4\lambda$. Since the parties halt \emph{and} agree with probability at least $\gamma-\alpha$, we conclude that $\Pr[\tT] \ge \gamma- \alpha - 4\lambda \ge \gamma- 5\lambda$.
\end{proof}

\paragraph{Proving \cref{claim:PR:01}.}

\newcommand{\oH}{{\overline{\cH}}}

\begin{proof}[Proof of \cref{claim:PR:01}.]
By definition, for $\vf \in \cT_{1,\o}$ it holds that
\begin{align*}
\ppr{\vr \gets \rn}{\Pi_1(\vv;(\vf,\vr))_\oH = \oo^{\size{\oH}} } = \ppr{\vr \gets \rn}{(\vf,\emptyset,\vr) \in \cV_{1,\oo} } \ge \lambda,
\end{align*}
letting $\cH = \cP \cup \cL_1$ and $\oH = [n] \setminus\cH$. Let $\eta = \ppr{\vf}{\cT_{1,\oo} \mid \tT}$, clearly,	$\ppr{\vf}{\cT_{1,\o} \mid \tT} = 1 -\eta$. By the above
\begin{align}
 \pr{\Pi_1(\vv)_\oH = \oo^{\size{\oH}}} \ge \Pr[\tT]\cdot \eta \cdot \lambda
\end{align}
(recall that $\Pi_1(\vv)$ stands for $\Pi_1(\vv;(\vf,\vr))$, for a random choice of $(\vf,\vr))$).
Finally, we notice that
\begin{align}
\pr{\Pi_1(\vv) = \oo^{\size{\wb{\cH}}}} + \pr{\Pi(\vv) = \o^n} \le 1+ \alpha
\end{align}
\noindent
If not, then the following attack violates the $\alpha$-agreement. Recall that $\Pi$ is an honest execution on input $\vv$ and $\Pi_1$ is an execution of the protocol where the parties in $\cL_1$ receive inputs from $\cP$ according to input $\vv'$ and all others receive inputs from $\cP$ according to input $\vv$ (recall that $\vv$ and $\vv'$ differ on exactly those indices indexed by $\cP$). The attack proceeds as follows: the adversary corrupts the parties in $\cH$, partitions the honest parties into two equal-size sets and acts toward the first honest parties according to $\Pi$ and toward the rest according to $\Pi_1$. We conclude that $\Pr[\tT]\cdot \eta \cdot \lambda \le \chi + \alpha $, and therefore $\eta \le (\chi + \alpha)/(\Pr[\tT]\cdot \lambda)$.
\end{proof}

\paragraph{Proving \cref{claim:PR:1}.}
The proof uses \cref{con:IsoBot} in a similar way to the second part of the proof of the theorem.

\begin{proof}[Proof of \cref{claim:PR:1}.]
For $\vr \in (\cR\cup\sset{\bot})^n$ let $ \cE(\vr)$ be the indices in $\vr$ of the value $\bot$.
We assume \wlg that a party aborts upon getting $\perp$ as its second-round random coins. For $\vf \in \SDF$, for $d\in [h-1]$, and for $\o \in \zo$, let
\begin{align}
\cA^\vf_\o = \set{\vr \in \rbot \colon \Pi_d(\vv;(\vf,\vr))_{\overline{\cP \cup \cL_d \cup \cE(\vr)}} = \o^{\size{\overline{\cP \cup \cL_d \cup \cE(\vr)}}}}.
\end{align}
By definition, for $\vf \in \cT_{d,0} \cap \cT_{d,1}$ and $\o\in \zo$, it holds that
\begin{align}
\ppr{\cs\gets \bns}{\ppr{\vr \gets \rn}{\vr,\bot_{\cS}(\vr) \in \cA^\vf_\o} \ge \lambda} \ge 1-\delta.
\end{align}
By \cref{con:IsoBot}, see \cref{eq:IsoBot}, for $\vf \in \cT_{d,0} \cap \cT_{d,1}$ it holds that
\begin{align*}
\ppr{\vr\gets \rn,\cS \gets \bns}{\forall \o \in \zo \colon \set{\vr,\bot_{\cS}(\vr)}\cap \cA^f_b \neq \emptyset} > \delta.
\end{align*}
That is,
\begin{align}\label{eq:PR:1:4}
\ppr{\vr\gets \cR^n,\cS \gets \bns}{\forall \o\in \zo \quad \exists \cs_\o \in \set{\cs,\emptyset} \colon \Pi^{\cS_\o}_{d}(\vv;(\vf,\vr))_{ \overline{\cP \cup \cL_d \cup\cS_\o} }= \o^{ \size{\overline{ \cP \cup \cL_d \cup\cS_\o}}}} >\delta.
\end{align}
In pursuit of contradiction, assume that $\pr{\cT_{d,0} \mid \tT} + \pr{\cT_{d,1} \mid \tT} \ge 1 + \lambda/(h\cdot \Pr[\tT])$ for some $d \in [h-1]$. It follows that
\begin{align}\label{eq:PR:1:5}
\lefteqn{\ppr{\stackrel{\vf \gets \DF}{\vr\gets \cR^n,\cS \gets \bns}}{\forall \o\in \zo \quad \exists \cs_\o \in \set{\cs,\emptyset}\colon \Pi^{ \cs_\o}_{d}(\vv;(\vf,\vr))_{ \overline{\cP \cup \cL_d \cup \cs_\o} }= \o^{ \size{\overline{ \cP \cup \cL_d \cup \cs_\o}}}}}\\
&> \pr{\cT_{d,0} \cap \cT_{d,1}}\cdot \delta \hskip30em \nonumber\\
&\ge \Pr[\tT]\cdot \Pr[\cT_{d,0} \cap \cT_{d,1} \mid \tT] \cdot \delta \nonumber\\
&\ge \Pr[\tT]\cdot \frac{\lambda}{h\cdot \Pr[\tT]} \cdot \delta \nonumber\\
&= \lambda\delta /h \nonumber\\
&> 8\alpha.\nonumber
\end{align}
The first inequality is by \cref{eq:PR:1:4}, the second one by the assumption that $\Pr[\cT_{d,0} \mid \tT] + \Pr[\cT_{d,1} \mid \tT] \ge 1 + \lambda/(h\cdot \Pr[\tT])$, and the last one by the definition of $\alpha$. Next, consider the following rushing adversary:

\begin{algorithm}[$\Ac$]~

\begin{description}
\item[Pre-interaction.] ~
		
	\begin{enumerate}
		\item For each $i\in [h-1]$, estimate
		\[
        \xi_i = \ppr{\vr\gets \cR^n,\cS \gets \bns}{\forall \o\in \zo \quad \exists \cs_\o \in \set{\cs,\emptyset} \colon \Pi^{\cs_\o}_{d}(\vv;(\vf,\vr))_{\overline{ \cP \cup \cL_d \cup \cs_\o}}= \o^{ \size{\overline{ \cP \cup \cL_d \cup \cs_\o}}}}
         \]
         by taking $\Theta(\log (h/\alpha))$ samples. Let $d = \argmax_{i\in [h-1]}\set{\xi_i}$.
		
		\item Sample a random $\cS \gets \bns $ conditioned on $\size{\cs} \le 2\sigma n$.
	\end{enumerate}
	
	 Corrupt the parties in $\cP \cup \cS\cup \cL_d$.
	
	\item[First round.] Act according to $\Pi_d$.
	
    \item[Second round.] Partition the honest parties arbitrarily into two equal-size sets $\cH_1$ and $\cH_2$, and act towards $\cH_1$ according to $\Pi^{\cs}_{d}$ and towards $\cH_2$ according to $\Pi^{\emptyset}_{d}$.
\end{description}
\end{algorithm}

Observe that \cref{eq:PR:1:5} says that with probability $8\alpha$ (over the setup parameter, the choice of set $S$ and coins $r$) the output of the honest parties is sensitive to whether the parties in S abort or not (while halting and agreement occurs for both cases). Therefore, analogously to the proof of \cref{claim:PR:00}, we deduce that the adversary described above causes disagreement with probability at least $\alpha$.
\end{proof}

\paragraph{Acknowledgements.}
We would like to thank Rotem Oshman, Juan Garay, Ehud Friedgut, and Elchanan Mossel for very helpful discussions.

\bibliographystyle{abbrvnat}
\bibliography{crypto}

\appendix
\section{Locally Consistent Security to Malicious Security}\label{sec:LocalToFull}

\newcommand{\VRF}{\ensuremath{\mbox{\footnotesize{\textsf{VRF}}}}\xspace}
\newcommand{\DSig}{\ensuremath{\mbox{\footnotesize{\textsf{DS}}}}\xspace}
\newcommand{\NIZK}{\ensuremath{\mbox{\footnotesize{\textsf{NIZK}}}}\xspace}
\newcommand{\ZKPoK}{\textsf{ZKPoK}\xspace}

\newcommand{\VRFGen}{\mathsf{VRF.Gen}}
\newcommand{\VRFEval}{\mathsf{VRF.Eval}}
\newcommand{\VRFVerify}{\mathsf{VRF.Verify}}

\newcommand{\Expt}{\ensuremath{\mathsf{Expt}}\xspace}
\newcommand{\ExptVRF}{\ensuremath{\Expt^\mathsf{VRF}}\xspace}
\newcommand{\ExptSig}{\ensuremath{\Expt^\mathsf{Sig}}\xspace}
\newcommand{\Oeval}{\ensuremath{\mathcal{O}_\mathsf{eval}}\xspace}
\newcommand{\Osign}{\ensuremath{\mathcal{O}_\mathsf{sign}}\xspace}

\newcommand{\Fzk}{\ensuremath{\mathcal{F}_\mathsf{zk}}\xspace}

\newcommand{\Min}{\ensuremath{\mathcal{M}_\mathsf{in}}\xspace}
\newcommand{\TMin}{\ensuremath{\tilde{\mathcal{M}}_\mathsf{in}}\xspace}
\newcommand{\Mout}{\ensuremath{\mathcal{M}_\mathsf{out}}\xspace}

\newcommand{\NIZKGen}{\mathsf{NIZK.Gen}}
\newcommand{\NIZKProve}{\mathsf{NIZK.Prover}}
\newcommand{\NIZKVerify}{\mathsf{NIZK.Verifier}}
\newcommand{\nizkproof}{\varphi}
\newcommand{\crs}{\ensuremath{\mathsf{crs}}\xspace}
\newcommand{\Rel}{\ensuremath{\cR}\xspace}
\newcommand{\Lang}{\ensuremath{\cL}\xspace}
\newcommand{\Simnizk}{\Sim_\mathsf{nizk}\xspace}

\newcommand{\DSGen}{\mathsf{DS.Gen}}
\newcommand{\DSSign}{\mathsf{DS.Sign}}
\newcommand{\DSVerify}{\mathsf{DS.Verify}}

\newcommand{\dssk}{\mathsf{sk}^{\mathsf{ds}}}
\newcommand{\dsvk}{\mathsf{vk}^{\mathsf{ds}}}
\newcommand{\vrfsk}{\mathsf{sk}^{\mathsf{vrf}}}
\newcommand{\vrfvk}{\mathsf{vk}^{\mathsf{vrf}}}

\newcommand{\setup}{\mathsf{setup}}
\newcommand{\comp}{\mathsf{Comp}}
\newcommand{\comppc}{\comp_\mathsf{PR}}

In this section, we formally state and prove \cref{thm:local_to_malicious} and show how to \emph{compile} any BA protocol that is secure against locally consistent adversaries into a protocol that is secure against malicious adversaries.
That is, we prove the following theorem:

\begin{theorem}[\cref{thm:local_to_malicious}, restated]\label{thm:local_to_malicious:Res}
Let $\Pi$ be a $(t,\alpha,\beta,q,\gamma)$-\BA against locally consistent adversaries for $q=O(\log{n})$ and assume the existence of verifiable random functions and existentially unforgeable digital signatures under an adaptive chosen-message attack.
Then,
\begin{enumerate}
    \item\label{thm:local_to_mal_generic}
    Assuming in addition the existence of non-interactive zero-knowledge proofs, there exist a \ppt protocol-compiler $\comp(\cdot)$ such that $\Pi'=\comp(\Pi)$ is a $(t,\alpha-\negl(\secParam),\beta-\negl(\secParam),q,\gamma-\negl(\secParam))$-\BA in the PKI model, resilient to malicious adversaries.
    \item\label{thm:local_to_mal_PC}
    There exists a \ppt protocol-compiler $\comppc(\cdot)$ such that if $\Pi$ is a public-randomness protocol, then $\Pi'=\comppc(\Pi)$ is a $(t,\alpha-\negl(\secParam),\beta-\negl(\secParam),q,\gamma-\negl(\secParam))$-\BA in the PKI model, resilient to malicious adversaries.
\end{enumerate}
\end{theorem}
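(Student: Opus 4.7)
The plan is a black-box reduction: given any \ppt malicious adversary $\Ac$ corrupting some set $\cC \subseteq [n]$ with $|\cC| \le t$ in $\Pi'$, construct a locally consistent adversary $\Ac'$ corrupting $\cC$ in $\Pi$ whose induced distribution on the honest parties' outputs is computationally indistinguishable from that induced by $\Ac$ in $\Pi'$. Since $\Pi$ has the required $(t,\alpha)$-agreement, $(t,\beta)$-validity and $(t,q,\gamma)$-halting against locally consistent adversaries, each of these properties will transfer to $\Pi'$ up to a $\negl(\secParam)$ loss, yielding both items of the theorem. The compiler is the one sketched in \cref{sec:intro:LocalToFull}; I instantiate it with standard building blocks (a VRF, an existentially unforgeable signature scheme, and a multi-theorem simulation-sound NIZK with simulator $\Simnizk$).

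\textbf{The simulator.} $\Ac'$ runs $\Ac$ internally and emulates for it the full $\Pi'$-view: honest parties' VRF and signature keys are sampled honestly, while the NIZK common reference string $\crs$ is produced by $\Simnizk$ so that $\Ac'$ can fake NIZKs on behalf of honest parties. Whenever an honest $\Party_j$ sends a message $m$ in $\Pi$ to some $\Party_i \in \cC$, the simulator wraps $m$ with $\Party_j$'s real VRF evaluation and signature and with a simulated NIZK, and hands the wrapped object to $\Ac$. Conversely, when $\Ac$ has $\Party_i \in \cC$ send $(m,\sigma,\pi,\nizkproof)$ to an honest $\Party_j$, $\Ac'$ verifies all cryptographic components, and if verification succeeds it has $\Party_i$ send $m$ to $\Party_j$ in $\Pi$, otherwise it sends $\bot$.

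\textbf{Local consistency of $\Ac'$.} I would verify that $\Ac'$ fits \cref{def:Semi-ConssitentParties} by appealing to three standard cryptographic facts: (i) VRF uniqueness pins each corrupted $\Party_i$'s round-$r$ coins to $\VRFEval(\vrfsk_i,(i,r))$, so $\Ac'$ is effectively using honestly sampled coins; (ii) signature unforgeability, together with the inductive embedding of signed-and-proved history inside the NIZK statement, rules out (except with $\negl(\secParam)$ probability) any accepted message attributed to an honest sender that was not actually produced by that sender, so corrupted parties cannot lie about messages received from honest parties; (iii) NIZK soundness guarantees that every accepted outgoing message from $\Party_i \in \cC$ admits a witness of the form (input bit $b$, incoming messages from other corrupted parties), matching the ``input and message selection'' clause of \cref{def:Semi-ConssitentParties} verbatim. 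A standard hybrid argument across the $q \cdot n^2 = \poly(n)$ cryptographic invocations bounds the distinguishing gap between the real honest-view under $\Ac$ and the simulated honest-view under $\Ac'$ by $\negl(\secParam)$, which establishes \cref{thm:local_to_mal_generic}. For \cref{thm:local_to_mal_PC}, \cref{def:PR} implies that each round-$r$ message already exposes the sender's fresh randomness, so $\comppc$ merely appends each round's incoming messages (with their VRF proofs and signatures) to the next round's outgoing messages, making correctness publicly verifiable directly from the visible state; the NIZK layer is dropped and the rest of the reduction is identical.

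\textbf{Main obstacle.} The main subtlety is the recursive structure of the NIZK relation in the generic case: the round-$r$ proof must assert the existence of signed-and-NIZK-proved incoming messages from rounds $1,\ldots,r-1$, each of which makes the same kind of assertion about its predecessors. I would resolve this by defining the relation inductively on the round index, with the base case $r=1$ asking only for VRF correctness, a valid signature on $m$, and direct agreement with the $\Pi$-next-message function on input bit and coins, and then peeling off one round per NIZK soundness invocation. A secondary care point is that simulating honest-party NIZKs must not allow $\Ac$ to forge NIZKs on false statements, which is precisely what \emph{simulation-sound} NIZKs (an off-the-shelf primitive) handle. The $q=O(\log n)$ round bound then keeps the total cryptographic loss polynomial, hence negligible in $\secParam$, and this negligible quantity is absorbed into the stated $\negl(\secParam)$ slacks on $\alpha, \beta, \gamma$.
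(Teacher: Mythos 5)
Your reduction has the right skeleton---simulate the NIZK CRS, wrap/unwrap honest messages, and use VRF uniqueness, signature unforgeability, and NIZK soundness to force $\Ac$'s messages into the locally consistent format---and this broadly matches the paper's proof. However, there is a genuine gap in point (i) of your local-consistency argument. You write that VRF uniqueness ``pins'' each corrupted $\Party_i$'s round-$r$ coins to $\VRFEval(\vrfsk_i,(i,r))$, ``so $\Ac'$ is effectively using honestly sampled coins.'' This conflates two different things. By \cref{def:Semi-ConssitentParties}, a locally consistent adversary must flip its coins \emph{honestly}, i.e., uniformly at random, and it has no control over them; the VRF output, by contrast, is a fixed pseudorandom string determined by $(\vrfsk_i,(i,r))$. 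Concretely: $\Ac$ computes $m_{i\to j}^r$ in the simulated $\Pi'$-view using the VRF-derived $\rho_i^r$, but your $\Ac'$'s corrupted $\Party_i$ in $\Pi$ holds uniformly tossed coins that differ from $\rho_i^r$ with overwhelming probability, so there need not be any input bit $b_\ell$ and incoming-message selection that makes $\nxtmsg_{i,i'}^r(b_\ell;r;\ldots)$ equal $m_{i\to j}^r$; hence $\Ac'$ cannot legally send it. A symmetric mismatch occurs for the honest parties, whose $\Pi$-messages are produced with uniform coins but are presented to $\Ac$ as if produced with VRF coins. As written, your simulator does not even type-check as a locally consistent adversary against $\Pi$.

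The paper resolves this with an intermediate protocol $\Pi_1$---the variant of $\Pi$ in which every party derives its per-round coins via the VRF---together with the intermediate notion of a ``locally consistent VRF-compliant'' adversary (one whose corrupted parties use exactly the VRF output). It first shows, by VRF pseudorandomness and a hybrid over parties and rounds, that $\Pi_1$ inherits the $(t,\alpha,\beta,q,\gamma)$ guarantees of $\Pi$ against this class up to a $\negl(\secParam)$ loss, and only then reduces a malicious attack on $\Pi'=\comp(\Pi)$ to a locally consistent VRF-compliant attack on $\Pi_1$. You need this two-step structure (or an equivalent hop) before the rest of your argument applies; your write-up collapses it into an unsupported sentence. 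Two smaller remarks: you invoke simulation-sound NIZKs, which is a reasonable and arguably more careful choice, whereas the paper first analyzes a ``perfect NIZK'' idealization and then relaxes it; and the inductive peeling of the recursive relations $\Rel^r_{i\to j}$ that you propose is in fact carried out in the paper by a sequence of intermediate claims (allowing in turn arbitrary coins, arbitrary incoming messages, then arbitrary outgoing messages), so your plan is aligned with the paper's on that front.
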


In \cref{sec:ltf:prelim}, we define the cryptographic primitives used in the compiler, and in \cref{sec:ltf:compiler}, we construct the compiler and prove its security.

\subsection{Preliminaries}\label{sec:ltf:prelim}
The compiler makes use of  \emph{verifiable random functions} (VRF)~\cite{MRV99}, \emph{digital signatures}, and non-interactive zero-knowledge proofs, as defined below.

\subsubsection{Verifiable Random Functions}
We follow the definition of VRF from \cite{HJ16}.
\begin{definition}[VRF]\label{def:vrf}
A \textsf{verifiable random function} is a tuple of polynomial-time algorithms $\Pi=(\VRFGen,\VRFEval,\VRFVerify)$ of the following form.
\begin{itemize}
    \item
    $\VRFGen(1^\secParam)\to(\sk,\vk)$. On input the security parameter, the key-generation algorithm outputs a secret key $\sk$ and a public verification key $\vk$.
    \item
    $\VRFEval(\sk,x)\to(y,\pi)$. On input the secret key and an input $x\in\zo^\secParam$, the evaluation algorithm outputs a value $y\in \cS$ (for a finite set $\cS$) and a proof $\pi$.
    \item
    $\VRFVerify(\vk,x,y,\pi)\to b$. On input the verification key, an input $x\in\zo^\secParam$, an output $y\in \cS$, and a proof $\pi$, the deterministic verification algorithm outputs a bit $b\in\zo$.
\end{itemize}
\end{definition}
\noindent
We require the following properties:
\begin{itemize}
\item \textbf{Correctness.}
For $(\sk,\vk)\gets\VRFGen(1^\secParam)$ and $x\in\zo^\secParam$ it holds that if $(y,\pi)\gets\VRFEval(\sk,x)$ then $\VRFVerify(\vk,x,y,\pi)=1$.

\item \textbf{Unique provability.}
For all strings $(\sk, \vk)$ (not necessarily generated by $\VRFGen$) and all $x\in\zo^\secParam$, there  exists no $(y_0,\pi_0,y_1,\pi_1)$ such that $y_0\neq y_1$ and $\VRFVerify(\vk,x,y_0,\pi_0)=\VRFVerify(\vk,x,y_1,\pi_1)=1$.

\item \textbf{Pseudorandomness.}
For any \ppt adversary $\Adv=(\Adv_1,\Adv_2)$ it holds that
\[
\size{\pr{\ExptVRF_{\Pi,\Adv}(\secParam)=1}-\frac12}\leq \negl(\secParam),
\]
\noindent
for the experiment $\ExptVRF$ defined below:
\end{itemize}

\begin{small}
\begin{center}
\begin{tabular}{|l|l|}
    \hline
    \Centerstack{
    $\ExptVRF_{\Pi,\Adv}(\secParam)$
    }
    &
    \Centerstack{
    $\Oeval(x)$
    }\\
    \hline
    \Centerstack[l]{
    $(\sk,\vk)\gets\VRFGen(1^\secParam)$\\
    $(\xs,\state)\gets\Adv_1^{\Oeval(\cdot)}(\vk)$\\
    $(y_0,\pi)\gets\VRFEval(\sk,\xs)$\\
    $y_1\gets_R \cS$\\
    $b\gets_R \zo$\\
    $b'\gets\Adv_2^{\Oeval(\cdot)}(\state,y_b)$\\
    return $1$ if and only if $b=b'$\\
    \quad and $\Adv$ didn't query $\xs$
    }
    &
    \shortstack[l]{
    $(y,\pi)\gets\VRFEval(\sk,x)$\\
    return $(y,\pi)$
    }\\
    \hline
\end{tabular}
\end{center}
\end{small}

\subsubsection{Digital Signatures}
We consider the standard notion of existentially unforgeable signatures under an adaptive chosen-message attack~\cite{GMR88}.
\begin{definition}[Digital signatures]
A \textsf{digital signatures} scheme is a tuple of polynomial-time algorithms $\Pi=(\DSGen,\DSSign,\DSVerify)$ of the following form.
\begin{itemize}
    \item
    $\DSGen(1^\secParam)\to(\sk,\vk)$. On input the security parameter, the key-generation algorithm outputs a secret signing key $\sk$ and a public verification key $\vk$.
    \item
    $\DSSign(\sk,m)\to\sigma$. On input the signing key and a message $m$, the signing algorithm outputs a signature $\sigma$.
    \item
    $\DSVerify(\vk,m,\sigma)\to b$. On input the verification key, a message $m$, and a signature $\sigma$, the deterministic verification algorithm outputs a bit $b\in\zo$.
\end{itemize}
\end{definition}
\noindent
We require the following properties:
\begin{itemize}
\item \textbf{Correctness.}
For $(\sk,\vk)\gets\DSGen(1^\secParam)$ and a message $m$ it holds that if $\sigma\gets\DSSign(\sk,m)$ then $\DSVerify(\vk,m,\sigma)=1$.

\item \textbf{Existentially unforgeable under an adaptive chosen-message attack.}
For any \ppt adversary $\Adv$ it holds that
\[
\size{\pr{\ExptSig_{\Pi,\Adv}(\secParam)=1}}\leq \negl(\secParam),
\]
\noindent
for the experiment $\ExptSig$ defined below:
\end{itemize}

\begin{small}
\begin{center}
\begin{tabular}{|l|l|}
    \hline
    \Centerstack{
    $\ExptSig_{\Pi,\Adv}(\secParam)$
    }
    &
    \Centerstack{
    $\Osign(m)$
    }\\
    \hline
    \Centerstack[l]{
    $(\sk,\vk)\gets\DSGen(1^\secParam)$\\
    $(m,\sigma)\gets\Adv^{\Osign(\cdot)}(\vk)$\\
    return $1$ if and only if $\DSVerify(\vk,m,\sigma)=1$\\
    \quad and $\Adv$ didn't query $m$
    }
    &
    \shortstack[l]{
    $\sigma\gets\DSSign(\sk,m)$\\
    return $\sigma$
    }\\
    \hline
\end{tabular}
\end{center}
\end{small}

\subsubsection{Non-Interactive Zero-Knowledge Proofs}
A non-interactive zero-knowledge proof~\cite{BFM88} is a single-message protocol that allow a prover to convince a verifier the a certain common statement belongs to a language, without disclosing any additional information. We follow the definition from \cite{GOS12}.
\begin{definition}[NIZK]\label{def:nizk}
Let $\Rel$ be an $\NP$-relation and let $\Lang_\Rel$ be the language consisting of the statements in $\Rel$.
A \textsf{non-interactive zero-knowledge proof} system for $\Rel$ is a tuple of polynomial-time algorithms $\Pi=(\NIZKGen,\NIZKProve,\NIZKVerify)$ of the following form:
\begin{itemize}
    \item
    $\NIZKGen(1^\secParam)\to\crs$. On input the security parameter, the setup-generation algorithm outputs a common reference string $\crs$.
    \item
    $\NIZKProve(\crs,x,w)\to\nizkproof$. On input the $\crs$, a statement $x$, and a witness $w$ such that $(x,w)\in\Rel$, the prover algorithm outputs a proof string $\nizkproof$.
    \item
    $\NIZKVerify(\crs,x,\nizkproof)\to b$. On input the $\crs$, a statement $x$, and a proof $\nizkproof$, the verification algorithm outputs a bit $b\in\zo$.
\end{itemize}
\end{definition}
\noindent
We require the following properties:
\begin{itemize}
\item \textbf{Correctness.}
A proof system is complete if an honest prover with a valid witness can convince an honest verifier. For $(x,w)\in\Rel$ it holds that
\[
\pr{\NIZKVerify(\crs,x,\nizkproof)=1 \mid \crs\gets\NIZKGen(1^\secParam), \nizkproof\gets\NIZKProve(\crs,x,w)}=1.
\]
\item \textbf{Statistical soundness.}
A proof system is sound if it is infeasible to convince an honest verifier when the statement is false.
For all polynomial-size families $\sset{x_\secParam}$ of statements $x_\secParam\notin\Lang_\Rel$ and all adversaries $\Adv$ it holds that
\[
\pr{\NIZKVerify(\crs,x_\secParam,\nizkproof)=1 \mid \crs\gets\NIZKGen(1^\secParam), \nizkproof\gets\Adv(\crs,x_\secParam)}=1.
\]
\item \textbf{Computational (adaptive, multi-theorem) zero knowledge.}
A proof system is zero-knowledge if the proofs do not reveal any information about the witnesses. There exists a polynomial-time simulator $\Simnizk=(\Simnizk^1,\Simnizk^2)$, where $\Simnizk^1$ returns a simulated \crs together with a simulation trapdoor $\tau$ that enables $\Simnizk^2$ to simulate proofs without having access to the witness. That is, for every non-uniform polynomial-time adversary $\Adv$ it holds that
{\small{
\[
\abs{\pr{\Adv^{\Pc_\crs(\cdot,\cdot)}(\crs)= 1 \mid \crs\gets\NIZKGen(1^\secParam)}
-\pr{\Adv^{\Sim_{\crs,\tau}(\cdot,\cdot)}(\crs)= 1 \mid (\crs,\tau)\gets\Simnizk^1(1^\secParam)}}\leq\negl(\secParam),
\]
}}
where $\Sim_{\crs,\tau}(x,w)=\Simnizk^2(\crs,\tau,x)$ for $(x,w)\in\Rel$ and $\Pc_\crs(x,w)=\NIZKProve(\crs,x,w)$.
\end{itemize}

\subsubsection{Next-Message Functions}
An $n$-party protocol is represented by a set $\sset{\nextmsg_{i\to j}}_{i,j\in[n]}$ of next-message functions, a set $\sset{\outputf_i}_{i\in[n]}$ of output functions, and a distribution $D$ for generating setup information. Initially, the setup information is sampled as $(\setup_1,\ldots,\setup_n)\gets D$ and every party $\Party_i$ receives $\setup_i$ before the protocol begins.
The view of a party $\Party_i$ in the \rth round, denoted $\view_i^r$, consists of: its input bit $x_i$, its setup information $\setup_i$, its random coin tosses $\rho_i=(\rho_i^1,\ldots,\rho_i^r)$ (where $\rho_i^{r'}$ are the tossed coins for round $r'$) and the incoming messages $(m^{r'}_{1\to i}, \ldots, m^{r'}_{n\to i})$ for every $r'<r$, where $m^{r'}_{j\to i}$ is the message received from $\Party_j$ in round $r'$.
Given $\Party_i$'s view in the \rth round, the function $\nextmsg_{i\to j}(\view_i^r)$ outputs the message $m^r_{i\to j}$ to be sent by $\Party_i$ to $\Party_j$, except for the last round, where it outputs $\bot$; in that case the output function $\outputf(\view_i^r)$ produces the output value $y$. \Wlg we assume that a message $m^r_{i\to j}$ is of the form $(r,i,j,m)$; looking ahead, this will ensure that two messages in the protocol will not have the same signature.

\subsubsection{The PKI Model}
The compiled protocol is designed to work in the \emph{public-key infrastructure} (PKI) model, where a trusted third party generates private/public keys for the parties before the protocol begins. In our setting, we will require a PKI for VRF, digital signatures, and \NIZK, meaning that the trusted party operates as follows:
\begin{enumerate}
    \item
    For every $i\in[n]$, compute VRF keys $(\vrfsk_i,\vrfvk_i)\gets\VRFGen(1^\secParam)$.
    \item
    For every $i\in[n]$, compute signature keys $(\dssk_i,\dsvk_i)\gets\DSGen(1^\secParam)$.
    \item
    Compute $\crs\gets\NIZKGen(1^\secParam)$.
    \item
    Send to every party $\Party_i$ the secret keys $(\vrfsk_i,\dssk_i)$ as well as all the public keys $\crs$, $(\vrfvk_1,\ldots,\vrfvk_n)$ and $(\dsvk_1,\ldots,\dsvk_n)$.
\end{enumerate}

\subsection{The Compiler}\label{sec:ltf:compiler}

Given a protocol that is secure against locally consistent adversaries, the main idea of the compiler is to limit the capabilities of a malicious adversary attacking the compiled protocol to those of a locally consistent one. This is achieved by proving an honest behavior via the cryptographic tools described above (VRF, digital signatures, and NIZK proofs) in a similar way to the GMW compiler~\cite{GMW87}. Unlike GMW, where all consistency proofs are carried out over a broadcast channel to ensure a consistent view between the honest parties, in our case the consistency proofs are done over pairwise channels, so they only guarantee local consistency.

\smallskip
We start by defining the NP relations that will be used for the zero-knowledge proofs. Each instance consists of a message between a pair of parties (say from $\Party_i'$ to $\Party_j'$) and the witness is the internal state of $\Party_i'$ used to generate the message (the input, the random coins, and all incoming messages) along with a ``proof of correctness,'' \ie that the random coins were properly generated using the VRF, that the incoming messages that $\Party_i'$ received from every $\Party_k'$ were signed by $\Party_k'$, and in turn were proven to be generated correctly (\ie that each $\Party_k'$ used the correct random coins generated by the VRF and its incoming messages were signed by the senders). Note that this recursive step in the verification is required for proving locally consistent behaviour, since if both $\Party_i'$ and $\Party_k'$ are corrupt, then $\Party_k'$ can send an arbitrary message to $\Party_i'$ and sign it (in this case the NIZK proof from $\Party_k'$ to $\Party_i'$ will not verify). When $\Party_i'$ sends its message to an honest $\Party_j'$, it is not enough that $\Party_i'$ proves that the messages from $\Party_k'$ are properly signed, but $\Party_i'$ must also prove that $\Party_k'$ provided a NIZK proof asserting that its messages were generated by consistent random coins and correct incoming messages according to the next-message function.
For this reason we consider $q=O(\log{n})$

\paragraph{The Relation $\Rel^r_{i\to j}$.}
We will consider the following set of NP relations, where for $i,j\in[n]$ and an integer $r$, the relation $\Rel^r_{i\to j}$ is parametrized by an $n$-party protocol $\Pi$ (represented by $\sset{\nextmsg_{i\to j}}_{i,j\in[n]}$ and $\sset{\outputf_i}_{i\in[n]}$), a $\VRF$ scheme, a $\DSig$ scheme, and a \NIZK scheme, as well as:
\begin{itemize}
	\item
	A vector of VRF verification keys $(\vrfvk_1,\ldots,\vrfvk_n)$.
	\item
	A vector of signature verification keys $(\dsvk_1,\ldots,\dsvk_n)$.
	\item
	A \NIZK common reference string $\crs$.
\end{itemize}
The instance consists of a message $(m_{i\to j}^r,\sigma_{i\to j}^r,\pi_i^r)$ (the message from $\Party_i$ to $\Party_j$).
The witness consists of:
\begin{itemize}
	\item
	A bit $x_i\in\zo$ and a string $\setup_i$.
	\item
	A vector of random coins $(\rho_i^1,\ldots,\rho_i^r)$.
	\item
	For $r'\in[r-1]$ and $k\in[n]$, a message $\vm^{r'}_{k\to i}=(m^{r'}_{k\to i},\sigma^{r'}_{k\to i},\pi^{r'}_k,\nizkproof_{k\to i}^{r'})$ ($\Party_i$'s incoming messages).
\end{itemize}
The instance/witness pair is in the relation $\Rel^r_{i\to j}$ if the following holds:
\begin{enumerate}
	\item
	For every $r'\in[r]$ it holds that $\VRFVerify(\vrfvk_i,(i,r'),\rho_i^{r'},\pi_i^{r'})=1$.
	\item
	$\DSVerify(\dsvk_i,m^r_{i\to j},\sigma^r_{i\to j})=1$.
	\item
	For $r'\in[r-1]$ and $k\in[n]$ it holds that $\NIZKVerify(\crs,(m^{r'}_{k\to i},\sigma^{r'}_{k\to i},\pi_k^{r'}),\nizkproof_{k\to i}^{r'})=1$ \wrt the relation $\Rel^{r'}_{k\to i}$.
	\item
	Set $\view_i^1=(x_i,\setup_i,\rho_i^1)$ and for $1<r'\leq r$ set $\view_i^{r'}=(\view_i^{r'-1},m_{1\to i}^{r'-1},\ldots,m_{n\to i}^{r'-1},\rho_i^{r'})$. Then, it holds that $m_{i\to j}^r=\nextmsg_{i\to j}(\view_i^r)$.
\end{enumerate}

\paragraph{The compiled protocol.}
Having defined the relations $\sset{\Rel^r_{i\to j}}$, we are ready to present the compiler for a protocol $\Pi$, secure against locally consistent adversaries to a maliciously secure one. Initially, in the setup phase, each party receives its setup information for $\Pi$ in addition to the PKI keys for $\VRF$, digital signatures, and NIZK (as described above). To generate its coins for the \rth round (along with a proof), party $\Party_i$ evaluates the $\VRF$ over the pair $(i,r)$; next, $\Party_i$ computes the \rth round messages for $\Pi$, signs each message, and sends to every other $\Party_j$ the corresponding message, the signature, and the $\VRF$ proof. In addition, $\Party_i$ sends to $\Party_j$ a $\NIZK$ proof for $\Rel^r_{i\to j}$, proving that $\Party_i$ behaves consistently towards $\Party_j$.

Let $\Pi=(\Party_1,\ldots,\Party_n)$ be an $n$-party protocol represented by the set of next-message functions $\sset{\nextmsg_{i\to j}}_{i,j\in[n]}$, the set of output functions $\sset{\outputf_i}_{i\in[n]}$, and a distribution $D$ for generating setup information. Let $\VRF$ be a verifiable random function, let $\DSig$ be a digital signatures scheme, and let $\NIZK$ be a non-interactive zero-knowledge proof scheme.
Later on, we will simplify the compiler for the case of public-randomness protocols by removing the need for $\NIZK$.

\begin{protocol}[Protocol $\Pi' = (\Party'_1,\ldots, \Party'_n)=\comp(\Pi)$]\label{prot:generic}~
\begin{itemize}
\item[Setup:]
The setup-generation algorithm samples $(\setup_1,\ldots,\setup_n)\gets D$ for the protocol $\Pi$, computes $\crs\gets\NIZKGen(1^\secParam)$, and for every $i\in[n]$ computes $(\vrfsk_i,\vrfvk_i)\gets\VRFGen(1^\secParam)$ and $(\dssk_i,\dsvk_i)\gets\DSGen(1^\secParam)$.
The setup string for party $\Party'_i$ is set to be $\setup'_i=\left(\setup_i, \vrfsk_i,\dssk_i, \crs,\vrfvk_1,\ldots,\vrfvk_n, \dsvk_1,\ldots,\dsvk_n\right)$.

\item[Input:] Party $\Party_i'$  starts with an input bit $x_i\in\zo$.

\item[Round $r=1$:]~

\begin{enumerate}
	\item $\Party_i'$ computes $(\rho_i^1,\pi_i^1)\gets\VRFEval(\vrfsk_i,(i,1))$ and sets $\view_i^1=(x_i,\setup_i,\rho_i^1)$.
	\item $\Party_i'$ computes for every $j\in[n]$ the message $m^1_{i\to j}=\nextmsg_{i\to j}(\view_i^1)$ and signs $\sigma_{i\to j}^1\gets\DSSign(\dssk_i,m_{i\to j}^1)$.
	
	\item $\Party_i'$ computes for every $j\in[n]$ a proof for the relation $\Rel^1_{i\to j}$ on $\stat_{i\to j}^1=(m_{i\to j}^1,\sigma_{i\to j}^1,\pi_i^1)$ and witness $\wit_{i\to j}^1=(x_i,\setup_i,\rho^1_i)$ as $\nizkproof_{i\to j}^1\gets\NIZKProve(\crs, \stat_{i\to j}^1, \wit_{i\to j}^1)$.
	\item $\Party_i'$ sends $\vm_{i\to j}^1=(m_{i\to j}^1,\sigma_{i\to j}^1, \pi_i^1,\nizkproof_{i\to j}^1)$ to $\Party'_j$.
\end{enumerate}

\item[Round $r>1$:]
Let $\vm_{j\to i}^{r-1}=(m_{j\to i}^{r-1},\sigma_{j\to i}^{r-1}, \pi_j^{r-1},\nizkproof_{i\to j}^{r-1})$ be the message $\Party_i'$ received from $\Party_j'$ in round $r-1$. If $\Party_j'$ did not send a message, or if $\NIZKVerify(\crs,(m_{j\to i}^{r-1},\sigma_{j\to i}^{r-1}, \pi_j^{r-1}), \nizkproof_{i\to j}^{r-1})=0$, set $m_{j\to i}^{r-1}=\bot$.

\begin{enumerate}
	\item $\Party_i'$ computes $(\rho_i^r,\pi_i^r)\gets\VRFEval(\vrfsk_i,(i,r))$ and sets the internal view as $\view_i^r=(\view_i^{r-1},m_{1\to i}^{r-1},\ldots,m_{n\to i}^{r-1},\rho_i^r)$.
    \item
    $\Party_i'$ computes for every $j\in[n]$ the message $m^r_{i\to j}=\nextmsg_{i\to j}(\view_i^r)$ and signs $\sigma_{i\to j}^r\gets\DSSign(\dssk_i,m_{i\to j}^r)$.
	
	\item $\Party_i'$ computes for every $j\in[n]$ a proof for the relation $\Rel^r_{i\to j}$ on the statement $\stat_{i\to j}^r=(m_{i\to j}^r,\sigma_{i\to j}^r,\pi_i^r)$ and witness $\wit_{i\to j}^r=(\wit_{i\to j}^{r-1}, \rho^r_i, \sset{\vm_{k\to i}^{r-1}}_{k\in[n]})$ as $\nizkproof_{i\to j}^r\gets\NIZKProve(\crs, \stat_{i\to j}^r, \wit_{i\to j}^r)$.
	\item $\Party_i'$ sends $\vm_{i\to j}^r=(m_{i\to j}^r,\sigma_{i\to j}^r,\nizkproof_{i\to j}^r,\pi_i^r)$ to $\Party'_j$.
\end{enumerate}
     	
\item[Output:]
If in some round $r$, the output of $\nextmsg_{i\to j}(\view_i^r)$ is $\bot$ for all $j\in[n]$, indicating it is the last round, $\Party_i'$ outputs $y=\outputf(\view_i^r)$ and halts.
\end{itemize}
\end{protocol}

\subsubsection{Security Proof}
We prove the security of \cref{prot:generic} using a sequence of arguments. Given a protocol $\Pi$ secure against locally consistent adversaries, we first adjust it to use pseudorandom coins computed using a VRF. The new protocol, denoted $\Pi_1$, remains secure against slightly weaker locally consistent adversaries by the pseudorandomness property of the VRF. Next, we show how to convert any malicious adversary against the compiled protocol $\Pi'=\comp(\Pi)$ into a ``weak'' locally consistent attack against $\Pi_1$. The proof of the second part of the theorem, concerning public-randomness protocols, follows in similar lines.

\begin{proof}[Proof of \cref{thm:local_to_malicious:Res}]
We start by proving the first part of the theorem, considering generic protocols, and later focus on public-randomness protocols.

\paragraph{Proof of Item~\ref{thm:local_to_mal_generic} (generic protocols).}
We prove Item~\ref{thm:local_to_mal_generic} in two steps.
Initially, as an intermediate step, we consider a variant of $\Pi$, denoted $\Pi_1$, where the parties behave exactly as in $\Pi$ except that they use a $\VRF$ to compute their random coins for each round. Formally, $\Pi_1$ is defined in the PKI model, where, in addition to the setup information for $\Pi$, every party $\Party_i$ receives $\vrfsk_i$ and $(\vrfvk_1,\ldots,\vrfvk_n)$ for $(\vrfsk_i,\vrfvk_i)\gets\VRFGen(1^\secParam)$. During the execution of the protocol, each party $\Party_i$ evaluates $(\rho_i^r,\pi_i^r)\gets\VRFEval(\vrfsk_i,(i,r))$, sets its coins for the \rth round to $\rho_i^r$ (instead of a uniformly distributed string), and appends $\pi_i^r$ to its \rth round messages.
Note that the strings $\rho_i^r$ are deterministic, so a locally consistent adversary has the power to use arbitrary values instead. To enable a reduction to the security of $\Pi$, we will explicitly assume that corrupted parties indeed use the honestly generated pseudorandom values $\rho_i^r$ by evaluating the \VRF on $(i,r)$; we call such a locally consistent adversary \emph{\VRF-compliant}.
\begin{claim}
If $\Pi$ is a $\left(t,\alpha,\beta,q,\gamma\right)$-\BA against locally consistent adversaries, then $\Pi_1$ is a $\left(t,\alpha-\negl(\secParam),\beta-\negl(\secParam),q,\gamma-\negl(\secParam)\right)$-\BA against locally consistent  \VRF-compliant adversaries.
\end{claim}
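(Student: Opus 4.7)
The plan is to prove the claim by a standard hybrid argument anchored on the pseudorandomness of the VRF. Suppose for contradiction that some VRF-compliant locally consistent adversary $\Adv_1$ attacking $\Pi_1$ violates one of the three properties (say, agreement) with probability exceeding $\alpha$ by a non-negligible quantity $\delta(\secParam)$. I will transform this into either a locally consistent attack against $\Pi$ (contradicting its assumed security) or a VRF distinguisher (contradicting pseudorandomness). Validity and halting are handled identically by replacing the target event.

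Concretely, I set up a sequence of $N=n\cdot q$ hybrids $H_0,H_1,\ldots,H_N$ under a fixed ordering of the (honest-party, round) pairs. In $H_0$ the honest parties execute $\Pi_1$ exactly, while in $H_j$ the first $j$ pairs in the ordering use a freshly sampled uniform string in place of the VRF output $\rho_i^r$, with all other random coins still being VRF evaluations; the corrupted parties in every hybrid compute their coins honestly from the VRF as required by VRF-compliance. By a pigeonhole argument, some index $j^\ast$ must witness an advantage gap of at least $\delta/N$ between $H_{j^\ast-1}$ and $H_{j^\ast}$. A VRF distinguisher $\Dc$ is built around this gap: $\Dc$ receives the challenge verification key and plants it as $\vrfvk_{i^\ast}$ for the pair indexed by $j^\ast$, generates all other VRF keys itself, answers each of $i^\ast$'s pre-$j^\ast$ evaluations via the oracle $\Oeval$, and at the critical round submits $(i^\ast,r^\ast)$ as the challenge input, using the returned value as $\rho_{i^\ast}^{r^\ast}$. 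After completing the simulation, $\Dc$ outputs $1$ iff $\Adv_1$ succeeds in breaking agreement. Because $\Adv_1$, the simulation, and the agreement check are all polynomial-time, this yields advantage $\delta/N$, contradicting VRF pseudorandomness since $N$ is polynomial.

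It remains to argue that in the terminal hybrid $H_N$ the success probability is at most $\alpha$. In $H_N$ every honest party draws uniform round-coins, which is precisely the distribution used by honest parties in $\Pi$. The corrupted parties still draw their coins by evaluating VRFs on self-generated keys; since locally consistent adversaries against $\Pi$ may compute their honest coins in any way (or, if one prefers exact uniformity, a final sweep of hybrids replaces each corrupted party's VRF-derived coins with uniform coins, again justified by pseudorandomness using the keys generated by the reduction), we may view $H_N$ as an execution of $\Pi$ under some locally consistent adversary $\Adv$, whose agreement-violation probability is bounded by $\alpha$.

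The main obstacle I anticipate is the treatment of the VRF proofs $\pi_i^r$ that honest parties append to their messages: once $\rho_i^r$ is replaced by a uniform string, no valid VRF proof exists, so the hybrid step cannot literally reuse the proof syntax. The clean resolution is that in $\Pi_1$ the proofs are appended but never verified by honest parties, and an adversary's behaviour on these proofs is irrelevant for any statistical event we measure, so the simulator in the hybrid can supply an arbitrary (e.g.\ dummy) string in place of the missing proof without affecting the adversary's success probability; equivalently, one may first pass to an auxiliary protocol $\hat\Pi_1$ identical to $\Pi_1$ but with the proof component stripped, prove the claim for $\hat\Pi_1$, and then note that re-appending proofs can only increase the distinguishing work for the adversary but not its advantage against honest parties' outputs.
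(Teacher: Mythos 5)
Your hybrid-argument template matches the paper's one-liner ("standard hybrid argument over VRF pseudorandomness"), and flagging the issue with the now-unprovable VRF proofs $\pi_i^r$ is a good catch; the dummy-string fix is fine since honest parties in $\Pi_1$ do not verify those proofs. The gap is in the passage from the terminal hybrid $H_N$ back to $\Pi$. You write that "locally consistent adversaries against $\Pi$ may compute their honest coins in any way," but this contradicts \cref{def:Semi-ConssitentParties}: a locally consistent adversary \emph{flips its random coins honestly}, i.e., the coins $r$ fed to the next-message functions are sampled from the prescribed distribution, which for $\Pi$ is uniform. An adversary whose corrupted parties instead plug in VRF outputs is not a locally consistent adversary for $\Pi$, so your $H_N$ is not yet an execution of $\Pi$ on which the hypothesis can be invoked.

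Your fallback (a "final sweep" of hybrids over corrupted parties' VRF coins) is not immediately justified either: in those steps the challenge VRF key $\vrfvk_{i^\ast}$ belongs to a \emph{corrupted} party whose $\vrfsk_{i^\ast}$ is handed to $\Adv_1$ in the setup, while the pseudorandomness experiment never gives the distinguisher the challenge secret key, so the reduction cannot run $\Adv_1$ in general. The needed (and in the paper implicit) extra observation is that for a locally consistent, VRF-compliant $\Adv_1$ the key $\vrfsk_{i^\ast}$ enters the transcript only through the $q$ prescribed evaluations $\VRFEval(\vrfsk_{i^\ast},(i^\ast,r))$ --- those are the only quantities the next-message functions consume --- so $\Adv_1$ can \wlg be modelled as an oracle machine querying $\Oeval$ at exactly those points, which the reduction can answer (using the challenge value at the single challenge point). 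With that modelling a single sweep over \emph{all} $(i,r)$ pairs, honest and corrupted alike, closes the argument, and neither the incorrect parenthetical nor the unmotivated final sweep is required.
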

\begin{proof}
By assumption, a corrupted $\Party_i$ uses the value $\rho_i^r$ as its random coins for the \rth round. Therefore, the only difference between $\Pi_1$ and $\Pi$ are the use of pseudorandom string instead of uniformly distributed strings. The proof follows by the pseudorandomness of the $\VRF$ scheme using a standard hybrid argument.
\end{proof}

Next, let $\Adv'$ be an adversary attacking $\Pi'=(\Party'_1,\ldots,\Party'_n)$. We will construct an adversary $\Adv$ for the protocol $\Pi_1=(\Party_1,\ldots,\Party_n)$. Let $\Simnizk=(\Simnizk^1,\Simnizk^2)$ be the simulator that is guaranteed for the NIZK scheme. The adversary $\Adv$ runs internally a copy of $\Adv'$ and proceeds as follows:
\begin{itemize}
\item
In the setup phase of $\Pi_1$, $\Adv$ receives the setup string $\left(\setup_i, \vrfsk_i,\vrfvk_1,\ldots,\vrfvk_n\right)$ (consisting of the setup for $\Pi$ and the \VRF keys). Next, \Adv samples $(\crs,\tau)\gets\Simnizk^1(1^\secParam)$ and $(\dssk_i,\dsvk_i)\gets\DSGen(1^\secParam)$ for every $i\in[n]$, and provides the setup string $\setup'_i=\left(\setup_i, \vrfsk_i,\dssk_i, \crs,\vrfvk_1,\ldots,\vrfvk_n, \dsvk_1,\ldots,\dsvk_n\right)$ for every corrupted $\Party'_i$.

\item
Upon receiving a message $(m^r_{i \to j},\pi_i^r)$ from an honest $\Party_i$ to a corrupted $\Party_j$ in the execution of $\Pi_1$, \Adv sends $(m^r_{i \to j},\sigma^r_{i \to j},\pi_i^r, \nizkproof_{i\to j}^r)$ to $\Adv'$ with $\sigma_{i\to j}^r\gets\DSSign(\dssk_i,m^r_{i \to j})$ and $\nizkproof_{i\to j}^r\gets\Simnizk^2(\crs,\tau,(m^r_{i \to j},\sigma^r_{i \to j},\pi_i^r))$.
\item
When $\Adv$ receives $(m^r_{i \to j},\sigma^r_{i \to j},\pi_i^r,\nizkproof_{i\to j}^r)$ from $\Adv'$ on behalf of a corrupted $\Party_i'$ to an honest $\Party_j'$ (in the simulated execution of $\Pi'$), \Adv first verifies that $\NIZKVerify(\crs,(m^r_{i \to j},\sigma^r_{i \to j},\pi_i^r),\nizkproof_{i\to j}^r)=1$. If the proof is verified, \Adv sends the message $(m^r_{i \to j},\pi_i^r)$ to $\Party_j$ in the protocol $\Pi_1$; otherwise, \Adv considers $\Party_i$ as an aborting party towards $\Party_j$.
\end{itemize}

We complete the proof in a series of steps, analyzing the attack under increasingly stronger power of the adversary $\Adv'$, starting from a locally consistent \VRF-compliant attack until reaching a full blown malicious attack.
Initially, we will assume perfect security of the NIZK, and remove this restriction later on.
\begin{claim}
Consider a perfect NIZK scheme.
If $\Pi_1$ is a $\left(t,\alpha,\beta,q,\gamma\right)$-\BA against locally consistent \VRF-compliant adversaries, then $\Pi'$ is a $\left(t,\alpha,\beta,q,\gamma\right)$-\BA against locally consistent \VRF-compliant adversaries.
\end{claim}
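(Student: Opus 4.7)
The plan is to show that the adversary $\Adv$ against $\Pi_1$, as constructed immediately before the claim from any locally consistent \VRF-compliant adversary $\Adv'$ against $\Pi'$, is itself a locally consistent \VRF-compliant adversary whose attack induces (essentially) the same distribution over honest-party outputs and halting indicators as $\Adv'$. The claim then follows: any violation of $(t,\alpha)$-agreement, $(t,\beta)$-validity, or $(t,q,\gamma)$-halting by $\Adv'$ on $\Pi'$ translates into a violation of the same quantity by $\Adv$ on $\Pi_1$, against which $\Pi_1$ is assumed to be secure.

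The first step is to verify that the joint distribution of (i) the setup strings delivered to corrupted parties, (ii) the incoming honest-party messages relayed to $\Adv'$, and (iii) the messages placed by $\Adv$ on the wires of $\Pi_1$, is identical in the simulation and in a real execution of $\Pi'$. The setup generated by $\Adv$ is distributed correctly: $(\setup_i,\vrfsk_i,\vrfvk_1,\ldots,\vrfvk_n)$ comes from the real setup of $\Pi_1$, while $\DSGen$ and $\NIZKGen$ are invoked with fresh randomness. Because we are assuming a \emph{perfect} NIZK, the simulated \crs from $\Simnizk^1$ and the simulated proofs from $\Simnizk^2$ are distributed exactly like their real counterparts produced by $\NIZKGen$ and $\NIZKProve$. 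Honest-party messages of $\Pi'$ are then obtained from the underlying $\Pi_1$ messages by appending a correct signature (produced with the honest party's secret signing key, which $\Adv$ holds) and a simulated NIZK proof; these are distributed exactly as in a genuine execution of $\Pi'$.

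The second step is to certify that $\Adv$ is locally consistent and \VRF-compliant with respect to $\Pi_1$. For a message $(m^r_{i\to j},\sigma^r_{i\to j},\pi_i^r,\nizkproof^r_{i\to j})$ that $\Adv'$ sends on behalf of a corrupted $\Party'_i$, $\Adv$ forwards $(m^r_{i\to j},\pi_i^r)$ only if $\NIZKVerify$ accepts; otherwise, $\Adv$ treats the message as an abort, which is an allowed locally consistent action. Whenever the proof verifies, (statistical) NIZK soundness guarantees the existence of a witness $(x_i,\setup_i,\rho_i^1,\ldots,\rho_i^r,\{\vm^{r'}_{k\to i}\})$ satisfying $\Rel^r_{i\to j}$. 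Unpacking the relation: $\pi_i^r$ certifies via $\VRFVerify$ that $\rho_i^r=\VRFEval(\vrfsk_i,(i,r))$ is the honest \VRF evaluation (hence $\Adv$ is \VRF-compliant, and $\rho_i^r$ is exactly the coin string $\Party_i$ would use in $\Pi_1$); the constraint $m^r_{i\to j}=\nextmsg_{i\to j}(\view_i^r)$ exhibits an input bit $x_i$ and a sequence of incoming messages consistent with $m^r_{i\to j}$ according to the next-message function of $\Pi_1$. This is precisely the ``input and message selection'' form required by \cref{def:Semi-ConssitentParties}: the incoming messages can be realized as prior messages that $\Adv$ itself delivered to $\Party_i$ in $\Pi_1$ (since signatures on honest-party messages are unforgeable, the ``incoming messages from honest parties'' component of the witness must match what honest parties actually sent, modulo a negligible signature-forgery event that vanishes under perfect NIZK/perfect assumptions or is absorbed into the $\negl(\secParam)$ slack).

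Combining the two steps, the view of $\Adv'$ in the head of $\Adv$ is identical to its view in a real execution of $\Pi'$, and the honest-party outputs in the simulated $\Pi'$ are, by construction, the honest-party outputs in the underlying execution of $\Pi_1$. The anticipated main obstacle is the local-consistency certification of the relayed messages: one must carefully observe that the witness extracted from a sound proof not only explains the round-$r$ message but also refers back to incoming messages that $\Adv$ already scheduled on the $\Pi_1$ wires in prior rounds, so that the aggregate behavior of $\Adv$ fits a single legitimate locally consistent schedule across all rounds rather than a round-by-round patchwork. This is handled by the recursive structure of $\Rel^r_{i\to j}$, whose nested NIZK-verification clauses force the witness at round $r$ to be consistent with sound witnesses at rounds $r'<r$.
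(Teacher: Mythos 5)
Your proof overshoots this particular claim. Its hypothesis is that $\Adv'$ is a \emph{locally consistent} \VRF-compliant adversary against $\Pi'$, and the paper's proof is essentially a two-line observation: a locally consistent adversary, by \cref{def:Semi-ConssitentParties}, generates each of its messages by running the next-message function of $\Pi'$ on an honestly chosen input bit, honestly tossed coins, and a selection of its genuine incoming messages. Since the next-message function of $\Pi'$ \emph{itself} computes the NIZK proof (using exactly that input, coins, and incoming messages as the witness), the adversary $\Adv'$ already ``knows'' a witness for every proof it sends, and thus every message from a corrupted $\Party_i'$ to an honest $\Party_j'$ either aborts or carries a correctly generated NIZK proof. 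The translation to a locally consistent \VRF-compliant attack against $\Pi_1$ is then syntactic, and the parameters are preserved exactly.

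You instead reach for statistical NIZK \emph{soundness} to extract a witness from the verified proof and for signature \emph{unforgeability} to pin down the honest-message components of that witness. Both tools belong to the later claims of this appendix, which handle adversaries that deviate further and really do need extraction. Here they are both unnecessary and mildly problematic: unforgeability is a $\negl(\secParam)$ guarantee, so an argument that routes through it cannot yield the exact conclusion $\big(t,\alpha,\beta,q,\gamma\big)$-\BA\ that the claim asserts -- there is no $\negl(\secParam)$ slack in this claim to absorb anything into, and ``perfect NIZK'' does not make the signature scheme perfect. The fact that the honest-message components of the witness match what honest parties actually sent follows directly from the locally consistent hypothesis (such an adversary does not lie about honest messages by definition), not from unforgeability. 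Your step~1 (perfect simulation of the $\Pi'$ view via the zero-knowledge simulator and honest signatures) is correct and is the reason ``perfect NIZK'' is stipulated, but step~2 should be replaced by the direct observation above, dropping soundness and unforgeability entirely.
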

\begin{proof}
If $\Adv'$ is a locally consistent \VRF-compliant adversary, then in particular whenever $\Adv'$ sends a message on behalf of a corrupted $\Party_i'$, he knows a witness for the \NIZK proof. Therefore, \wlg we can assume that either a corrupted $\Party_i'$ does not send a message (\ie aborts) to an honest $\Party_j'$ or that $\Party_i'$ correctly generates the \NIZK proof. In that case every locally consistent \VRF-compliant attack by $\Adv'$ translates to a locally consistent \VRF-compliant attack by $\Adv$.
\end{proof}

The next claim considers stronger adversaries that are allowed to use arbitrary random coins for computing the next-message function.
We will use the following notations: A message sent in $\Pi'$ is of the form $(m,\sigma,\pi,\nizkproof)$; we call $m$ the \emph{content of the message}. For a party $\Party_i'$, let $\Min^{r',k\to i}$ denote the set of incoming messages' contents received from party $\Party_k'$ in round $r'$ (as this is a locally consistent attack, there could be multiple incoming messages from each corrupted party, but at most one message from each honest party). Let $\Mout^{r,i\to j}$ be the set of possible messages' contents that $\Party_i'$ can send to $\Party_j'$ at round $r$ under a \VRF-compliant locally consistent attack when using a subset of the incoming messages' contents $\sset{\Min^{r',k\to i}}_{r'<r, k\in[n]}$ and randomness $\sset{\rho_i^{r'}}_{r'\in[r]}$ computed as $(\rho_i^{r'},\pi_i^{r'})\gets\VRFEval(\vrfsk_i,(i,r'))$.

\begin{claim}\label{claim:LC:rand}
Consider a perfect NIZK scheme.
If $\Pi_1$ is a $\left(t,\alpha,\beta,q,\gamma\right)$-\BA against locally consistent \VRF-compliant adversaries, then $\Pi'$ is a $\left(t,\alpha-\negl(\secParam),\beta-\negl(\secParam),q,\gamma-\negl(\secParam)\right)$-\BA against locally consistent adversaries.
\end{claim}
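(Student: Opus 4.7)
The plan is to show that the adversary $\Adv$ defined just before the claim acts, up to a $\negl(\secParam)$ loss, as a locally consistent \VRF-compliant adversary against $\Pi_1$; the claim then follows immediately by invoking the assumed $(t,\alpha,\beta,q,\gamma)$-\BA of $\Pi_1$. The first step is a standard hybrid that replaces the honestly generated $\crs$ and the NIZK proofs attached to honest-party messages by the $(\Simnizk^1,\Simnizk^2)$-generated objects that $\Adv$ already uses. By adaptive, multi-theorem computational zero-knowledge this changes $\Adv'$'s view by $\negl(\secParam)$, so from that point on I can reason in the simulated world.

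The core of the argument is to show that every outgoing message $(m_{i\to j}^r,\sigma_{i\to j}^r,\pi_i^r,\nizkproof_{i\to j}^r)$ that $\Adv'$ delivers from a corrupt $\Party'_i$ to an honest $\Party'_j$ with accepting $\nizkproof_{i\to j}^r$ corresponds to a legal locally consistent \VRF-compliant move in $\Pi_1$. By statistical soundness of the NIZK, the accepted proof implies the existence of a witness satisfying $\Rel^r_{i\to j}$. Unpacking this relation yields: (i) for every $r'\le r$, coins $\rho_i^{r'}$ certified by $\pi_i^{r'}$, which by unique provability of \VRF must equal $\VRFEval(\vrfsk_i,(i,r'))$, forcing the randomness to be exactly what an honest $\Party_i$ would use in $\Pi_1$; (ii) for each prior round $r'<r$ and each sender $\Party_k$, an incoming message signed under $\dsvk_k$ and accompanied by a recursively verifying NIZK proof; (iii) $m_{i\to j}^r=\nextmsg_{i\to j}(\view_i^r)$ evaluated on these values. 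By existential unforgeability of digital signatures, except with probability $\negl(\secParam)$ summed over the polynomially many signing queries, every message that the witness attributes to an honest $\Party_k$ was indeed sent by $\Party_k$ in the protocol; messages attributed to corrupt senders are precisely the flexibility that \cref{def:Semi-ConssitentParties} permits. Consequently, the pair $(m_{i\to j}^r,\rho_i^r)$ that $\Adv$ forwards into $\Pi_1$ is a legal action of a locally consistent \VRF-compliant adversary with input bit $x_i$ read from the witness, and the joint distribution of honest-party views in the simulated $\Pi'$ execution matches the one in the induced $\Pi_1$ execution, up to the cumulative $\negl(\secParam)$ terms.

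The main obstacle I anticipate is the use of NIZK soundness on the simulated $\crs$: \cref{def:nizk} guarantees soundness only for $\crs\gets\NIZKGen(1^\secParam)$. The cleanest fix is to strengthen the assumption to a simulation-sound NIZK, which is standard for this kind of reduction and incurs no asymptotic cost; alternatively, the hybrid can be arranged so that soundness is invoked only on a real $\crs$, at the price of another negligible zero-knowledge term. Beyond that, the bookkeeping---bounding the precise polynomial in $\secParam$ and verifying that $\Adv$ is \ppt whenever $\Adv'$ is---is routine.
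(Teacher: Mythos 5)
Your plan is to show directly that the reduction $\Adv$ translates any locally consistent $\Adv'$ against $\Pi'$ into a locally consistent \VRF-compliant attack on $\Pi_1$ by running the entire chain of arguments --- ZK hybrid, NIZK soundness extraction, VRF unique provability, signature unforgeability --- at once. That argument is in the right spirit but it is really a proof of the paper's \emph{final} claim (malicious security with a non-perfect NIZK), not of this particular step. The paper's strategy is to peel off one adversarial capability per claim, and \cref{claim:LC:rand} is narrowly scoped: the NIZK is assumed perfect, and the adversary is still locally consistent (so it flips all remaining coins honestly and never lies about incoming honest-party messages); the only new freedom relative to the preceding claim is the ability to feed the next-message function arbitrary randomness in place of the $\VRF$ output. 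Consequently, several ingredients you invoke are unnecessary here. The ZK hybrid is vacuous because perfect ZK means the simulated $\crs$ is distributed identically to the real one, so your ``simulation soundness'' worry simply does not arise in the perfect-NIZK regime; it only becomes relevant when the perfect-NIZK assumption is dropped in the last unlabeled claim, and the paper disposes of it by a straightforward indistinguishability hybrid rather than by strengthening to simulation-soundness. Likewise, signature unforgeability plays no role in this claim: a locally consistent adversary by \cref{def:Semi-ConssitentParties} only selects among messages it actually received, so it never needs to attribute a forged message to an honest sender --- that freedom is what \cref{claim:LC:messages} and the subsequent claim address.

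The paper's proof instead does a targeted case split on the offending message $\tilde m_{i\to j}^r$: if $\tilde m_{i\to j}^r\in\Mout^{r,i\to j}$, then the same message is producible with the honest $\VRF$-coins and some admissible choice of corrupted incoming messages, so the move can be re-explained as a \VRF-compliant one with no loss; if $\tilde m_{i\to j}^r\notin\Mout^{r,i\to j}$, then the witness needed to make the (perfectly sound) NIZK proof accept must contain a $\VRF$ proof $\tilde\pi_i^{r'}$ for coins $\tilde\rho_i^{r'}\neq\rho_i^{r'}$, which unique provability rules out. This is a cleaner and more local argument than the monolithic reduction you sketch; your route is not wrong for the overall theorem, but as a proof of this claim it does not match the paper's decomposition, it imports assumptions (simulation soundness) the paper deliberately avoids, and it misses the $\Mout^{r,i\to j}$ case analysis that is the actual engine of the paper's argument here.
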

\begin{proof}
We prove the claim by showing that the additional power of the adversary only allows for a negligible cheating advantage.
Consider a locally consistent adversary $\Adv'$ and assume that a corrupted party $\Party_i'$ used arbitrary random coins to generate the message content for party $\Party_j'$ in round $r$, denoted $\tilde{m}_{i\to j}^r$. There are two possible cases:
\begin{itemize}
    \item[\textbf{Case 1:}]
    If $\tilde{m}_{i\to j}^r\in\Mout^{r,i\to j}$, then the adversary can compute a witness for the relation $\Rel^r_{i\to j}$. That is, even if the actual coins used to generate $\tilde{m}_{i\to j}^r$ are different than $\sset{\rho_i^{r'}}_{r'\in[r]}$, the message $\tilde{m}_{i\to j}^r$ can be explained as if generated using $\sset{\rho_i^{r'}}_{r'\in[r]}$ consistently with a subset of the incoming messages in $\sset{\Min^{r',k\to i}}_{r'<r, k\in[n]}$. Therefore, \wlg this can be cast as a locally consistent \VRF-compliant attack.
    \item[\textbf{Case 2:}]
    If $\tilde{m}_{i\to j}^r\notin\Mout^{r,i\to j}$, let $\sset{\tilde{\rho}_i^{r'}}_{r'\in[r]}$ be the coins used by $\Adv'$ to generate $\tilde{m}_{i\to j}^r$. Then, $\tilde{\rho}_i^{r'}\neq \rho_i^{r'}$ for at least one $r'$. To provide a witness for the relation $\Rel^r_{i\to j}$, $\Adv'$ must generate $\tilde{\pi}_i^{r'}$ such that $\VRFVerify(\vrfvk_i,(i,r'),\tilde{\rho}_i^{r'},\tilde{\pi}_i^{r'})=1$. By unique provability property of the VRF, such an attack can only succeed with negligible probability.
\qedhere
\end{itemize}
\end{proof}

The next claim considers stronger adversaries that are allowed to use arbitrary incoming messages for their next-message function.
\begin{claim}\label{claim:LC:messages}
Consider a perfect NIZK scheme.
If $\Pi_1$ is a $\left(t,\alpha,\beta,q,\gamma\right)$-\BA against locally consistent \VRF-compliant adversaries, then $\Pi'$ is a $\left(t,\alpha-\negl(\secParam),\beta-\negl(\secParam),q,\gamma-\negl(\secParam)\right)$-\BA against locally consistent adversaries that are allowed to use arbitrary messages' contents when computing the next-message function.
\end{claim}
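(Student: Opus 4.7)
The plan is to reduce to \cref{claim:LC:rand} by showing that permitting a locally consistent adversary to use arbitrary messages' contents when computing next-message functions confers only a negligible advantage, as otherwise one could break the unforgeability of the digital signature scheme. Concretely, I would extend the reduction $\Adv$ from the proof of \cref{claim:LC:rand} (which translates attacks on $\Pi'$ into attacks on $\Pi_1$) so that, whenever $\Adv'$ produces a NIZK proof $\nizkproof_{i\to j}^r$ under a witness that names some ``arbitrary'' incoming message $\tilde{m}_{k\to i}^{r'}$, the reduction inspects the identity of the alleged sender $\Party_k'$ and acts accordingly.

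The analysis splits into two cases. By condition (3) of $\Rel^r_{i\to j}$, the witness must contain a valid NIZK proof for $\Rel^{r'}_{k\to i}$ whose statement is of the form $(\tilde{m}_{k\to i}^{r'}, \tilde{\sigma}_{k\to i}^{r'}, \pi_k^{r'})$; by perfect NIZK soundness, $\tilde{\sigma}_{k\to i}^{r'}$ is a $\DSig$-valid signature under $\dsvk_k$. If $\Party_k'$ is corrupted, then $\Adv'$ already knows $\dssk_k$ and may legitimately produce such a signature; the ``arbitrary'' message is then modelled by $\Adv$ as an additional content sent by the locally consistent $\Party_k$ to the locally consistent $\Party_i$ in $\Pi_1$, which is exactly the flexibility granted by the ``input and message selection'' clause of \cref{def:Semi-ConssitentParties}. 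If $\Party_k'$ is honest, then because channels in $\Pi_1$ are point-to-point and authenticated, $\Party_k$ sent a single, well-defined message to $\Party_i$ in round $r'$; any $\tilde{m}_{k\to i}^{r'}$ that differs from this message constitutes a signature by an honest party on a message that was never queried to the signing oracle, \ie a successful existential forgery.

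A standard reduction then converts any non-negligible success of $\Adv'$ into a forger against $\DSig$: the forger plays the role of the reduction $\Adv$ (and thereby of all honest parties in $\Pi_1$), guesses an index $k^\ast \in [n]$, embeds its challenge verification key into $\dsvk_{k^\ast}$, answers honest signature requests using its signing oracle, generates all other keys honestly, and, upon observing a NIZK-verifying message from $\Adv'$ that names a fake incoming message ostensibly from $\Party_{k^\ast}'$, extracts (by perfect NIZK soundness) the embedded forged signature. A union bound over the at most $q\cdot n$ pairs $(r',k)$ at which such an event may occur, combined with the $1/n$ success of the guess, shows that this happens only with negligible probability. The main obstacle is handling the recursive structure of $\Rel^r_{i\to j}$: the witness at round $r$ references witnesses at all earlier rounds, and I must verify that the reduction remains polynomial-time and that ``faked'' messages deeper in the recursion are still caught (either by an inductive argument over rounds or by walking the witness tree top-down and applying the signature-forgery argument at the first honest-party mismatch encountered); because $q = O(\log n)$ and witnesses have polynomial size, this is routine but must be spelled out carefully to keep the $\negl(\secParam)$ bound on the loss in $\alpha$, $\beta$, and $\gamma$.
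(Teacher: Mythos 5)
Your plan is correct and follows essentially the same route as the paper's proof: reduce to \cref{claim:LC:rand} by first handling \VRF-compliant adversaries, and split into the case where the fabricated incoming message names a corrupted sender (absorbable into the ``input and message selection'' clause of local consistency) versus an honest sender (yields a signature forgery by soundness of the NIZK and the structure of condition~(3) of $\Rel^r_{i\to j}$). The paper structures the recursion by taking $r$ to be the \emph{first} round at which some corrupted party uses an incoming message outside the legitimate set and invokes minimality of $r$ for the corrupted-sender branch, which is the same induction you describe as ``walking the witness tree top-down to the first honest-party mismatch''; the two phrasings are interchangeable. Your explicit forger (guess $k^\ast$, embed the challenge key in $\dsvk_{k^\ast}$, answer honest signing requests with the oracle, union-bound over rounds and indices) spells out a step the paper leaves implicit, and you correctly isolate the same implicit wrinkle the paper glosses over, namely that turning the \emph{existence} of a forged signature inside an unseen NIZK witness into an \emph{output} forgery requires extraction (i.e., proof-of-knowledge rather than mere soundness); this is a shared assumption, not a gap relative to the paper.
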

\begin{proof}
Consider an adversary $\Adv'$ that behaves locally consistent but can use arbitrary values as incoming messages. Assume that $\Adv'$ is \VRF-compliant and let $r$ be the first round in which $\Adv'$ deviates from the protocol \wrt incoming messages. Let $\Party_i'$ be a corrupted party that uses $\sset{\TMin^{r',k\to i}}_{r'<r, k\in[n]}$ as its set of incoming messages to generate the message content for party $\Party_j'$ in round $r$, denoted $\tilde{m}_{i\to j}^r$, and assume that $\bigcup \TMin^{r',k\to i} \nsubseteq \bigcup \Min^{r',k\to i}$.
There are two possible cases:

\begin{itemize}
    \item[\textbf{Case 1:}]
    If $\tilde{m}_{i\to j}^r\in\Mout^{r,i\to j}$, then the adversary can compute a witness for the relation $\Rel^r_{i\to j}$. That is, even if $\bigcup \TMin^{r',k\to i} \nsubseteq \bigcup \Min^{r',k\to i}$, the message $\tilde{m}_{i\to j}^r$ can be explained as if generated using a subset of $\bigcup \Min^{r',k\to i}$. Therefore, \wlg this can be cast as a locally consistent attack.
    \item[\textbf{Case 2:}]
    If $\tilde{m}_{i\to j}^r\notin\Mout^{r,i\to j}$, then to find a witness for the relation $\Rel^r_{i\to j}$, $\Adv'$ must produce for every message $\tilde{m}_{k\to i}^{r'}\in\bigcup \TMin^{r',k\to i} \setminus\bigcup \Min^{r',k\to i}$ a signature $\tilde{\sigma}_{k\to i}^{r'}$, a VRF proof $\pi_{k\to i}^{r'}$ and a NIZK proof $\tilde{\nizkproof}_{k\to i}^{r'}$ .
    \begin{itemize}
        \item
        If $\Party_k$ is honest, $\Adv'$ can find an accepting signature $\tilde{\sigma}_{k\to i}^{r'}$ for $\tilde{m}_{k\to i}^{r'}$ under $\dsvk_k$ only with negligible probability (recall that every message $\tilde{m}_{k\to i}^{r'}$ encodes the values $k,i,r'$; hence, $\Adv'$ cannot reuse messages that were signed by $\Party_k$ in other rounds).
        \item
        If $\Party_k$ is corrupted, then in turn it must have provided a valid witness for the relation $\Rel_{k\to i}^{r'}$. By the minimality of $r$, it is guaranteed that $\tilde{m}_{k\to i}^{r'-1}$ was honestly generated \wrt the incoming messages of $\Party_k'$ until round $r'-1$, $\sset{\bigcup \Min^{r'',k'\to k}}_{r''\in[r'-1], k'\in[n]}$. In this case, \wlg, the message $\tilde{m}_{k\to i}^{r'}$ could have been sent by the corrupted $\Party_k'$ to the corrupted $\Party_i'$, \ie be included in the set $\Min^{r',k\to i}$.
    \end{itemize}
\end{itemize}
The proof of the claim now reduces considering non-\VRF-compliant adversaries, which follows from \cref{claim:LC:rand}.
\end{proof}

The next claim considers stronger adversaries that are not required to compute their outgoing messages by the next-message function, but can send arbitrary messages instead.
\begin{claim}
Consider a perfect NIZK scheme.
If $\Pi_1$ is a $\left(t,\alpha,\beta,q,\gamma\right)$-\BA against locally consistent \VRF-compliant adversaries, then $\Pi'$ is a $\left(t,\alpha-\negl(\secParam),\beta-\negl(\secParam),q,\gamma-\negl(\secParam)\right)$-\BA against malicious adversaries.
\end{claim}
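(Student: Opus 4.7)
The plan is to reduce a malicious attack on $\Pi'$ to a locally-consistent-with-arbitrary-messages attack on $\Pi_1$ and then invoke \cref{claim:LC:messages}. Given a malicious $\Adv'$ attacking $\Pi'$, I construct an adversary $\Adv$ attacking $\Pi_1$ by running $\Adv'$ internally, exactly as in the reduction preceding \cref{claim:LC:rand}: $\Adv$ obtains the \VRF keys from the $\Pi_1$ setup, samples the signature keys and an honest \NIZK common reference string itself, hands each corrupted party its $\Pi'$-setup, and simulates the honest parties of $\Pi'$ toward $\Adv'$ by signing the real messages coming from $\Pi_1$ and producing honest \NIZK proofs using the witness it possesses. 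Whenever $\Adv'$ submits a message $(m^r_{i\to j},\sigma^r_{i\to j},\pi^r_i,\nizkproof^r_{i\to j})$ from a corrupted $\Party'_i$ to an honest $\Party'_j$, $\Adv$ verifies all three proofs; if they verify, it delivers the content $(m^r_{i\to j},\pi^r_i)$ to $\Party_j$ in $\Pi_1$, and otherwise it treats $\Party_i$ as aborting toward $\Party_j$ that round.

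The heart of the argument is showing that, except with negligible probability, $\Adv$ induces a valid locally-consistent attack on $\Pi_1$ that is allowed to use arbitrary incoming-message contents, in the sense of \cref{claim:LC:messages}. I would proceed by induction on the round $r$, proving that every accepting message $m^r_{i\to j}$ sent on behalf of a corrupted party lies in $\Mout^{r,i\to j}$ for some choice of an input bit, of prior \VRF-computed coins, and of prior incoming-message contents whose honest-party components are genuine. Perfect \NIZK soundness guarantees the existence of such a witness whenever the \NIZK accepts (even though $\Adv$ need not extract it); \VRF unique provability forces the coins in any such witness to equal the unique values $\VRFEval(\vrfsk_i,(i,r'))$; and existential unforgeability of the signature scheme implies that, except with negligible probability, the witness cannot attribute to an honest $\Party_k$ a signed message that $\Party_k$ never actually produced. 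The recursive structure of $\Rel^r_{i\to j}$ is absorbed into the same induction: any honest-to-corrupted message appearing inside a witness has been sent verbatim by $\Adv$, and any corrupted-to-corrupted message comes with its own accepting proof to which the induction applies.

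Aggregating the negligible signature-forgery probabilities over the $O(\log n)$ rounds and $\poly(n)$ message checks keeps the total loss negligible. Consequently, the view of the honest parties of $\Pi_1$ under $\Adv$ is statistically close to their view under some genuine locally-consistent-with-arbitrary-messages execution, so the $(t,\alpha,\beta,q,\gamma)$-security of $\Pi_1$ provided by the hypothesis translates, via \cref{claim:LC:messages}, into $(t,\alpha-\negl(\secParam),\beta-\negl(\secParam),q,\gamma-\negl(\secParam))$-security of $\Pi'$ against malicious adversaries.

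The main obstacle I anticipate is setting up the induction cleanly in the presence of the recursive definition of $\Rel^r_{i\to j}$: one must handle simultaneously all honest-to-corrupted and corrupted-to-corrupted signature/\NIZK bindings across rounds, so that a first violation of the inductive invariant can be pinned either on a signature forgery against an honest party or on a \NIZK soundness failure---both of which are negligible---without the negligible loss accumulating super-polynomially with the depth of the recursion or with the number of parties.
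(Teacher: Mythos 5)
Your high-level strategy matches the paper exactly: reduce a malicious attack on $\Pi'$ to a locally-consistent-with-arbitrary-messages attack on $\Pi_1$ by inducting on the first round at which the adversary deviates, use perfect NIZK soundness to extract (non-constructively) a valid witness, invoke VRF unique provability to pin the coins and signature unforgeability to pin honest parties' incoming-message contents, and then hand the residual adversary to \cref{claim:LC:messages}. The cryptographic ingredients and the structure of the inductive invariant are the ones the paper uses.

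However, there is one genuine error in your description of the reduction adversary $\Adv$. You write that $\Adv$ samples an \emph{honest} NIZK common reference string and simulates honest parties of $\Pi'$ ``by $\ldots$ producing honest \NIZK proofs using the witness it possesses.'' This cannot work: the witness for $\Rel^r_{i\to j}$ on behalf of an honest $\Party_i$ consists of $\Party_i$'s input bit $x_i$, its setup string $\setup_i$, its VRF-derived coins $\rho_i^{1},\ldots,\rho_i^r$, and its incoming-message history --- none of which $\Adv$ (an external attacker on $\Pi_1$) can possess. The whole point of the security game is that these are private to the honest party. The paper's reduction instead has $\Adv$ sample a \emph{simulated} CRS together with a trapdoor via $\Simnizk^1$, and generate the NIZK proofs accompanying honest parties' messages by running $\Simnizk^2$ on the trapdoor, precisely because the real witnesses are unavailable. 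This is not cosmetic: without the simulator the reduction cannot faithfully present $\Adv'$ with honest parties' $\Pi'$-messages at all, so the remainder of your argument has nothing to run on. Since your opening ``exactly as in the reduction preceding \cref{claim:LC:ran\-d}'' suggests you intended to follow the paper's construction, I'd treat this as a slip to be corrected rather than a different approach --- but as written the reduction is broken. With that fix, the rest of your argument (the round-by-round induction using soundness, unique provability, and unforgeability, the observation that corrupted-to-corrupted messages can \wlg be folded into the arbitrary-incoming-messages freedom already granted by \cref{claim:LC:messages}, and the polynomial accumulation of negligible loss) is sound and coincides with the paper's.
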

\begin{proof}
Consider a malicious adversary $\Adv'$ and assume that $\Adv'$ behaves locally consistent and \VRF-compliant until round $r$, \ie round $r$ is the first round in which $\Adv'$ does not compute a message according to the next-message function. Let $\Party_i'$ be a corrupted party that generates the message content for party $\Party_j'$ in round $r$, denoted $\tilde{m}_{i\to j}^r$, arbitrarily.
There are two possible cases:
\begin{itemize}
    \item[\textbf{Case 1:}]
    If $\tilde{m}_{i\to j}^r\in\Mout^{r,i\to j}$, then the adversary can compute a witness for the relation $\Rel^r_{i\to j}$. That is, the message $\tilde{m}_{i\to j}^r$ can be explained as if generated using $\sset{\rho_i^{r'}}_{r'\in[r]}$ consistently with a subset of the incoming messages in $\sset{\Min^{r',k\to i}}_{r'<r, k\in[n]}$ according to the next-message function. Therefore, \wlg this can be cast as a locally consistent \VRF-compliant attack.
    \item[\textbf{Case 2:}]
    If $\tilde{m}_{i\to j}^r\notin\Mout^{r,i\to j}$, then $\Adv'$ must provide $\tilde{\sigma}_{i\to j}^r$ and $\pi_i^r$ along with a witness $\wit_{i\to j}^r$ consisting of:
    \begin{itemize}
        \item
        An input bit $x_i$ an $\setup_i$.
        \item
        For every $r'\in[r]$ random coins $\rho_i^{r'}$.
        \item
        For every $r'\in[r-1]$ and $k\in[n]$ a message $\tilde{\vm}_{k\to i}^{r'}=(\tilde{m}_{k\to i}^{r'},\tilde{\sigma}_{k\to i}^{r'},\pi_k^{r'},\tilde{\nizkproof}_{k\to i}^{r'})$.
    \end{itemize}
    In addition it holds that $((\tilde{m}_{i\to j}^r,\tilde{\sigma}_{i\to j}^r,\pi_i^r),\wit_{i\to j}^r)\in\Rel^r_{i\to j}$.
    As before, with all but negligible probability it is guaranteed that $\VRFVerify(\vrfvk_i,(i,r),\rho_i^r,\pi_i^r)=1$ and for every honest party $\Party_k'$, $((\tilde{m}_{k\to i}^{r'},\tilde{\sigma}_{k\to i}^{r'},\pi_k^{r'}),\tilde{\nizkproof}_{k\to i}^{r'})\in\Rel_{k\to i}^{r'}$. For a corrupted $\Party_k'$, if $((\tilde{m}_{k\to i}^{r'},\tilde{\sigma}_{k\to i}^{r'},\pi_k^{r'}),\tilde{\nizkproof}_{k\to i}^{r'})\in\Rel_{k\to i}^{r'}$ then \wlg the message could have been sent by $\Party_k'$ to $\Party_i'$. We conclude that with all but negligible probability, the $\tilde{m}_{i\to j}^r$ can be explained by a locally consistent \VRF-compliant attack.
\end{itemize}
The proof of the claim now follows from \cref{claim:LC:messages}.
\end{proof}

Finally, we remove the assumption of a perfect NIZK scheme and consider a NIZK scheme that allows for negligible adversarial advantage, and obtain the following claim.
\begin{claim}
If $\Pi_1$ is a $\left(t,\alpha,\beta,q,\gamma\right)$-\BA against locally consistent \VRF-compliant adversaries, then $\Pi'$ is a $\left(t,\alpha-\negl(\secParam),\beta-\negl(\secParam),q,\gamma-\negl(\secParam)\right)$-\BA against malicious adversaries.
\end{claim}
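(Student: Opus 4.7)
The plan is to reduce this claim to the preceding perfect-NIZK variant via a standard hybrid argument, losing only a negligible factor in each of the parameters $\alpha,\beta,\gamma$. The perfect-NIZK assumption was used in two essentially independent places: (i) to argue that a corrupted party cannot produce an accepting proof $\nizkproof_{i\to j}^r$ for a false statement (the case $\tilde m^r_{i\to j}\notin \Mout^{r,i\to j}$ was ruled out deterministically), and (ii) to justify, in the reduction, that the adversary $\Adv$ can hand $\Adv'$ simulated proofs in place of the honest parties' real proofs without changing the view. I would handle (i) and (ii) separately.

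For (i), the statistical soundness of $\NIZK$ (\cref{def:nizk}) immediately gives a $\negl(\secParam)$-loss substitute: for each of the at most $q\cdot n^2 = \poly(\secParam)$ potential cheating opportunities over the execution, the probability that an accepting proof exists for a false instance is negligible, and a union bound absorbs this into $\alpha,\beta,\gamma$. The rest of the case analysis from the proofs of \cref{claim:LC:rand,claim:LC:messages} is unchanged, since those steps relied on unique provability of \VRF and unforgeability of \DSig, not on perfect NIZK.

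For (ii), I would perform a single hybrid step using the (adaptive, multi-theorem) computational zero-knowledge property. Consider the simulator-adversary $\Adv$ from the previous proof, which samples $(\crs,\tau)\gets \Simnizk^1(1^\secParam)$ and answers each honest-party proof query for $\Adv'$ with $\Simnizk^2(\crs,\tau,\cdot)$. Let $\Expt_{\mathsf{sim}}$ be the induced experiment (in which the reduction to $\Pi_1$ works by the perfect-NIZK claim) and $\Expt_{\mathsf{real}}$ the actual execution of $\Pi'$ against $\Adv'$ (which uses a genuine $\crs$ and real proofs generated via $\NIZKProve$). Any of the three bad events — disagreement, validity breach on a fixed input, or non-halting within $q$ rounds — is efficiently decidable from the transcript, so if its probability differed by more than $\negl(\secParam)$ between $\Expt_{\mathsf{sim}}$ and $\Expt_{\mathsf{real}}$, it would yield a \ppt distinguisher against the ZK property, by the standard reduction in which the distinguisher emulates $\Adv'$ and the honest parties of $\Pi'$ internally and queries its oracle on the statement/witness pairs that arise for honest-party proofs.

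The main obstacle I expect is bookkeeping rather than any novel idea: one has to make sure the emulation in the ZK reduction is efficient (this is immediate because all messages, signatures, and witnesses of $\Pi'$ are polynomially bounded and produced by \ppt algorithms), and one has to verify that the multi-theorem variant of ZK is invoked correctly since honest parties issue polynomially many proofs over the course of the protocol; this is precisely what \cref{def:nizk} provides. Combining the two $\negl(\secParam)$ losses (from statistical soundness and from computational ZK) with the guarantees of the previous claim yields the claimed $(t,\alpha-\negl(\secParam),\beta-\negl(\secParam),q,\gamma-\negl(\secParam))$-\BA security of $\Pi'$ against arbitrary malicious adversaries, completing Item~\ref{thm:local_to_mal_generic} of the theorem.
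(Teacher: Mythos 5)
Your proposal is correct and takes essentially the same approach as the paper, which for this final claim offers no explicit argument at all beyond the remark that one ``removes the assumption of a perfect NIZK scheme.'' You have filled in exactly the standard hybrid the paper leaves implicit: statistical soundness absorbs the false-proof events (polynomially many across $q\cdot n^2$ proof instances) via a union bound into the $\negl(\secParam)$ loss, and the adaptive multi-theorem computational zero-knowledge property makes the simulator-adversary's view indistinguishable from the real execution of $\Pi'$, with the efficient-distinguishability of the three bad events (agreement, validity, halting failures) from the transcript being precisely what lets one cast a gap as a ZK distinguisher.
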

This concludes the proof of the first part of the theorem.

\paragraph{Proof of Item~\ref{thm:local_to_mal_PC} (public-randomness protocols).}
We prove Item~\ref{thm:local_to_mal_PC} of \cref{thm:local_to_malicious:Res} by adjusting the compiler $\comp$ and removing the use of NIZK proofs. The new compiler $\comppc$ is defined like $\comp$ except that instead of computing a NIZK proof $\nizkproof_{i\to j}^r\gets\NIZKProve(\crs, \stat_{i\to j}^r, \wit_{i\to j}^r)$ for the relation $\Rel_{i\to j}^r$ and sending $\nizkproof_{i\to j}^r$, the sender $\Party_i'$ simply sends the witness $\wit_{i\to j}^r$. The receiver $\Party_j'$ can now directly verify that $\wit_{i\to j}^r$ is a valid witness. The proof follows immediately from Item~\ref{thm:local_to_mal_generic} of \cref{thm:local_to_malicious:Res}.
\end{proof}

\end{document}